\pgfplotsset{compat=newest,
      width=0.9\textwidth,height=0.9\textwidth/1.618,
      every tick/.append style={black,line width=1pt},
      every axis/.append style={line width=1pt},
      enlargelimits=false,
      axis lines=middle,
      every inner x axis line/.append style={->},
      every inner y axis line/.append style={->},
      every axis y label/.style={at={(0,1)},above right},
      every axis x label/.style={at={(1,0)},above right},
      every pin edge/.style={solid,black}
}
\newcommand{\referencelist}{
	\bibliographystyle{plain}
	\bibliography{references,ArtLibrary}
}
\newtheorem{theorem}{Theorem}[section]
\newcommand{\ensem}[1]{\langle #1 \rangle}
\newcommand{\ie}{\textit{i.e.}\/, }
\newcommand{\eg}{\textit{e.g.}\/, }
\providecommand*{\diff}{\operatorname{d}\!}
\providecommand*{\mrm}[1]{\mathrm{#1}}
\providecommand*{\eu}{\ensuremath{\mrm{e}}}
\providecommand*{\iu}{\ensuremath{\mrm{i}}}
\providecommand{\renewoperator}[3]{%
       \renewcommand*{#1}{\mathop{#2}#3}}
\newif\ifgreek
\def\testgreek#1{
  \ifx#1\alpha\greektrue\else
  \ifx#1\beta\greektrue\else
  \ifx#1\gamma\greektrue\else\ifx#1\Gamma\greektrue\else
  \ifx#1\delta\greektrue\else\ifx#1\Delta\greektrue\else
  \ifx#1\epsilon\greektrue\else
  \ifx#1\zeta\greektrue\else
  \ifx#1\eta\greektrue\else
  \ifx#1\theta\greektrue\else\ifx#1\Theta\greektrue\else
  \ifx#1\iota\greektrue\else
  \ifx#1\kappa\greektrue\else
  \ifx#1\lambda\greektrue\else\ifx#1\Lambda\greektrue\else
  \ifx#1\mu\greektrue\else
  \ifx#1\nu\greektrue\else
  \ifx#1\xi\greektrue\else\ifx#1\Xi\greektrue\else
  \ifx#1\pi\greektrue\else\ifx#1\Pi\greektrue\else
  \ifx#1\rho\greektrue\else
  \ifx#1\sigma\greektrue\else\ifx#1\Sigma\greektrue\else
  \ifx#1\tau\greektrue\else
  \ifx#1\upsilon\greektrue\else\ifx#1\Upsilon\greektrue\else
  \ifx#1\phi\greektrue\else\ifx#1\Phi\greektrue\else
  \ifx#1\chi\greektrue\else
  \ifx#1\psi\greektrue\else\ifx#1\Psi\greektrue\else
  \ifx#1\omega\greektrue\else\ifx#1\Omega\greektrue\else
  \ifx#1\varepsilon\greektrue\else
  \ifx#1\vartheta\greektrue\else
  \ifx#1\varrho\greektrue\else
  \ifx#1\varsigma\greektrue\else
  \ifx#1\varphi\greektrue\else
     \greekfalse
  \fi\fi\fi\fi\fi\fi\fi\fi\fi\fi
  \fi\fi\fi\fi\fi\fi\fi\fi\fi\fi
  \fi\fi\fi\fi\fi\fi\fi\fi\fi\fi
  \fi\fi\fi\fi\fi\fi\fi\fi\fi}
\renewcommand{\vec}[1]{\boldsymbol#1}
\newcommand{\unitvec}[1]{\hat{\vec{#1}}}
\newcommand{\R}{\mathbb{R}}
\newcommand{\reg}{\mathcal R} % region where the centres of particles are confined to.
\newcommand{\numdensity}{{\mathfrak n}} % number of particles per unit volume.
\newcommand{\dv}{\vec{d}}
\newcommand{\rv}{\vec{r}}
\newcommand{\kvh}{\unitvec{k}}
\newcommand{\kvhi}{\kvh_{}}
\newcommand{\rvh}{\unitvec{r}}
\newcommand{\nuvh}{\unitvec{\nu}}
\newcommand{\xvh}{\unitvec{x}}
\newcommand{\yvh}{\unitvec{y}}
\newcommand{\zvh}{\unitvec{z}}
\newcommand{\ui}{u_{\mathrm{in}}}
\newcommand{\us}{u_{\mathrm{sc}}}
\newcommand{\Vi}{V_{\mathrm{in}}}
\numberwithin{equation}{section}
\title{Effective Waves for Random Three-dimensional Particulate Materials}
\author{
  Artur L.~Gower\thanks{Webpage: \href{https://arturgower.github.io}{arturgower.github.io}} \\
  Department of Mechanical Engineering\\
  University of Sheffield\\
  Sheffield, UK \\
  \texttt{arturgower@gmail.com} \\
   \And
 Gerhard Kristensson\thanks{Webpage: \href{https://www.eit.lth.se/personal/gerhard.kristensson}{www.eit.lth.se/personal/gerhard.kristensson}}\\
  Department of Electrical and Information Technology\\
  Lund University\\
  P.O. Box 118\\
  SE-221 00 Lund, Sweden\\
  \texttt{gerhard.kristensson@eit.lth.se}\\
  }
\newtcolorbox[auto counter]{optionalnote}[2][]{
    parbox=false,
    % colbacktitle= green!20!white,
    colbacktitle= white,
    colback=green!5!white,
    colframe=white!45!black,
    coltitle=black,
    enhanced,
    attach boxed title to top left={yshift=-2mm},
    title={\thetcbcounter.~#2}
,#1}
\newtcolorbox{highlight-result}[1][]{
 parbox=false,
 boxrule=0pt,top=0pt,bottom=0pt,
% colbacktitle= green!20!white,
colback=blue!8!white,
% colframe=white!45!black,
% coltitle=black,
enhanced,#1}
\begin{document}
\maketitle

\setcounter{footnote}{0}

\begin{abstract}
How do you take a reliable measurement of a material whose microstructure is random? When using wave scattering, the answer is often to take an ensemble average (average over time or space). By ensemble averaging we can calculate the average scattered wave and the effective wavenumber. To date, the literature has focused on calculating the effective wavenumber for a plate filled with particles. One clear unanswered question was how to extend this approach to a material of any geometry and for any source. For example, does the effective wavenumber depend on only the microstructure, or also on the material geometry? In this work, we demonstrate that the effective wavenumbers depend on only microstructure and not the geometry, though beyond the long wavelength limit there are multiple effective wavenumbers. We show how to calculate the average wave scattered from a random particulate material of any shape, and for broad frequency ranges. As an example, we show how to calculate the average wave scattered from a sphere filled with particles.
\end{abstract}

\keywords{Ensemble averaging \and Multiple scattering \and Particulate materials \and Wave scattering}

\section{Introduction}
Under close inspection, many natural and synthetic materials are composed of small randomly distributed particles. This is why techniques to measure and predict these particle properties are important in many areas of science and engineering. Waves, either mechanical (like sound) or electromagnetic, are an excellent choice to probe particles because they can be non-invasive and energy efficient.
\begin{figure}[ht]
    \centering
    \includegraphics[width=0.49\linewidth]{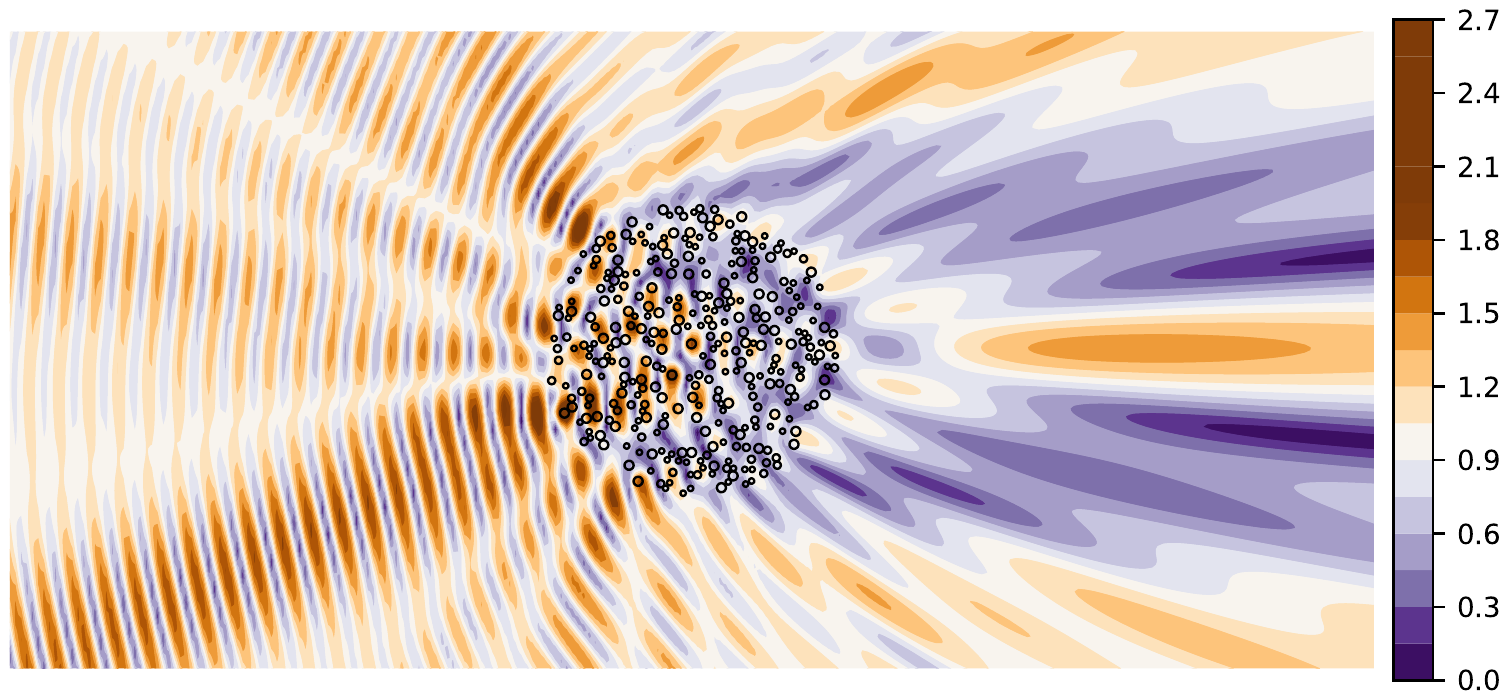}
       \includegraphics[width=0.49\linewidth]{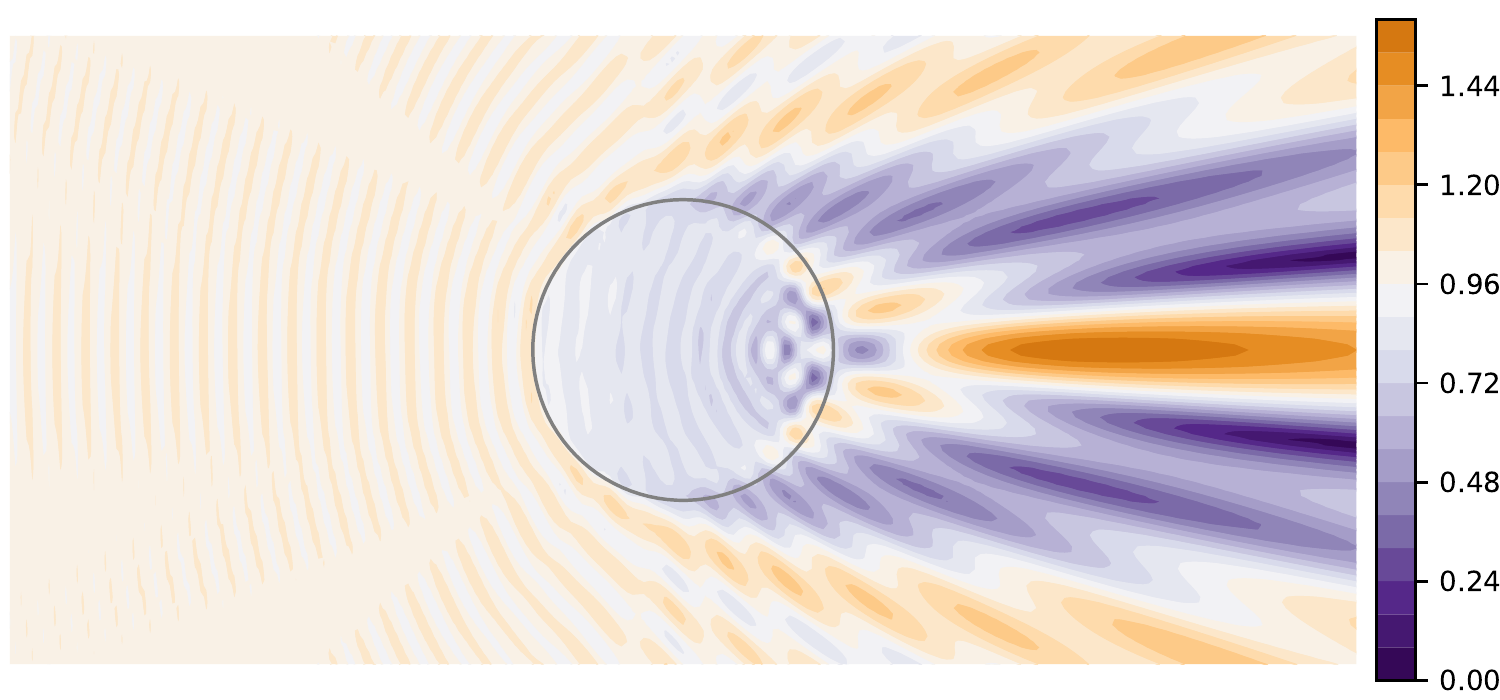}
    \caption{The figure on the left illustrates the scattered field from one configuration of particles due to an incident plane-wave. The colour indicates the field value. The figure on the right illustrates the ensemble averaging of the scattered field over every possible particle configuration. Although the figure on the left shows the scattered field for one moment in time, the figure on the right is what many sensors would measure when averaging over time or space. }
    \label{fig:compare-one-configuration}
\end{figure}

{\bf Sensing application.} To develop non-invasive sensors we first need efficient mathematical models on how waves scatter. Particulate materials are valuable products across many industries. They are present in pharmaceuticals (powders and emulsions) and aerosols (suspension); metal and polymers powders for additive manufacturing, and many chemical and food industries\footnote{Malvern Panalytical: \url{www.malvernpanalytical.com/en/industries}. Horiba Scientific: \url{www.horiba.com/en_en/products/by-segment/scientific/particle-characterization/applications/}. } (emulsions, colloids, slurry). Accurately monitoring the particles during processing (chemical, thermal, or mechanical) can enable automation and lead to optimised particle properties. Optimising particulates reduces waste and improves product quality. Currently there is no quantitative method to monitor dense particulates. The most reliable methods currently rely on light diffraction, which is only viable if the material is mostly transparent. In practice, this means the material needs to be diluted or filtered before applying these diffraction methods, which can only be done in small batches using in a controlled laboratory environment.

To develop new sensors to monitor in real time, large quantities of materials, we first need to understand how waves scatter from these dense particulates, and develop efficient models to describe this scattering.
% Further, the framework we present can be used to calculate the scattered field from a cylinder filled with particles, which is a precursor to a pipe filled with particles. Pipes being the most practical location to setup a sensor.

\paragraph{Every particle counts.} Any method that uses waves to probe a particulate material needs to consider how each particle scatters waves.
% \autocite{mishchenko_first-principles_2016,martin_multiple_2011,tishkovets_scattering_2011,kristensson_coherent_2015}.
This is because both particle properties and positions influence the total scattered waves, as shown in Figure~\ref{fig:compare-one-configuration}.

Although it is possible to numerically simulate scattered waves from a specific arrangement of particles, these numerical methods are computationally too intensive for most practical applications. For example, one droplet of most emulsions will contain hundreds of millions of oil particles, whose positions are unknown. The most successful methods avoid these heavy computations by replacing the material with an equivalent homogeneous material\cite{challis_ultrasound_2005,mishchenko_first-principles_2016}. This equivalent homogeneous material is calculated by taking an \emph{ensemble average}.

\paragraph{The ensemble average.} Ensemble averaging not alone simplifies the calculations, it is also the route to devising measurements which do not depend on the positions of the particles, which are unknown. One way to do this is to take the average of the scattered field. This average can be taken over space or over time (for ergodic systems). Both of these types of average measurements eliminate the need to know the particle position, and so lead to reliable measurements\cite{foldy_multiple_1945,mishchenko_first-principles_2016}.

If $u(x)$ represents the transmitted wave field, measured at some distance $x$, then, for a plane wave source propagating along the $x$-axis, it is common to approximate the ensemble average as a plane wave of the form
\begin{equation} \label{eqn:one-transmitted}
\ensem{u(x)} \approx A \eu^{\iu \omega (x/c_* - t) - \alpha x},
\end{equation}
where $c_*$ is the (effective) wave speed, $\alpha$ the rate of attenuation, and $A$ the average transmission coefficient.
The process of calculating the ensemble average links the measurables $c_*$, $\alpha$, and $A$ to the particles; it is this link which drives many sensing methods.
It is common to combine $c_*$ and $\alpha$ into one quantity, the complex effective wavenumber: $k_* = \omega /c_* + \iu \alpha$.

\paragraph{What is known.} One scenarios has been mostly clearly understood: a plane wave incident on a halfspace or plate region filled with particles. This setup has, what we call in this paper, {planar} symmetry. For {planar} symmetry, in the limits of low frequency or low volume fraction, there are explicit formulas~\cite{parnell_multiple_2010,linton_multiple_2006, martin_multiple_2006, linton_multiple_2005,caleap_effective_2012,caleap2015metamaterials}, and an understanding on how to calculate wave reflection and transmission~\cite{martin_multiple_2011,gower2019multiple}. Further, the effective wavenumbers for planar symmetry have also been rigorously deduced~\cite{gower2019proof} (given typical statistical assumptions), though there is often more than one effective wavenumber for the same fixed frequency~\cite{gower2019multiple,willis2020transmission,willis2020transmission-meta}. One clear question that remained was how to extend this approach to a material with any geometry and for any source? For example, is the effective wavenumber $k_*$ the same for other geometries?  There has even been evidence~\cite{guerin2006effective} that the effective properties (and wavenumber) depend on the geometry of the material. % shown in their Figure 8 and the discussion surrounding it.
If this were true, these effective wavenumbers would not be very useful, as they would change for every sample of the same material.

In the electromagnetic community, the analysis of effective wave properties in particulate media has a long tradition. Some of the most significant contributions are collected in textbooks, \eg~\cite{Tsang+Kong2001,Tsang+etal2001,Tsang+Kong+Ding2000} and journal literature~\cite{mishchenko_first-principles_2016,Tishkovets+etal2011}. With a few exceptions, the analysis deals again with planar symmetry. %, and methods to compute effective wavenumbers for half-space media.

\paragraph{This paper.} Here we develop the theory for effective waves and wavenumbers for materials in any geometry. The key to achieve this is to use the representation:
\begin{equation}
\ensem{u(x)} = \sum_{p=1}^P \phi_p(x) \eu^{- \iu \omega t},
\end{equation}
where $\phi_p(x)$ is a function that satisfies $\nabla^2 \phi_p(x) + k_p^2 \phi_p(x) = 0$.
This representation allows us to deduce a dispersion equation for the $k_p$ that does not depend on the material geometry. This question of whether the geometry changes the effective wavenumbers has been raised in previous studies~\cite{guerin2006effective}.

In this paper we present a framework for effective scalar waves in any material geometry, and then specialise to a material shaped as a sphere and a plate. This allows us to design highly efficient numerical methods for these cases.
% Our results readily generalises to the electromagnetic case by....

\section{A collection of particles}
We begin with the deterministic many-particle scattering problem and use the Null-field approach~\cite{Kristensson2016}.
Consider $N$ different particles, where the $i$-th particle is centred at the location $\rv_i$
% defining the position of the local origin $O_i$,
% $i=1,2,\ldots,N$, relative the global origin $O$,
as shown in Figure~\ref{fig:CollectionScatterer}.\footnote{Throughout this paper, vector-valued quantities are denoted in italic boldface and vectors of unit length have a ``hat'' or caret ($\unitvec{\;}$) over the symbol.}
The radius of the minimum circumscribed sphere, centred at $\rv_i$, is $a_i$, $i=1,2,\ldots,N$. We assume that no minimum circumscribed spheres intersect. Each particle can have a different shape and material properties.

The particles are located in a  homogeneous, isotropic media with wavenumber $k$, which is either a real number or a complex number with a positive imaginary part.

% The entire collection of particles has a circumscribed sphere with radius $R$.
The prescribed sources are located in the region $\Vi$, which is a region disjoint to all particles,\footnote{More precisely, the circumscribed sphere of the source region must not include any local origin $\rv_i$, $i=1,2,\ldots,N$. For instance, an incident plane wave fulfils these restrictions.} and these sources generate the field $\ui(\rv)$ everywhere outside $\Vi$.

For a point $\rv$, outside of the circumscribed spheres of all particles, we can write the total field $u(\rv)$ as a sum of the incident wave $\ui(\rv)$ and all scattered waves in the form~\cite{Kristensson2015a,Kristensson2016,Linton+Martin2006}
\begin{equation}
    u(\rv) = \ui(\rv) + \us(\rv), \quad \us(\rv) =  \sum_{i=1}^N \sum_n f_n^i \mathrm u_n (k \rv - k \rv_i),
    \label{eqn:total_discrete_wave}
\end{equation}
where we assumed $ |\rv - \rv_i| > a_i $ for $i=1,2,\ldots N$, the $f_n^i$ are coefficients we need to determine, and for convenience we use scalar spherical waves:
\begin{equation}
\left\{\begin{aligned}
    & \mathrm u_{n}(k\rv)={\mathrm{h}}_\ell^{(1)}(kr)\mathrm{Y}_{n}(\rvh), & \text{(outgoing spherical waves)}
    \label{eqn:outgoing_waves_and_regular_waves}
    \\
    & \mathrm v_{n}(k\rv)=\mathrm{j}_\ell(kr)\mathrm{Y}_{n}(\rvh), & \text{(regular spherical waves)}
 \end{aligned}\right.
\end{equation}
where $r=|\rv|$, and $n$ denotes a multi index $n=\{\ell,m\}$, with summation being over $\ell=0,1,2,3\ldots$ and $m=-\ell,-\ell+1,\ldots,-1,0,1,\ldots,\ell$.
For more details, see Appendix~\ref{sec:Functions}.
The spherical Hankel and Bessel functions are denoted ${\mathrm{h}}_\ell^{(1)}(z)$ and $\mathrm{j}_\ell(z)$, respectively.
The field $\sum_n f_n^i \mathrm u_n(k \rv - k \rv_i)$ is the wave scattered from particle-$i$.

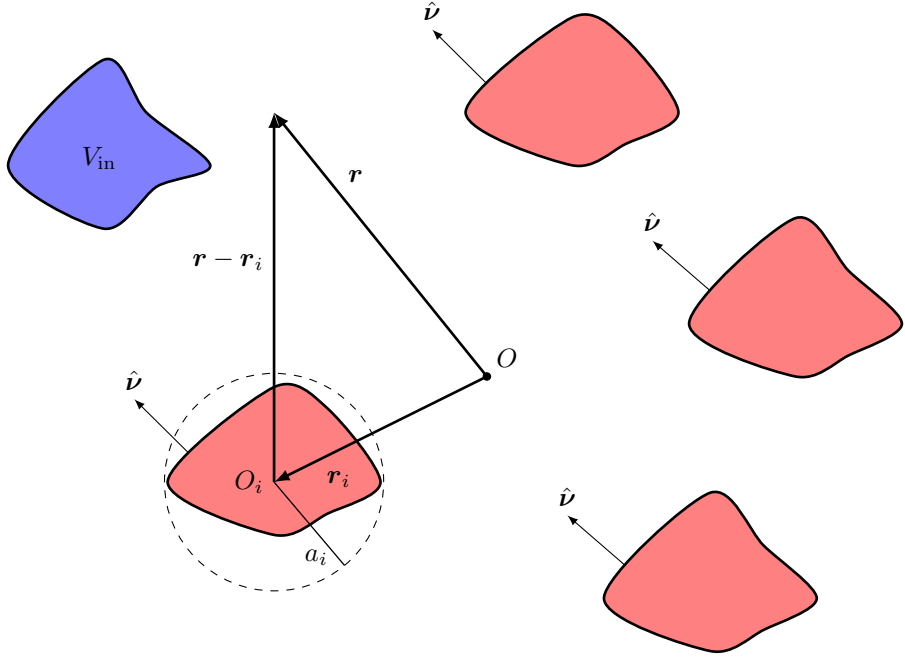
\begin{figure}[t]
    \begin{center}
\begin{tikzpicture}[>=latex,scale=1.4,decoration={markings,mark=at position 0.6 with {\draw[black,->,thin] (0,0) -- (0,-1)node[above]{$\unitvec{\nu}$};}}]
    \def\rmax{1.03cm}
    \def\domain{(0,-0.5) (0.5,-0.3) (1,0) (0.5,0.5) (0,1) (-1,0)}
    \def\domainalt{(0,-0.5) (0.5,-0.3) (1,0) (0.5,0.7) (0,0.9) (-1,0)}
    \coordinate (O) at (0,0);
    \coordinate (O1) at (2.9,0.5);
    \coordinate (O2) at (0.8,2.5);
    \coordinate (Oi) at (-2,-1);
    \coordinate (ON) at (2.1,-2.1);
    \coordinate (r) at (-2,2.5);
    \filldraw[line width=1pt,smooth cycle,fill=blue!50,draw=black,shift={(-3.6,2)}] plot coordinates {(0,-0.6) (0.5,-0.2) (1,0) (0.4,0.5) (0,1) (-0.9,0)}
     node[xshift=1.2cm,yshift=0.1cm] {$\Vi$};
    \filldraw[postaction={decorate},line width=1pt,smooth cycle,fill=red!50,draw=black,shift={(O1)}] plot coordinates \domain;
    \filldraw[postaction={decorate},line width=1pt,smooth cycle,fill=red!50,draw=black,shift={(O2)}] plot coordinates \domainalt;
    \filldraw[postaction={decorate},line width=1pt,smooth cycle,fill=red!50,draw=black,shift={(Oi)}] plot coordinates \domainalt;
    \filldraw[postaction={decorate},line width=1pt,smooth cycle,fill=red!50,draw=black,shift={(ON)}] plot coordinates \domain;
    \draw[fill=black] (O) circle (1pt) node[above right] {$O$};
    \draw[line width=1pt,->] (O)--(Oi) node[left] {$O_i$} node[pos=0.8,anchor=north west] {$\rv_i$};
    \draw[line width=1pt,->] (Oi)--(r) node[pos=0.6,anchor=east] {$\rv-\rv_i$};
    \draw[line width=1pt,->] (O)--(r) node[pos=0.7,anchor=south west] {$\rv$};
    % \draw[dashed] (O) circle (\R);
    \draw[dashed] (Oi) circle (\rmax);
    % \draw[line width=0.5pt] (O)--(250:\R) node[pos=0.5,right] {$R$};
    \draw[line width=0.5pt] (Oi)--+(-50:\rmax) node[pos=0.9,left] {$a_i$};
\end{tikzpicture}
    \end{center}
    \caption{The geometry of a collection of  the $N$ particles and the region of prescribed sources $\Vi$.
    The common origin is denoted $O$.
    The positions of the local origins $O_i$ are $\rv_i$, $i=1,\ldots,N$, and the radius of the minimum circumscribed sphere of each local particle is $a_i$.}
    \label{fig:CollectionScatterer}
\end{figure}

\subsection{Incident field}
We assume the incident field is generated outside of all particles, see Figure~\ref{fig:CollectionScatterer}, so it has an expansion in regular spherical waves
\begin{equation}
    \ui(\rv)=\sum_{n}g_n \mathrm v_n(k\rv)=\sum_{nn'}g_n\mathcal{V}_{nn'}(k\rv_i)\mathrm v_{n'}(k\rv - k\rv_i),
    \label{eqn:incident-spherical}
\end{equation}
where for the last equality we used a translation matrix of the regular spherical waves, $\mathcal{V}_{nn'}(k\rv_i)$, to write the incident wave in terms of spherical waves centred at $\rv_i$. See Appendix~\ref{sec:Translation} for details.

In many applications, we adopt a plane wave impinging along the direction $\kvhi$, \ie
\begin{equation}
\ui(\rv)=\eu^{\iu k\kvhi\cdot\rv},
\end{equation}
where the expansion coefficients, $g_{n}$, are given by~\cite{Kristensson2016}
\begin{equation}\label{eq:a_n def}
g_{n}=4\pi\iu^{\ell}\mathrm{Y}_{n}^*(\kvhi),
\end{equation}
where a star ${}^*$ denotes complex conjugate.
For the special direction $\kvhi=\zvh$, the coefficients are simplified
\begin{equation}
g_{n}=4\pi\iu^\ell\mathrm{Y}_{n}^*(\zvh)=\iu^\ell\delta_{m,0}\sqrt{4\pi(2\ell+1)}.
\end{equation}

\subsection{Scattered field}
The coefficients $f_n^i$ from~\eqref{eqn:total_discrete_wave} are determined by using the T-matrix to relate the field incident on the $i$-th particle, $u(\rv) - \sum_n f_n^i \mathrm u_n(k \rv - k \rv_i)$, to the wave scattered from the $i$-th particle, $\sum_n f_n^i \mathrm u_n(k \rv - k \rv_i)$, which leads to~\cite{Kristensson2015a,Kristensson2016,Linton+Martin2006}
\begin{equation}\label{eq:Individual-expansion-coefficients}
   \tcboxmath{ f_n^i=\sum_{n'} T_{n}^i\mathcal{V}_{n'n}(k\rv_i)g_{n'}
    +\sum_{\substack{j=1\\j\neq i}}^N\sum_{n'}T_{n}^i\mathcal{ U}_{n'n}(k\rv_i - k\rv_j)f_{n'}^j,\quad i=1,2,\ldots,N,}
\end{equation}
where $\mathcal{U}_{nn'}$ is the translation matrix of the outgoing spherical waves $\mathrm u_n$, see Appendix~\ref{sec:Translation}. Above we have used a diagonal T-matrix which assumes a spherical particle; later we will explain how this leads to the solution for non-spherical particles whose orientation is independent of position and properties.

Equation~\eqref{eq:Individual-expansion-coefficients} is very difficult to calculate when the number of particles $N$ is large.
Nevertheless, there are several software packages making substantial progress, \eg MSTM (Multiple Sphere T Matrix)~\cite{Mackowski+Mishchenko2011a,ganesh_far-field_2010,ganesh_algorithm_2017}.

The  $T_{n}^i$ depend only on the properties of the $i$-th particle, while the scattering coefficients $f_n^i$ depends on the positions and properties of all the particles. For example, for acoustics, and a homogeneous spherical particles, we would have~\cite{linton_multiple_2006}:
\begin{equation} \label{eqn:T-matrix-acoustics}
    T^i_{n} = - \frac{\gamma_i \mathrm j_\ell (k a_i) \mathrm j_\ell (k_i a_i) -   \mathrm j_\ell (k a_i) \mathrm j_\ell (k_i a_i)}{\gamma_i \mathrm h^{(1)\prime}_\ell (k a_i) \mathrm j_\ell (k_i a_i) -   \mathrm h^{(1)}_\ell (k a_i) \mathrm j_\ell (k_i a_i)},
\end{equation}
where $\gamma_i = \rho_i k /(\rho k_i)$, $a_i$ is the particle radius, $\rho$ is the background density, while $\rho_i$ and $k_i$ are the density and wavenumber of the particle.

\section{Ensemble averaging}
Even if the position and properties of all particles were known, it is still very challenging to solve~\eqref{eq:Individual-expansion-coefficients} for a large number of particles, say, over $10^6$. Also, many sensors can not even measure $f_n^i$, but instead measure the scattered field averaged either in time or space~\cite{foldy_multiple_1945,mishchenko_first-principles_2016}. For these reasons it makes sense to calculate the ensemble average scattered waves. The first step towards achieving this is to introduce a probability for the particles having certain properties and positions~\cite{Linton+Martin2006,Kristensson2015a,Tsang+Kong2001,Tishkovets+etal2011}.

\subsection{Statistical assumptions}
\label{sec:statistics}
To describe the properties and shape of the $i$-th particle, we will use the variable $\lambda_i$, which allows us to define $T_{n}(\lambda_i) := T_{n}^i$ for every $i$.
This means that the $f_n^i$, governed by~\eqref{eq:Individual-expansion-coefficients}, depend on the positions $\rv_1, \, \rv_2, \, \ldots , \, \rv_N$ and the properties $\lambda_1, \, \lambda_2, \, \ldots, \lambda_N$ of all the particles.

To ensemble average we need to assign a probability density for any configuration $\rv_1,\rv_2,\ldots,\rv_N$, and any properties $\lambda_1,\lambda_2,\ldots,\lambda_N$. The first step is consider the $\rv_i$ and $\lambda_i$ as random variables.
Next we assume that the particle properties $\lambda_i$ are sampled from the same domain $\mathcal S$. For example, if $\lambda_i = a_i$, the radius of particle-$i$, for every $i$, then we could choose $\mathcal S = [A_1,A_2]$ so that all  $\lambda_i \in  \mathcal S$, \ie we restrict all particle radii in some interval.
For particle origins $\rv_i$ we can not restrict them all to the same domain because the particles may have a different sizes. So instead we choose a different domain for each, that is, for a given $\lambda_i$ we have that $r_i \in \reg_i$. For example, if all the particles were contained in a sphere with of radius $R$, then a particle with radius $a_i$ would have its origin $\rv_i$ restricted in a sphere of radius $R - a_i$. That is, $\reg_i$ would be a sphere of radius $R - a_i$. For more details on ensemble averaging for multi-species particles see~\cite{gower_reflection_2018}.

 The main parameters we use to describe the average particulate material are
\begin{align} \label{def:number_density}
    &\numdensity(\lambda_i) = \frac{N}{|\reg_i|} p(\lambda_i) \;\; \text{(number of $\lambda_i$ types particles per unit volume)},
    \\
    \label{def:minimal_particle_distance}
    & a_{ij} \;\; \text{(the minimal allowed distance $|\rv_i - \rv_j|$ between particle $i$ and particle $j$)},
\end{align}
where $|\reg_i|$ is the volume of $\reg_i$ and $p(\lambda_i)$ is the probability density of the particle having the property $\lambda_i$. In this paper we allow the minimal distance between two particles $a_{ij}$ to be larger or equal to the sum of the particle radii $a_i + a_j$.
% Figure~\ref{fig:CollectionScatterer},
Note we committed an abuse of notation for the function $p$, and will continue to do so.

Let $p(\rv_i, \lambda_i)$ be the probability density of having a particle centred at $\rv_i \in \reg_i$ with $\lambda_i \in \mathcal S$, after ensemble averaging over all other particle positions and properties. If we assume that $\rv_i$ is equally likely to be anywhere in $\reg_i$ we obtain
\begin{equation}
    p(\rv_i, \lambda_i) = p(\rv_i| \lambda_i) p (\lambda_i) \approx
    \frac{p (\lambda_i)}{|\reg_i|} = \frac{\numdensity(\lambda_i)}{N}.
    \label{eqn:p_uniform}
\end{equation}

We also need to define conditional probabilities:
\begin{equation}
    p(\rv_1, \lambda_1; \ldots; \rv_{i-1}, \lambda_{i-1}; \rv_{i+1}, \lambda_{i+1}; \ldots; \rv_{M}, \lambda_M| \rv_i, \lambda_i) = p(\rv_1,\lambda_1;\ldots;\rv_M,\lambda_M)/p(\rv_i, \lambda_i),
    \label{eqn:conditional_prob}
\end{equation}
where $M$ is any integer smaller than the number of particles $N$.

To solve the ensemble average equations, the probability function for two particles $p(\rv_i,\lambda_i;\rv_j,\lambda_j)$ needs to be given. To achieve this, we use an assumption called \emph{hole correction}, which assumes that any two particles are equally likely to be anywhere within regions\footnote{When $\rv_i$ or $\rv_j$ are very close to the boundary of their regions $\reg_i$ and $\reg_j$, then~\eqref{eqn:hole_correction} should be altered. We do not include this alteration because it both does not affect any of the results on effective waves.}, except that their
minimum circumscribed spheres do not overlap~\cite{Fikioris+Waterman1964,Fikioris+Waterman2013}:
\begin{equation} \label{eqn:hole_correction}
    p(\rv_i; \rv_j | \lambda_i;\lambda_j) \approx \begin{cases}
    \frac{1}{|\reg_i| |\reg_j|}
    & \text{for} \;\; |\rv_i - \rv_j | \geq a_{ij},
    \\
    0  & \text{for} \;\; |\rv_i - \rv_j|  < a_{ij}.
    \end{cases}
\end{equation}
To deduce the above for $|\rv_i - \rv_j| \geq a_{ij}$ we used
\begin{equation} \label{eqn:condition-properties}
p(\rv_i; \rv_j | \lambda_i;\lambda_j) = p(\rv_i | \lambda_i;\lambda_j) p(\rv_j |\rv_i, \lambda_i;\lambda_j) \approx \frac{1}{|\reg_i|} \frac{1}{|\reg_j|},
\end{equation}
where $p(\rv_i;\rv_j|\lambda_i;\lambda_j)$ is the probability density of having one particle centred at $\rv_i$, knowing that it has the property $\lambda_i$, and another particle at $\rv_j$, knowing that it has the property $\lambda_j$. The approximation above assumes that the volume of one particle is negligible in comparison to the volume of its confining region.

To help interpret hole-correction~\eqref{eqn:hole_correction} we will do some extra calculations. For simplicity, we assume that the particle properties $\lambda_i$ and $\lambda_j$ are independent of each other to reach
\begin{equation} \label{eqn:hole_correction_conditional}
    p(\rv_j, \lambda_j | \rv_i, \lambda_i) = \frac{p(\rv_i, \lambda_i; \rv_j, \lambda_j)}{p(\rv_i, \lambda_i)} \approx
    |\reg_i|
    p(\lambda_j) p(\rv_i; \rv_j| \lambda_i; \lambda_j)
    \approx \begin{cases}
    \frac{\numdensity(\lambda_j)}{N} & \text{for} \;\; |\rv_i - \rv_j | \geq a_{ij},
    \\
    0  & \text{for} \;\; |\rv_i - \rv_j|  < a_{ij},
    \end{cases}
\end{equation}
where for the last approximation we used~\eqref{eqn:hole_correction} and~\eqref{def:number_density}.
 An alternative way to calculate the above is to approximate $p(\rv_j, \lambda_j| \rv_i,\lambda_i)$ for its expected value in $\rv_i$ and $\lambda_i$ when $|\rv_i - \rv_j| \geq a_{ij}$, that is
\begin{equation} \label{eqn:p_cond_expected}
    p(\rv_j, \lambda_j| \rv_i,\lambda_i) \approx \int_{\reg_i}\int_{\mathcal S} p(\rv_i,\lambda_i) p(\rv_j, \lambda_j| \rv_i,\lambda_i) \mathrm d \rv_i \mathrm d \lambda_i = p(\rv_j, \lambda_j),
    \;\; \text{for} \;\; |\rv_i - \rv_j| \geq a_{ij},
\end{equation}
which when using~\eqref{eqn:p_uniform} leads to the same conclusion as hole correction~\eqref{eqn:hole_correction_conditional}.
Here, $\diff\rv_i$ is the volume measure of the region $\reg_i$.
Later, we show that the quasi-crystalline approximation~\eqref{eqn:quasi_crystalline} makes an approximation which is analogous to~\eqref{eqn:p_cond_expected}.

We can now define the ensemble average of $f_n^1$ as
\begin{equation}
    \ensem{f_n^1} = \int f_n^1 \, p(\rv_1,\lambda_1;\ldots; \rv_N, \lambda_N) \mathrm d \rv_1 \cdots \mathrm d \rv_N \mathrm d \lambda_1 \cdots \mathrm d \lambda_N,
\end{equation}
where the above integrals are over all feasible values for the particles positions $\vec r_i$ and properties $\lambda_i$.

We also need the conditional ensemble averages, which we define as
\begin{align}
    & \ensem{f_{n}^1}(\rv_1,\lambda_1) =  \int f_{n}^1 \, p(\rv_2,\lambda_2;\ldots; \rv_N, \lambda_N |\rv_1, \lambda_1) \mathrm d \rv_2 \cdots \mathrm d \rv_N \mathrm d \lambda_2 \cdots \mathrm d \lambda_N,
    \label{eqn:f1_conditional}
    \\
    \label{eqn:f2_conditional}
    & \ensem{f_{n}^2}(\rv_1,\lambda_1; \rv_2,\lambda_2) =  \int f_{n}^2 \, p(\rv_3,\lambda_3;\ldots; \rv_N, \lambda_N |\rv_1, \lambda_1;\rv_2, \lambda_2) \mathrm d \rv_3 \cdots \mathrm d \rv_N \mathrm d \lambda_3 \cdots \mathrm d \lambda_N.
\end{align}
Note that in~\eqref{eqn:f1_conditional} we are holding the first particle's position $\rv_1$ and properties $\lambda_1$ fixed while averaging
% the scattering coefficients of the first particle $f^1_n$
over the other particles. In~\eqref{eqn:f2_conditional} we are averaging $f^2_n$ while holding the first and second particles positions $\rv_1$, $\rv_2$ and properties $\lambda_1$, $\lambda_2$ fixed.

For consistency and simplicity, we will use an approximation for $\ensem{f_n^2}(\rv_1, \lambda_1; \rv_2, \lambda_2)$ which is analogous to both~\eqref{eqn:p_cond_expected} and~\eqref{eqn:hole_correction}, and is called the quasi-crystalline approximation:
\begin{equation}
    \ensem{f_n^2}(\rv_1, \lambda_1; \rv_2, \lambda_2)\approx  \ensem{f_{n}^2}(\rv_2, \lambda_2), \quad \text{for} \;\; |\rv_1 - \rv_2| > a_{12}.
    \label{eqn:quasi_crystalline}
\end{equation}
That is, we replace $\ensem{f_n^2}(\rv_1, \lambda_1; \rv_2, \lambda_2)$ for its expected value in $\rv_1$ and $\lambda_1$, see~\cite{gower_reflection_2018} for a brief discussion on the topic.  This is a standard approach used across statistical physics. It is called a closure approximation~\cite{kuehn_moment_2016,adomian_closure_1971}.

Because the particles only differ due to their position $\rv_i$ and properties $\lambda_i$, we have that $\ensem{f_n^i}(\rv_i,\lambda_i) = \ensem{f_n^j}(\rv_j,\lambda_j)$ for any $i$ and $j$ (all particles with the same properties are indistinguishable).
This is why we now define:
\begin{equation} \label{def:f_n}
    \ensem{f_n}(\rv_j,\lambda_j) := \ensem{f_n^j}(\rv_j,\lambda_j) \quad \text{for } \; j =1,2, \ldots, N.
\end{equation}

\subsection{Average scattered field}

To calculate the ensemble average scattered field we first choose a point $\rv$  outside of the material, where we want to measure the scattered field. For example, turning to Figure \ref{fig:Geometry average}, the point $\rv$ needs to be outside of $\reg_2$ and at least one particle radius $a_2$ away from the boundary of $\reg_2$.
Then we multiple both sides of~\eqref{eqn:total_discrete_wave} by $p(\rv_1,\ldots,\rv_N,\lambda_1,\ldots,\lambda_N)$ and integrate over all possible particle positions and properties to reach
\begin{equation} \label{eqn:total-field}
    \ensem{u(\rv)} = \ui(\rv) + \ensem{\us(\rv)},
\end{equation}
where $\ensem{\ui(\rv)} = \ui(\rv)$, because the incident wave does not depend on the particle configuration, and
\begin{multline}
    \ensem{\us(\rv)} = N \sum_n \int_{\mathcal S} \int_{\reg_1} \ensem{f_n}(\rv_1,\lambda_1) \mathrm u_n (k \rv - k \rv_1) p(\rv_1,\lambda_1) \mathrm d \rv_1 \mathrm d  \lambda_1  \\\approx
     \sum_n  \int_{\mathcal S}  \numdensity(\lambda_1)\int_{\reg_1} \ensem{f_n}(\rv_1,\lambda_1) \mathrm u_n (k \rv - k \rv_1)  \mathrm d \rv_1 \mathrm d  \lambda_1,
    \label{eqn:average-field}
\end{multline}
where we used \eqref{eqn:p_uniform}, \eqref{eqn:conditional_prob}, and \eqref{def:f_n}.
Note that to take the limit $N \to \infty$, it  normally makes sense to fix the number density $ \numdensity(\lambda_i)$ and the probability $p(\lambda_i)$, and then allow the volume of the region $|\reg_i|$ to grow with $N$.

We can rewrite the above when $|\rv| > |\rv_1|$ for every $\rv_1 \in \reg_1$. In this case, we can use the translation matrix~\eqref{eq:translation_spherical_waves} for $\mathrm u_n$ to obtain
% If we further approximate $|\reg_1| \approx |\reg|$ for every $\lambda_1$, then we can define
% and, because $|\rv| > R$, we can write
\begin{equation}\label{eqn:total-scattered-field}
    \ensem{\us(\rv)}  =  \sum_n \mathfrak F_n \mathrm u_n (k \rv), \;\; \text{with} \;\;  \mathfrak F_n =  \sum_{n'}\int_{\mathcal S}  \numdensity(\lambda_1)\int_{\reg_1}  \mathcal{V}_{n'n}(- k \rv_1)\ensem{f_{n'}}(\rv_1,\lambda_1)\,\diff\rv_1 \diff \lambda_1.
\end{equation}
The $\mathfrak F_n$ are then the average scattering coefficients of the whole material.

\subsection{Average governing equations}

To calculate $\ensem{f_n}(\rv_1,\lambda_1)$,  we need to ensemble average the governing equation~\eqref{eq:Individual-expansion-coefficients}. To achieve this, we set $i=1$, multiple both sides of \eqref{eq:Individual-expansion-coefficients} by  $p(\rv_2,\lambda_2;\rv_3,\lambda_3;\ldots; \rv_N, \lambda_N |\rv_1, \lambda_1)$, and then integrate over all feasible positions and properties while holding $\rv_1$ and $\lambda_1$ fixed.
Then to transform the result into an equation where  $f_n(\rv_1,\lambda_1)$ is the only unknown we use~\eqref{eqn:hole_correction_conditional} and ~\eqref{eqn:quasi_crystalline},  to obtain
%\begin{highlight-result}
\begin{equation} \label{eq:Average-coefficient-system}
    \tcboxmath{
    \begin{aligned}
    &\ensem{f_n}(\rv_1,\lambda_1) = T_{n}(\lambda_1) \sum_{n'} \mathcal{V}_{n'n}(k\rv_1)g_{n'}
    \\&\qquad\qquad+
      T_{n}(\lambda_1) \sum_{n'} \int_{\mathcal S} \bar \numdensity(\lambda_2) \int_{\reg_2 \setminus \mathcal B(\rv_1;a_{12})}\mathcal{U}_{n'n}(k\rv_1 - k\rv_2) \ensem{f_{n'}}(\rv_2,\lambda_2)  \mathrm d \rv_2 \mathrm d \lambda_2,\end{aligned}}
\end{equation}
%\end{highlight-result}
 for all $\rv_1 \in \reg_1$ and $\lambda_1 \in \mathcal S$,
where we define
\begin{equation}
\mathcal{B}(\rv_1;R) =\{\rv: |\rv -\rv_1| \leq R\},
\label{eqn:B_R}
\end{equation}
used $\reg_2 \setminus \mathcal B(\rv_1;a_{12}) = \{\rv \in \reg_2 : \rv \not\in \mathcal B(\rv_1;a_{12}) \}$ and $\bar \numdensity(\lambda_2) = \frac{N-1}{N} \numdensity(\lambda_2)$.

The system~\eqref{eq:Average-coefficient-system} can be used to solve for $f_n(\rv_1,\lambda_1)$ for any given material geometry $\reg_1$ and any T-matrix $T_n$.
If all particles were the same, \ie same shape and properties, then \eqref{eq:Average-coefficient-system} would be equivalent to~\cite[Equation (4.13)]{Linton+Martin2006} and~\cite[Equation (12)]{Kristensson2015a}. If we considered a two dimensional material, with different types of particles, then~\eqref{eq:Average-coefficient-system} would be equivalent to~\cite[Equation (3.6)]{gower_reflection_2018}.

\begin{optionalnote}{Averaging non-spherical particles}
As a side, we explain how \eqref{eq:Average-coefficient-system} can accommodate particles which are not exactly spherical, as shown in Figure~\ref{fig:CollectionScatterer}.

Assume we have non-spherical particles with a T-matrix $T_{n n'}(\lambda_j, \tau_j)$, which depends on the particle properties $\lambda_j$ and orientation $\tau_j$. If every particle's orientation is statistically independent from everything else\footnote{Including the minimal allowed distance between any two particles~\eqref{def:minimal_particle_distance}.},
% \footnote{Note we had already assumed that the orientation of particle-1 is independent from the orientation of particle-2 in~\eqref{eqn:hole_correction_conditional}.}
then we can set the $T_n$ in~\eqref{eq:Average-coefficient-system} to equal
\[
T_n(\lambda_j) = \int T_{n n}(\lambda_j, \tau_j) p( \tau_j) \mathrm d \tau_j,
% \quad \text{and} \quad  \ensem{f_n}(\rv_j,\lambda_j) = \int \ensem{f_n}(\rv_j,\lambda_j)  p( \phi_j)\mathrm d \tau_j.
\]
where $p( \tau_j)$ is the probability density of particle $j$ being rotated by a $\tau_j$ angle. Note that in general $\tau_j$ could represent three Euler angles. In particular, if the particle is equally likely to be oriented in any direction then its T-matrix $T_{nn'}(\lambda_j,\tau_j)$ averaged over every angle $\tau_j$  becomes diagonal~\cite{mishchenko_t-matrix_1996,varadan_scattering_1979}.

\end{optionalnote}

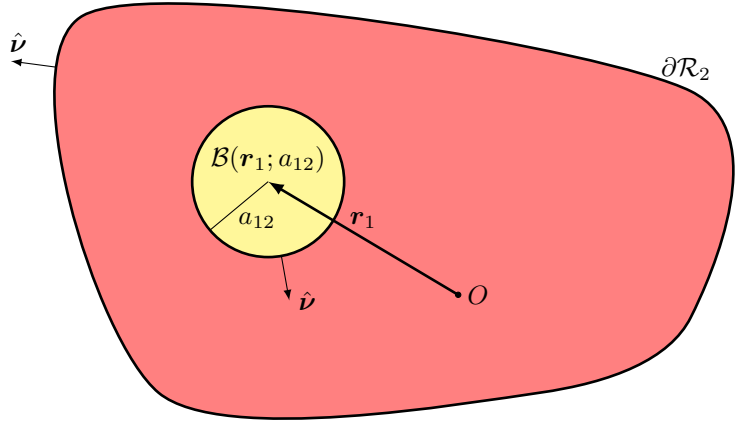
\begin{figure}[t]
    \centering
\begin{tikzpicture}[>=latex, every node/.append style={overlay},scale=1,decoration={markings,mark=at position 0.6 with {\draw[black,->,thin,solid] (0,0) -- (0,-0.6)node[xshift=1mm,above]{$\unitvec{\nu}$};}}]
    \def\domain{(-3,-1) (2,-1) (4,0) (4,3) (-4,4)}
    \def\a{10mm}
    \coordinate (O) at (5,1);
    \coordinate (x) at (2.5,2.5);
    \coordinate (P) at (8,4);

    \filldraw[postaction={decorate},line width=1pt,smooth cycle,fill=red!50,draw=black,shift={(O1)}] plot coordinates \domain;

  % Draw the hole correction
    \filldraw[line width=1pt,smooth cycle,fill=yellow!50,draw=black,shift={(x)}] circle(\a) node[yshift=3mm] {$\mathcal{B}(\rv_1;a_{12})$};
    \draw[->,thin] (x)++(-80:\a) -- +(-80:6mm) node[right]{$\unitvec{\nu}$};

   % Draw origin and symbols
    \draw[fill=black] (O) circle (1pt) node[right] {$O$};
    \draw[line width=1pt,->] (O)--(x) node[pos=0.5,above] {$\rv_1$};
    \draw (x) -- +(220:\a) node[pos=0.8,right=1mm] {${a_{12}}$};
    \path (P) node {$\partial\reg_2$};
\end{tikzpicture}
    \caption{The geometry of the region $\reg_2$ (red) and the hole correction $\mathcal B(\rv_1;a_{12})$ (in yellow) centred at $\rv_1$. The figure also displays the direction of the surface unit vectors $\nuvh$.}\label{fig:Geometry average}
\end{figure}

\subsection{Symmetry reductions}
\label{sec:symmetry-reductions}
Before solving~\eqref{eq:Average-coefficient-system} to determine the field $\ensem{f_n}(\rv_1,\lambda_1)$, we first look at how to use symmetries to represent $\ensem{f_n}(\rv_1,\lambda_1)$ in a reduced form.

We could apply symmetry reduction directly to the governing integral equation~\eqref{eq:Average-coefficient-system}.
It is, however, simpler to just impose symmetries on the average scattered wave $\ensem{\us(\rv)}$~\eqref{eqn:average-field} and then deduce the resulting symmetry for $\ensem{f_n}(\rv_1,\lambda_1)$
% $f_{p,n}(\rv_1, \lambda_1)$,
as we demonstrate below. To omit a heavy notation, we will in this section omit the dependence of $\ensem{f_n}$ on $\lambda_1$ and the integrals over the species $\mathcal S$.

{\bf Azimuthal symmetry:} we expect this symmetry when the total scattered wave $\us$ does not change when rotating the measurement point $\rv$ around the $z-$axis. This occurs, for example, for the incident plane wave $\ui(\rv) = \eu^{\iu k z}$ and a spherical material region $\reg_1 = \{|\rv|\leq R: \rv \in \mathbb R^3\}$ centred at the origin. When azimuthal symmetry is present, we expect
\[
\ensem{\us(\mathbf P \rv)} = \ensem{\us(\rv)} \quad \text{for every $|\rv| \not \in \reg_1$},
\]
where we define the operator $\mathbf P$ such that $\mathbf P \rv$ is a $\phi_0$ rotation of the vector $\rv$ around the $z$-axis. Note that when $\reg_1$ is a sphere, then the above should hold true for $\rv \geq R$.

To determine the consequences of this symmetry, we turn to the average scattered wave~\eqref{eqn:average-field} and rewrite in the form
\begin{align*}
    \ensem{\us(\mathbf P \rv)}
    & =  \sum_n \int_{\reg_1} \ensem{f_n}(\rv_1) \mathrm u_n (k \mathbf P \rv - k \rv_1) \mathrm d \rv_1
    =  \sum_n \int_{\reg_1} \ensem{f_n}(\mathbf P  \rv_1) \mathrm u_n (k \mathbf P \rv - k \mathbf P  \rv_1) \mathrm d  \rv_1
    \\
    & =  \sum_n \int_{\reg_1} \ensem{f_n}(\mathbf P \rv_1) \eu^{\iu m \phi_0} \mathrm u_n (k \rv - k  \rv_1) \mathrm d  \rv_1,
\end{align*}
where we used a change of variables $\rv_1\to\mathbf P \rv_1$, and used~\eqref{eqn:outgoing_waves_and_regular_waves} and \eqref{eqn:spherical-harmonics} to substitute $ \mathrm u_n (k  \mathbf P \rv - k  \mathbf P \rv_1) = \eu^{\iu m \phi_0} \mathrm u_n (k \rv - k  \rv_1)$. Notice that the volume measure $\diff\mathbf P\rv_1=\diff\rv_1$.
Equating the above to $\ensem{\us(\rv)}$ and using \eqref{eqn:average-field} then suggests that $\ensem{f_n}(\rv_1) = \ensem{f_n}(\mathbf P \rv_1) \eu^{\iu m \phi_0}$. Then by using a spherical coordinate system $(r_1,\theta_1,\phi_1)$ for $\rv_1$, and by choosing $\phi_0 = - \phi_1$ (without loss of generality) we find that (arguments in spherical coordinates)
\begin{equation} \label{eqn:azimuthal-symmetry}
    \ensem{f_n}(r_1,\theta_1,\phi_1) = \ensem{f_n}(r_1,\theta_1,0) \eu^{-\iu m \phi_1},
\end{equation}
for every $\phi_1$. This symmetry can now be verified by checking that the right hand-side is a solution to~\eqref{eq:Average-coefficient-system}, though this is a longer calculation.

{\bf Planar symmetry:} For an incident plane wave $\ui(\rv) = \eu^{\iu \vec k\cdot\rv}$ and  the material region $\reg_1 = \{\rv \in \mathbb R^3:z>0\}$,  we expect the average scattered wave to satisfy the planar symmetry:
\begin{equation} \label{eqn:planar-symmetry}
    \ensem{\us(\rv)} = \ensem{\us(\rv-\rv_0)} \eu^{\iu \vec k \cdot\rv_0} \quad \text{for every $x,x_0,y, y_0 \in \mathbb R$ and $z <0$},
\end{equation}
where $\rv_0=x_0\xvh+y_0\yvh$. If we then use~\eqref{eqn:average-field} in the above we find that
\begin{align*}
    \ensem{\us(\rv-\rv_0)}  \eu^{\iu k\kvhi\cdot\rv_0} & = \sum_n \int_{\reg_1} \ensem{f_n}(\rv_1) \eu^{\iu k\kvhi\cdot\rv_0} \mathrm u_n (k\rv-k\rv_0-k\rv_1)\,\diff\rv_1
    \\
    & = \sum_n \int_{\reg_1} \ensem{f_n}(\rv_1-\rv_0) \eu^{\iu k\kvhi\cdot\rv_0} \mathrm u_n (k\rv-k\rv_1)  \,\diff\rv_1,
\end{align*}
where for the second equation we changed to the variable of integration $x_1 + x_0\to x_1$ and $y_1 + y_0\to y_1$. For the above to be equal to
\begin{align*}
    \ensem{\us(\rv)} & =  \sum_n \int_{\reg_1} \ensem{f_n}(\rv_1) \mathrm u_n (k\rv-k\rv_1)\,\diff\rv_1,
\end{align*}
for every $x$, $x_0$, $y$, $y_0$, and $z < 0$ suggests that $\ensem{f_n}(\rv_1) = \ensem{f_n}(\rv_1-\rv_0) \eu^{\iu k\kvhi\cdot\rv_0}$, then by choosing $\rv_0 = x_1\xvh+ y_1\yvh$, we find that
\begin{equation} \label{eqn:plane-wave-symmetry}
\ensem{f_n}(\rv_1) = \ensem{f_n}(z_1\zvh) \eu^{\iu k\kvhi\cdot(x_1\xvh+ y_1\yvh)}.
\end{equation}
This symmetry can be verified by checking that the right hand-side is a solution to~\eqref{eq:Average-coefficient-system}.

One case that combines both azimuthal~\eqref{eqn:azimuthal-symmetry} and planar symmetry~\eqref{eqn:plane-wave-symmetry} is the incident plane-wave $\eu^{\iu k z}$ ($\kvhi=\zvh$) and material region $z >0$. In this case (arguments in spherical coordinates),
\begin{equation} \label{eqn:azimuth-planar-symmetry}
    \ensem{f_n}(r_1,\theta_1,\phi_1) = \ensem{f_n}(r_1,0,\phi_0) = \ensem{f_n}(r_1,0,0) \eu^{- \iu m \phi_0}, \quad \text{for every} \; 0\leq \phi_0 \leq 2\pi,
\end{equation}
% \todoGK{Yes, you are right, it is the positive $z$-axis. Should it be $\phi_1$ everywhere?}
% \todoAG{We need it to change to $\phi_0$ which appears due to planar symmetry. In more detail, due to planar symmetry:   $\ensem{f_n}(r_1,\theta_1,\phi_1) = \ensem{f_n}(z_1 \zvh) = \ensem{f_n}(r_1,0,\phi_0)$ for every $r_1$, $\theta_1$, $\phi_1$, and $\phi_0$. We need to introduce $\phi_0$ to show that $\ensem{f_n}$ is zero when $m \not = 0$. Does that make sense? }
where the first equation is due to planar symmetry~\eqref{eqn:plane-wave-symmetry} and the second is due to azimuthal symmetry~\eqref{eqn:azimuthal-symmetry}. Equation~\eqref{eqn:azimuth-planar-symmetry} can only be true for every $\phi_0$ when
\[
\ensem{f_n}(\rv_1) = \delta_{m,0} \ensem{f_\ell}(z_1\zvh).
\]
This result will be used later to reach a simplified dispersion equation.

% {\bf Radial incident on a sphere of particles}
% If the incident wave is radially symmetric $\ui := \ui (r)$ where $r = |\rv|$, and is incident on the sphere $\reg_1$, then we can repeat the same procedure above to arrive at
% \begin{equation}
%     \label{eqn:radial-symmetry}
%     \ensem{f_n}(r_1,\theta_1,\phi_1) = \ensem{f_n}(r_1,0,0) \sum_{m' = -\ell}^\ell [D^\ell_{mm'}(-\phi_1,-\theta_1,0)]^*,
% \end{equation}
% where $D^\ell_{mm'}$ is the Wigner D-matrix and $*$ is the complex conjugate~\cite{Edmonds1974}.

\section{Effective wavenumbers}

\subsection{Wave decomposition}
\label{sec:wave-decomposition}
Much of the literature has focused on solving~\eqref{eq:Average-coefficient-system} by assuming that the unknown field $f_n(\rv_1, \lambda_1)$ satisfies a wave equation for the spatial variable $\rv_1$ and for some effective wavenumber $k_*$. This assumption implies that the average transmitted fields $\ensem{u(\rv)}$ also satisfy a wave equation \cite{martin_multiple_2011}.
Recent results \cite{gower2019multiple,gower2019proof} have demonstrated that  for a half-space, the exact solution for $f_n(\rv_1, \lambda_1)$ is a sum of plane waves, each with a different wavenumber. Here we generalise this result by considering $f_n(\rv_1, \lambda_1)$ to be a sum of isotropic waves of any type, \ie not necessarily a plane wave, and the particulate material to occupy any region.

In general, we propose the representation
\begin{align}
   & \ensem{f_n}(\rv_1, \lambda_1) = \sum_p f_{p,n}(\rv_1, \lambda_1) \quad \text{with} \quad  \nabla^2_{\rv_1} f_{p,n}(\rv_1,  \lambda_1)  = -k_p^2 f_{p,n}(\rv_1, \lambda_1), \quad \text{for} \;\; \rv_1 \in \reg_1,
   \label{eqn:effective_wave_representation}
\end{align}
where the Laplacian $\nabla^2_{\rv_j}$ is taken in terms of $\rv_j$.

Our first major result is to calculate the effective wavenumbers $k_1$, $k_2$, $\cdots$, and to demonstrate that they depend only on the particle properties, and not on the geometry of the region enclosing the particles. Although the geometry of material and the incident wave will determine which of these wavenumbers are excited. A another major result, is that most of the effective wavenumbers are highly attenuating, which implies that the series~\eqref{eqn:effective_wave_representation} rapidly converges to the exact solution.
In the remained of this section, we show how to deduce a system that determines the $f_{p,n}(\rv_1,\lambda_1)$ that is decoupled from the material geometry.

To simplify the governing system~\eqref{eq:Average-coefficient-system} we note that by definition~\eqref{eq:translation_spherical_waves} the translation matrix $\mathcal{U}_{n'n}(k\rv_1 - k\rv_2)$ satisfies a wave equation in either $\rv_1$ or $\rv_2$ with wavenumber $k$. This and the representation~\eqref{eqn:effective_wave_representation} leads to
\begin{multline}
    (k^2 - k_p^2) \mathcal{U}_{n'n}(k\rv_1 - k\rv_2) f_{p,n'}(\rv_2,\lambda_2)
    \\=
    \mathcal{U}_{n'n}(k\rv_1 - k\rv_2) \nabla_{\rv_2}^2 f_{p,n'}(\rv_2,\lambda_2) - \nabla_{\rv_2}^2 \mathcal{U}_{n'n}(k\rv_1 - k\rv_2) f_{p,n'}(\rv_2,\lambda_2),  \quad \text{for} \;\; \rv_1 \in \reg_1,
     \label{eqn:wave_difference}
\end{multline}
Then for $\rv_1 \in \reg_1(a_{12})= \{\rv \in\reg_1 : d(\rv,\partial \reg_1) \geq {a_{12}} \}$, we can integrate both sides over $\rv_2 \in\reg_2 \setminus \mathcal B(\rv_1;a_{12})$ and apply Green's second identity to reach,
\begin{equation}
    \int_{\reg_2\setminus \mathcal B(\rv_1;a_{12})} \mathcal{U}_{n'n}(k\rv_1 - k\rv_2) f_{p,n'}(\rv_2,\lambda_2) \mathrm d \rv_2 =
     \frac{\mathcal I_p (\rv_1) - \mathcal J_p(\rv_1)}{k^2 - k_p^2},\quad \rv_1 \in \reg_2(a_{12})
     \label{eqn:greens_2nd}
\end{equation}
where
\begin{align}
     & \mathcal I_p (\rv_1) = \int_{\partial \reg_2 }  \mathcal{U}_{n'n}(k\rv_1 - k\rv_2) \frac{\partial f_{p,n'}(\rv_2,\lambda_2)}{\partial \vec\nu_2}
     - \frac{\partial \mathcal{U}_{n'n}(k\rv_1 - k\rv_2) }{\partial \vec\nu_2} f_{p,n'}(\rv_2,\lambda_2) \mathrm dA_2,
     \label{eqn:Ireg}
     \\
     & \mathcal J_p(\rv_1) =
     \int_{\partial B(\vec 0;a_{12})}  \mathcal{U}_{n'n}(-k\rv) \frac{\partial f_{p,n'}(\rv + \rv_1,\lambda_2)}{\partial \vec\nu}
     - \frac{\partial \mathcal{U}_{n'n}(-k \rv) }{\partial \vec\nu} f_{p,n'}(\rv + \rv_1,\lambda_2)\mathrm dA,
     \label{eqn:Ib}
\end{align}
$\mathrm d A_2$ and $\mathrm dA$ are the surface elements for $\rv_2$ and $\rv$, respectively, and $\boldsymbol\nu_2$ and $\boldsymbol\nu$ are outward pointing normal vectors to the surfaces $\partial \reg_2$ and $\partial  B(\vec 0;a_{12})$, respectively {(see Figure~\ref{fig:Geometry average})}.
To reach~\eqref{eqn:Ib} we changed the integration variable to $ \rv = \rv_2 - \rv_1$.

Both integrals $\mathcal I_p(\rv_1)$ and $\mathcal J_p (\rv_1)$ depend on the indices $n,n'$ and on the state variable $\lambda_1$ and $\lambda_2$, which we omit to avoid a heavy notation.

\begin{optionalnote}{The boundary layer}
The region $\reg_1 \setminus \reg_1(a_{12})$ is often called a boundary layer.  The simplification~\eqref{eqn:greens_2nd} only occurs when $\rv_1$ is not in this boundary layer, that is when $\rv_1 \in \reg_2(a_{12})$. This is because then $\partial \reg_2 \cap \partial \mathcal  B(\rv;a_{12}) = \varnothing$, and therefore the boundary of $\reg_2 \setminus \mathcal B(\rv_1;a_{12})$ becomes $\partial \reg_2 \cup \partial \mathcal B(\rv_1;a_{12})$.

In this paper we will not discuss how to evaluate the system~\eqref{eq:Average-coefficient-system} for $\rv_1 \in \reg_1 \setminus \reg_1(a_{12})$. Evaluating in this boundary layer is used to combine the different fields $f_{p,n}$, but this is only needed for very strong multiple-scattering~\cite{gower2019proof,gower2019multiple}.
{We also note that for a slab geometry, it is possible to evaluate the complex integrals when $\rv_1$ is located in the boundary layer, see~\cite{Kristensson2015a,Kristensson2015b}.}
\end{optionalnote}

By substituting equations~\eqref{eqn:effective_wave_representation}--\eqref{eqn:Ib} into the governing equation~\eqref{eq:Average-coefficient-system}, and assuming $\rv_1 \in \reg_2(a_{12})$, we obtain
\begin{equation}
   \sum_p f_{p,n}(\rv_1,\lambda_1) = \sum_{n'} T_{n}(\lambda_1)\mathcal{V}_{n'n}(k\rv_1)g_{n'}
    + \sum_{n' p}T_{n}(\lambda_1) \int_{\mathcal S}
     \frac{\mathcal I_p (\rv_1) - \mathcal J_p(\rv_1)}{k^2 - k^2_p}
     \bar\numdensity(\lambda_2) \mathrm d \lambda_2, \quad \rv_1 \in \reg_2(a_{12}),
    \label{eqn:combined_green_waves}
\end{equation}
The key to simplifying~\eqref{eqn:combined_green_waves} is to note that both
$f_{p,n}(\rv_1)$ and $\mathcal J_p(\rv_1)$ satisfy a wave equation with wavenumber $k_p$ and spatial position $\rv_1$, whereas $\mathcal{V}_{n'n}(k\rv_1)$ and $\mathcal I_p (\rv_1)$ satisfy a wave equation with wavenumber $k$. This enables us to use Theorem~\ref{th:Vandermonde} to conclude that
\begin{align} \label{eqn:ensemble_wave}
   &
   \tcboxmath{
   f_{p,n}(\rv_1,\lambda_1) +  \sum_{n'} \frac{T_{n}(\lambda_1)}{k^2 - k^2_p} \int_{\mathcal S}  \mathcal J_p (\rv_1)  \bar \numdensity(\lambda_2) \mathrm d \lambda_2 = 0,
   }
   \quad \text{(the ensemble wave equation)},
\\
  & \tcboxmath{
  \sum_{n'}
  \mathcal{V}_{n'n}(k\rv_1)g_{n'}  +
    \sum_{n'p}\int_{\mathcal S} \frac{\mathcal I_p (\rv_1)}{k^2 - k^2_p}
 \bar  \numdensity(\lambda_2) \mathrm d \lambda_2
     = 0,
  }  \quad \text{(the ensemble boundary conditions)}
      \label{eqn:general_boundary}
\end{align}
both valid for $\rv_1 \in \reg_1( a_{12})$.
Now it is clear that~\eqref{eqn:ensemble_wave} is independent of both \emph{the region of particles} $\reg_j$ and \emph{the incident field}.  We will show how the effective wavenumbers $k_p$ can be completely determined from~\eqref{eqn:ensemble_wave}. On the other hand, equation~\eqref{eqn:general_boundary} depends on both the region and incident wave, and will lead to a restriction on how to combine the $f_{p,n}(\rv,\lambda)$. Equation~\eqref{eqn:general_boundary} is sometimes called the extinction equation.

Both~\eqref{eqn:ensemble_wave} and~\eqref{eqn:general_boundary} can be further simplified by expanding the fields in terms of orthonormal functions,
which we do in the section below.

%\subsection{The small particle limit and effective properties}

\section{The dispersion equation}\label{sec:dispersion_equations}

The effective wavenumbers $k_p$ and much about the fields $f_{p,n}(\rv_1, \lambda_1)$, can be calculated just from the ensemble wave equation~\eqref{eqn:ensemble_wave}. Depending on the symmetries of $f_{p,n}(\rv_1, \lambda_1)$ we can reach different dispersion equations, the most general of which just assumes that $f_{p,n}(\rv_1, \lambda_1)$ is a smooth field.

\subsection{Effective regular waves}
\label{sec:regular_dispersion}
Here, we determine the effective wavenumbers $k_p$ from the ensemble wave equation~\eqref{eqn:ensemble_wave}. To do so, we use an origin $O$, for our coordinate system, located in $\reg_2(a_{12})$,
% (see Figure~\ref{fig:Geometry average}),
% \todoGK{There is no $\reg_2(a_{12})$ in Figure~\ref{fig:Geometry average}!}
and we expand $f_{p,n}(\rv_1,\lambda_1)$ in a series of regular {spherical} functions of the wave equation in the domain $\reg_2(a_{12})$.
% \deleted[id=GK]{This series is guaranteed to converge inside a sphere centred at the origin and inside the material [citation or reference appendix?].}

This regular series takes the form:
\begin{align}
\label{eqns:fields_spherical_basis}
   & f_{p,n}(\rv_1,\lambda_1) = \sum_{n_1} F_{p,nn_1}(\lambda_1) \mathrm v_{n_1} (k_p\rv_1),
    \\
    & f_{p,n}(\rv+\rv_1,\lambda_2) = \sum_{n_1n_2} F_{p,nn_1}(\lambda_2) {\mathcal V}_{n_1n_2}(k_p\rv) \mathrm v_{n_2} (k_p\rv_1),
\end{align}
where the coefficients $F_{p,nn_1}(\lambda_1)$ are to be determined, and we used the translation matrix ${\mathcal V}_{n_1n_2}(k_p\rv)$ in the second expansion.
{The convergence of this series depends on the behaviour of the expansion coefficients $F_{p,nn_1}(\lambda_1)$, which in turn depend on both the confining geometry and the incident field. At this stage, we assume the series is convergent.}
In Appendix~\ref{sec:Wronskian} we show how substituting the above into the dispersion equation~\eqref{eqn:ensemble_wave} leads to
\begin{equation} \label{eqn:regular_basis}
\sum_{n_1} F_{p,nn_1}(\lambda_1) \mathrm v_{n_1} (k_p\rv_1) + \sum_{n_2} G_{n,n_2}(\lambda_1)\mathrm v_{n_2} (k_p\rv_1)=0,
\end{equation}
where
\begin{equation} \label{eqn:G-simple}
     G_{n,n_2}(\lambda_1) =
      \frac {T_{n}(\lambda_1)}{k^2_p - k^2}  \sum_{n_1 n_3 n'}c_{n n'n_3} c_{n_1 n_2n_3}  \int_{\mathcal S}  a_{12} \mathrm{N}_{\ell_3}(ka_{12},k_p a_{12}) F_{p,n'n_1}(\lambda_2)
     \numdensity(\lambda_2) \diff \lambda_2
\end{equation}
where the $c_{n' n n_1}$ are numbers which are defined in Appendix~\ref{sec:Translation}, and
\begin{equation} \label{eqn:N_ell}
    \mathrm {N}_{\ell}(x,z) = x{\mathrm{h}_\ell^{(1)\prime}}(x)\mathrm{j}_\ell(z) - z\mathrm{h}_\ell^{(1)}(x)\mathrm{j}_\ell'(z).
\end{equation}
% where we note that $\mathrm {N}_\ell(x,x)=\iu/x$.
By using the orthonormal property of spherical harmonics \eqref{eqn:regular_basis} reduces to
\begin{equation} \label{eqn:regular-eigensystem}
    \tcboxmath{
    F_{p,nn_2}(\lambda_1) +  G_{n, n_2}(\lambda_1) = 0.
    }
    \quad \text{(the regular eigen-system)}
\end{equation}
As the above is a linear system of equations for the unknowns $F_{p,n'n_1}(\lambda_1)$, we can rewrite it in the form\footnote{Note we would have to discretise the integral over $\mathcal S$ to reach this determinant equation.}
\begin{equation}
(\mathbf I+\mathbf G) \mathbf F=0 \implies \det (\mathbf I + \mathbf G) = 0,
\label{eqn:regular-dispersion}
\end{equation}
where the second equations holds for a non-zero $\mathbf F$. This determinant equation can be used to find all effective wavenumbers $k_p$ for any geometry. Although~\eqref{eqn:regular-dispersion} contains all possible effective wavenumbers it is computational simpler to solve the planar dispersion equation which also contains all viable effective wavenumbers, as we will show below. Solving~\eqref{eqn:regular-dispersion} can be numerically difficult for two reasons: 1) the roots of \eqref{eqn:regular-dispersion} are multiple roots with different multiplicities, and 2) there are many spurious roots, as discussed in the optional box below.

The plane wave dispersion, and other reduced dispersion equations, are calculated by restricting the form of $f_{p,n}(\rv_1,\lambda_1)$ through the use of symmetry reductions as shown in Section~\ref{sec:symmetry-reductions}.
% If we restrict $f_{p,n}(\rv_1,\lambda_1)$ to being a plane-wave, we can reach a restricted dispersion equation with fewer effective wavenumbers.

\begin{optionalnote}{Spurious wavenumbers $k_p$}
One problem with using~\eqref{eqn:regular-dispersion} to find the wavenumbers $k_p$ is that~\eqref{eqn:regular-dispersion} has spurious roots. That is, it has solutions $k_p$ which are not solutions to the ensemble wave equation~\eqref{eqn:ensemble_wave}. Figure ? shows some of these spurious solutions.

These spurious solutions appear when truncating the index $n_2$ in $F_{p,nn_2}$ and then solving \eqref{eqn:regular-dispersion}. Calculating the eigenvectors $F_{p,nn_2}$ of these spurious solutions $k_p$ would then lead the series in~\eqref{eqns:fields_spherical_basis} to rapidly diverge, which is physically not viable. That is, when truncating the system~\eqref{eqn:regular-eigensystem} for $\ell_2 \leq L_2$, it would also be natural to truncate $\ell_1 \leq L_2$. However, neglecting the terms $F_{p,n' n_1}$ with $\ell_1 > L_2$ on the right hand-side of~\eqref{eqn:regular-eigensystem} is only approximately correct if $|F_{p,n' n_1}|$ is small or at least getting smaller when increasing $\ell_1$. These spurious roots $k_p$ on the other hand lead to $|F_{p,n' n_1}|$ which increase with $\ell_1$.
\end{optionalnote}

\subsection{Effective azimuthal waves}\label{sec:azimuthal-eigensystem}
When both the incident wave and material region share a rotational symmetry around the $z-$axis we can reach a reduced representation of $f_{p,n}(\rv_1,\lambda_1)$ by using~\eqref{eqn:azimuthal-symmetry}. For example, this occurs when the incident wave is $\ui(\rv) = \eu^{\iu k z}$ and the material region is a sphere centred at the origin.

Combining the symmetry~\eqref{eqn:azimuthal-symmetry} with the representation~\eqref{eqns:fields_spherical_basis} leads to the form
\begin{equation} \label{eqn:f-azimuthal}
    f_{p,n}(\rv_1,\lambda_1) = \sum_{\ell_1 \geq |m|} F_{p,n \ell_1}(\lambda_1) \mathrm v_{(\ell_1,-m)} (k_p\rv_1),
\end{equation}
{where $F_{p,n \ell_1} = F_{p,n(\ell_1,-m)}$. In} other words, in terms of the representation \eqref{eqns:fields_spherical_basis} we have that $F_{p,nn_1} = 0$ unless $m_1 = -m$ and $\ell_1 \geq |m|$.

Substituting~\eqref{eqn:f-azimuthal} into~\eqref{eqn:regular-eigensystem} then leads to
\begin{equation} \label{eqn:dispersion-azimuthal}
F_{p,n \ell_2}(\lambda_1)
% + \sum_{n'} \sum_{\ell_1 \geq |m'|} G_{n' n,(\ell_1, -m') (\ell_2, -m)}(\lambda_1) = 0 \quad \text{for every } n, \, \ell_2 \geq |m|,
+ G_{ n, (\ell_2, -m)}(\lambda_1) = 0 \quad \text{for every } n, \, \ell_2 \geq |m|,
\end{equation}
% \todoGK{This seems to be the old version of~\eqref{eqn:regular-eigensystem}. In~\eqref{eqn:regular-eigensystem}, we now have only two indices on $G$. The old version is also used below.}
% \todoAG{Good job finding all the mistakes. Please check that this bits makes sense now.}
which is a restricted version of~\eqref{eqn:regular-dispersion}. Note that for the above, in the sum~\eqref{eqn:G-simple} we should set $m_1 = -m'$ and sum over $\ell_1 \geq |m'|$. Further~\eqref{eqn:regular-eigensystem} also leads to
% $\sum_{n'}\sum_{\ell_1 \geq |m'|} G_{n' n,(\ell_1, -m') n_2}(\lambda_1) = 0$
$G_{ n, n_2}(\lambda_1) = 0$
for every $m_2 \not = -m$, but this is automatically satisfied because  $c_{nn'n_3}c_{(\ell_1,-m') (\ell_2,m_2) n_3} = 0$ when $m_2 \not = -m$.

\subsection{Effective plane-waves}\label{sec:plane_waves}

Here, we restrict $f_{p,n}(\rv_1,\lambda_1)$ by imposing planar symmetry~\eqref{eqn:plane-wave-symmetry}. This will allow us to deduce simpler dispersion equations, as well as deduce reflection and transmission from a plate.

By combining the symmetry~\eqref{eqn:plane-wave-symmetry} with the wave equation~\eqref{eqn:effective_wave_representation} we first conclude that
\begin{equation} \label{eqn:effective-plane-wave}
    f_{p,n}(\rv_1,\lambda_1) = F_{p,n}(\lambda_1) \eu^{\iu \vec k_p \cdot \rv_1},
\end{equation}
where $\vec k_p \cdot \rv_1$ is simply a sum of element wise multiplication without conjugation, even though $\vec k_p$ is a complex vector.

In the appendix~\ref{sec:Effective plane waves} we show that substituting the above into~\eqref{eqn:ensemble_wave} leads to
\begin{equation}  \label{eqn:planewave_eigensystem}
\tcboxmath{
    F_{p,n}(\lambda_1) +  \sum_{n'n_1}  \frac{4 \pi c_{n' nn_1}}{k^2_p - k^2}  \iu^{-\ell_1}  \mathrm Y_{n_1}(\unitvec{k}_p)T_{n}(\lambda_1) \int_{\mathcal S} a_{12} \mathrm{N}_{\ell_1} (k a_{12}, k_p a_{12})F_{p,n'}(\lambda_2)  \bar \numdensity(\lambda_2) \mathrm d \lambda_2 = 0,
}
\end{equation}
and that, when considering only one type of particle, the above  reduces to an equation which is found in much of the literature~\cite{linton_multiple_2006,Fikioris+Waterman1964,Tsang+Kong2001,Mackowski2001,Doicu+Mishchenko2019a,gower_reflection_2018}.

% For example, if we assume that the $f_{p,n}(\rv_1, \lambda_1)$ are plane waves, and the particles are all the same, then \eqref{eqn:ensemble_wave} would recover the dispersion equations \cite[equation (4.20)]{linton_multiple_2006}. Alternatively  \cite[equation (4.5)]{gower_reflection_2018} if the material was two dimensional. See the Section \emph{Effective plane-waves} in the supplementary material for more details.

Note that
\begin{equation}\label{eqn:Complex angles}
    \unitvec{k}_p = (\sin \theta_p \cos \phi_p,\sin \theta_p\sin \phi_p,\cos \theta_p) \quad \text{and} \quad Y_{n_1}(\unitvec{k}_p) = Y_{n_1}(\theta_p,\phi_p),
\end{equation}
and the angles $\theta_p$ and $\phi_p$ can be complex numbers, meaning that we may have $|\unitvec{k}_p| \not = 1${, but we do have that $\unitvec{k}_p\cdot\unitvec{k}_p=1$} for the real inner product.

Equation~\eqref{eqn:planewave_eigensystem} can be turned into a determinant equation, much like~\eqref{eqn:regular-dispersion}, from which we can calculate effective wavenumbers $k_p$:
\begin{equation}
(\mathbf I+\mathbf C) \mathbf F=0 \implies \det (\mathbf I + \mathbf C) = 0.
\label{eqn:plane-wave-dispersion}
\end{equation}
The form of the eigensystem~\eqref{eqn:planewave_eigensystem} seems to suggest that the $k_p$ depend on $\unitvec{k}_p$. However, a direct (though cumbersome) evaluation of the resulting (truncated) determinant system would confirm that $\unitvec{k}_p$ has no contribution. Further, the more general eigensystem~\eqref{eqn:regular-eigensystem} does not depend on $\unitvec{k}_p$. As a sanity check, we can explicitly show that every solution to~\eqref{eqn:planewave_eigensystem} is also a solution to \eqref{eqn:regular-eigensystem}. To achieve this we rewrite the solution~\eqref{eqn:effective-plane-wave} in the form~\eqref{eqns:fields_spherical_basis} by using a plane-wave expansion~\eqref{eqn:plane-wave-expansion}
% \[
% \eu^{\iu \vec{k}_p \cdot \rv_1} = 4 \pi \sum_{n_1}\iu^{\ell_1}(-1)^{m_1} \mathrm v_{n_1}(k_p \rv_1) \mathrm Y_{\ell_1, -m_1}(\unitvec{k}_p),
% \]
% which substituted
in~\eqref{eqn:effective-plane-wave} to obtain:
\begin{equation} \label{eqn:planewave-expansion}
   f_{p,n}(\rv_1,\lambda_1) = \sum_{n_1} F_{p,n n_1}(\lambda_1)  \mathrm v_{n_1}(k_p \rv_1) \quad \text{with} \quad
   F_{p,n n_1}(\lambda_1) = 4 \pi \iu^{\ell_1} (-1)^{m_1} F_{p,n}(\lambda_1) \mathrm Y_{(\ell_1,-m_1)}(\unitvec{k}_p).
   \end{equation}
Because the above is in the form~\eqref{eqns:fields_spherical_basis} and the field $f_{p,n}(\rv_1,\lambda_1)$ satisfies the general dispersion equation~\eqref{eqn:ensemble_wave} (when~\eqref{eqn:planewave_eigensystem} is satisfied) then $F_{p,n n_1}$ and $k_p$ must also satisfy the regular eigensystem~\eqref{eqn:regular-eigensystem}.

As the $k_p$ are independent of $\unitvec{k}_p$, we can choose any $\unitvec{k}_p$ to calculate the $k_p$. We exemplify for a single species: take $\unitvec{k}_p = \unitvec{z}$ so that $Y_{n_1}(\unitvec{k}_p) = \sqrt{2 \ell_1 +1}/\sqrt{4\pi} \delta_{m_1 0}$, then the $k_p$ must satisfy
\begin{align}  \label{eqn:planewave_det}
    & \det ( M_{n n'}(k_p) ) = 0, \quad \text{where}
    \\ \notag
    & M_{n n'}(k_p) = \delta_{n n'} +  \sum_{\ell_1}  \frac{\sqrt{4\pi} c_{n' n (\ell_1,0)}}{k^2_p - k^2} \iu^{-\ell_1} \sqrt{2 \ell_1 +1}  T_{n}  \bar \numdensity a_{12} \mathrm{N}_{\ell_1} (k a_{12}, k_p  a_{12}  ) = 0,
\end{align}
for a single species\footnote{We chose not to show the multi-species version because it would require discretising the integral over $\lambda_2$.}, where $\bar \numdensity = \numdensity (N-1)/N$ and $\numdensity$ is the number density of particles.
This equation can be even further simplified  when there is azimuthal symmetry, as is the case for the incident plane-wave $\eu^{\iu k z}$ and material region $z >0$. In this case we can apply the symmetry~\eqref{eqn:azimuth-planar-symmetry} to ~\eqref{eqn:planewave_det} and reach
\begin{equation} \label{eqn:azi-planewave_det}
\det ( M_{(\ell,0) (\ell',0)}(k_p) ) = 0.
\end{equation}
All the $k_p$ that satisfy~\eqref{eqn:azi-planewave_det} also satisfy~\eqref{eqn:planewave_det}, however, there are solutions to~\eqref{eqn:planewave_det} which do \emph{not} satisfy~\eqref{eqn:azi-planewave_det}, see Figure~\ref{fig:wavenumbers-compare}. That is, it is not possible to excite all effective wavenumbers when considering only direct incidence $\ui(\rv) = \eu^{\iu k z}$. In other words, one type of experiment (one type of incident wave and material geometry) can only excite a portion of all the effective wavenumbers\footnote{Note this does \emph{not} mean there exist wavenumbers $k_p$ that change (continuously) with the angle of incidence $\unitvec{k}_p$. There are, however, solutions ${k}_p$ which are not excited for certain angles of incidence.}.

\subsection{Plane-wave dispersion has all viable effective wavenumbers}
\label{sec:plane-wave-disp-all}

Because the representation~\eqref{eqns:fields_spherical_basis} is more general than a plane-wave representation~\eqref{eqn:effective-plane-wave}, we know that all solutions $k_p$ to the plane-wave dispersion~\eqref{eqn:plane-wave-dispersion} must also satisfy the more general regular dispersion~\eqref{eqn:regular-eigensystem}. There is even an explicit conversion from plane-wave solutions to the regular solutions~\eqref{eqn:planewave-expansion}. However, it is not at all obvious that all viable solutions $k_p$ of the regular dispersion~\eqref{eqn:regular-eigensystem} must satisfy plane-wave dispersion~\eqref{eqn:plane-wave-dispersion}.

To show that all viable wavenumbers $k_p$ that satisfy the regular dispersion~\eqref{eqn:regular-dispersion} must also satisfy the plane-wave dispersion~\eqref{eqn:plane-wave-dispersion}, we need to rewrite the expansion~\eqref{eqns:fields_spherical_basis} in terms of plane-waves. % The expansion of $f_{p,n}(\rv_1,\lambda_1)$ in terms of regular spherical waves is given in~\eqref{eqns:fields_spherical_basis}.
We achieve this by using
\begin{equation}\label{eqn:Reg wave in plane waves}
    \mathrm v_{n_1} (k_p\rv_1)=\frac{1}{4\pi\iu^{\ell_1}}\int_{\Omega_{q}}  \mathrm Y_{n_1}(\unitvec{q})\eu^{\iu  k_p \unitvec{q} \cdot \rv}\,\diff \Omega_{q},
\end{equation}
where $\Omega_q$ is the solid angle of the radial unit vector $\unitvec{q}$.
The above can be verified by using a plane-wave expansion~\eqref{eqn:plane-wave-expansion} for $\eu^{\iu k_p \unitvec{q} \cdot \rv_1}$, to write \begin{equation}
    f_{p,n}(\rv_1,\lambda_1) = \int_{\Omega_{q}} F_{p,n}(\unitvec{q}, \lambda_1)  \eu^{\iu  k_p \unitvec{q} \cdot \rv_1}\,\diff \Omega_{q},
\end{equation}
where
\begin{equation}
     F_{p,n}(\unitvec{q}, \lambda_1) = \frac{1}{4\pi} \sum_{n_1} F_{p,nn_1}(\lambda_1) \iu^{-\ell_1}\mathrm Y_{n_1}(\unitvec{q}).
\end{equation}
Written in this form, $f_{p,n}(\rv_1,\lambda_1)$ is now a superposition of plane waves all with the same wavenumber $k_p$ but with different directions $\unitvec{q}$. Note that the $F_{p,nn_1}$ depend on $k_p$ but are independent of the $\unitvec{q}$. To find a dispersion equation we can repeat the same steps that led to~\eqref{eqn:planewave_eigensystem} to reach:
\begin{multline}  \label{eqn:planewave_eigensystem_integrated}
    \int_{\Omega_{q}} \left [F_{p,n}(\unitvec{q}, \lambda_1) +  \sum_{n'n_1}  \frac{4 \pi c_{n' nn_1}}{k^2_p - k^2} \iu^{-\ell_1}  \mathrm Y_{n_1}(\unitvec{q})T_{n}(\lambda_1) \int_{\mathcal S} a_{12} \mathrm{N}_{\ell_1} (k a_{12}, k_p a_{12})F_{p,n'}(\unitvec{q},\lambda_2)  \bar \numdensity(\lambda_2) \diff \lambda_2 \right ]
    \\
    \times \eu^{\iu k_p \unitvec{q} \cdot \rv_1} \diff \Omega_{q}  = 0.
\end{multline}
As the map $\unitvec{q} \to F_{p,n}(\unitvec{q},\lambda_1)$ is smooth, we can conclude that the integrand in the above is also a smooth function of $\unitvec{q}$, in which case, the above can only be zero for every $\rv_1$ when the integrand is zero for every $\unitvec{q}$. That is, the $k_p$ and $F_{p,n}(\unitvec{q},\lambda_1)$ have to satisfy the plane-wave dispersion equation~\eqref{eqn:planewave_eigensystem}, with the same $k_p$ for every $\unitvec{q}$, which in turn implies that $k_p$ has to satisfy the determinant equation~\eqref{eqn:plane-wave-dispersion}.\footnote{In more detail, let
$\int_{\Omega_{k}} f(\unitvec{k}) \eu^{\iu k \unitvec{k} \cdot \rv} \diff \Omega_{k}  = 0,\forall \rv\in\reg$, and expand $f(\unitvec{k})$ in spherical harmonics, \ie $f(\unitvec{k})=\sum_n f_n{\mathrm Y}_n(\unitvec{k})$.
We assume this relation holds in a ball of radius $R$, centred at the origin.
Then, using the transformation~\eqref{eqn:Reg wave in plane waves}, we obtain
$\sum_n \iu^\ell f_n{\mathrm v}_n(k\rv)=0,\forall\rv\in\reg$.
Orthogonality of the spherical harmonics implies $f_n{\mathrm j}_\ell(kr)=0,r\in[0,R],\forall n$, and $f_n=0$, since the zeros of the spherical Bessel functions are isolated points on the real axis.
}

% \subsubsection{Symmetry reduction}
% \subsubsection{Dispersion equation}
% \subsection{Incident plane-wave on a sphere}
% \subsubsection{Symmetry reduction}
% \subsubsection{Dispersion equation}
% \subsection{The small particle and effective properties}
% \subsection{Numerical results and effective wavenumbers}
%\todoAG{Because of the mistakes, and missing supplementary, I have edited the above and the previous section. Might be worth re-checking.}

\subsection{Effective properties in the long wavelength limit} \label{sec:effective-properties}
By taking the limit where the incident wavelength is long compared to the particle diameter, we can calculate the effective properties directly from any of the dispersion equations. This procedure is explained in detail in~\cite{parnell_multiple_2010,martin_estimating_2010,gower_reflection_2018}.

It is particularly interesting to calculate the effective properties from the regular dispersion equation~\eqref{eqn:regular-dispersion}, because this equation holds for any material geometry, which then gives us confidence that the effective properties are truly properties of the material's microstructure and medium, and not its geometry.
% This, together with considering multiple types of particles is a new result for the effective properties.

Here we calculate the effective properties for spherical particles~\eqref{eqn:T-matrix-acoustics} and acoustics. To achieve this, we need to consider the limit $k \to 0$, starting with the T-matrix coefficients which scale with $k$ in the form
\begin{equation} \label{eqns:T-matrix-low-freq}
    T_{(0,0)}(\lambda_j) \sim \frac{\iu k^3a_j^3 }{3} \Delta \beta_j, \quad
    T_{(1,m)}(\lambda_j) \sim -\frac{\iu k^3a_j^3 }{3} \Delta \rho_j \quad \text{for}\;\; m=-1,0,1,
\end{equation}
and $T_{(\ell,m)}(\lambda_j) \sim 0$ for $\ell > 1$, where
\[
\Delta \rho_j = \frac{\rho - \rho_j}{\rho + 2 \rho_j} \quad \text{and} \quad
\Delta \beta_j = \frac{\beta - \beta_j}{\beta_j},
\]
with $\beta_j = \rho_j \omega^2 / k_j^2$ and $\beta = \rho \omega^2 / k^2$  being the bulk modulus of the $j$-th particle and of the background medium, respectively.
For particles with any shape, there is a similar result for their T-matrix when assuming that they scatter only monopole and dipole waves\cite{Varadan+Varadan1986}.

To facilitate the next steps, we rewrite the effective wavenumber $k_*$ and expand
\[
k_* = \frac{k}{c_*} \sqrt{\frac{\beta}{\rho}}
\quad \text{and} \quad
\mathrm N_{\ell}(k a_{12},k_* a_{12}) \sim \frac{\iu}{k a_{12} c_*^\ell} \left(\frac{\beta}{\rho} \right)^{\ell/2},
\]
where $c_*$ is the constant effective phase speed.

The expansion~\eqref{eqns:T-matrix-low-freq} for the $T_n$ imply that, at leading order in small $k$, only $F_{p,n n_1}$ for $n = (0,0),(1,-1),(1,0),(1,1)$ has a significant contribution, with all other terms being zero. We will also truncate $\ell_1$ in $F_{p,n n_1}$ by assuming $\ell_1 \leq L$ for some $L \geq 2$.

By substituting the above into~\eqref{eqn:regular-dispersion}, and then expanding up to leading order in small $k$, we find \emph{three} possible solutions for the effective phase speed $c_*$. Two are these solutions are non-physical, because they do not satisfy the plane-wave dispersion, as discussed in Section~\ref{sec:plane-wave-disp-all} and at the end of Section~\ref{sec:regular_dispersion}. The only remaining physically viable solution is
\begin{equation} \label{eqn:effective-phasespeed}
\tcboxmath{
c_*^2 = \frac{\beta}{1 + \ensem{ \Delta \beta_j } } \frac{1}{\rho}\frac{1 + 2 \ensem{\Delta \rho_j}  }{1 - \ensem{\Delta \rho_j} },
} \quad \text{(effective phase speed)}
\end{equation}
where, just for this section, we define
\[
\ensem{\Delta \beta_j}  =  \int_{\mathcal S} \Delta\beta_j  \varphi(\lambda_j) d \lambda_j,
\]
and likewise for $\Delta \rho_j$, where $\varphi(\lambda_j)\diff \lambda_j$ is the volume fraction of particles with the properties $\lambda_j$, so that $\int_{\mathcal S}\varphi(\lambda_j)\diff \lambda_j = \varphi$, the total particle volume fraction.
Note that for spheres $
\varphi(\lambda_j) = \frac{4 \pi a_j^3}{3} \numdensity(\lambda_j)$.

% The result~\eqref{eqn:effective-phasespeed} can be proved by induction. Note we take the limit $L \to \infty$ to obtain a solution which is valid for all $n_1$ and $n_2$ in~\eqref{eqn:ensemble_wave}.

By writing $c_*^2 = \beta_*/\rho_*$ we can now identify the effective bulk modulus $\beta_*$ and density $\rho_*$ as
\begin{equation} \label{eqns:effective-properties}
    \tcboxmath{
    \frac{1}{\beta_*} = \frac{1}{\beta} + \frac{\ensem{\Delta \beta_j}}{\beta}
    }
    \quad \text{and} \quad \tcboxmath{\rho_* = \rho \frac{1 -\ensem{\Delta \rho_j}}{1 + 2 \ensem{\Delta \rho}},
    }
    \quad \text{(effective properties)}
\end{equation}
which is in fact the multi-species version of a classical formula~\cite[equation (9)]{ament_sound_1953} (in the absence of viscosity) and many others~\cite{martin_estimating_2010}. When performing this same procedure for the plane-wave dispersion equation~\eqref{eqn:planewave_eigensystem} or for azimuthal symmetry~\eqref{eqn:dispersion-azimuthal} we recover the same effective properties. Also note, that for a single species, with the properties $\beta_1$ and $\rho_1$, we recover the correct limits: when $\varphi \to 0$ we get $\beta_* \to \beta$ and $\rho_* \to \rho$, and when $\varphi \to 1$ we get $\beta_* \to \beta_1$ and $\rho_* \to \rho_1$.

\subsection{Numerical effective wavenumbers}
Here we numerically explore the effective wavenumbers which solve the dispersion equations: \eqref{eqn:regular-dispersion} with azimuthal symmetry~\eqref{eqn:f-azimuthal}, planar symmetry~\eqref{eqn:planewave_det}, and the combined planar with azimuthal symmetry~\eqref{eqn:azi-planewave_det}. The material properties we use are shown in the Table~\ref{tab:particle-properties}.
\begin{table}[h!]
\centering
\begin{tabular}{ l l}
  $\varphi = 30\%$ & \text{(particle volume fraction)}
\\
 $ka_o = \pi/8$ & \text{(non-dimensional particle radius)}
\\
% \label{eqns:strong-particle-properties}
 $\rho_o/\rho = c_o/c = 0.1$ & \text{(void particle properties)}
\\
% \label{eqns:weak-particle-properties}
$\rho_o/\rho = c_o/c = 10.0$ & \text{(solid particle properties)}
\\
\phantom{hrule}
\end{tabular}
\caption{For numerical experiments  we use identical spherical particles with the T-matrix~\eqref{eqn:T-matrix-acoustics} and the material properties given above, where $c = \omega / k$  and $c_o = \omega / k_o$ are the background and particle wave speed, respectively. Note that we give the properties of the particles relative to the background properties. The properties of the void particle are an example of strong scatterers, while that of the solid particle are an example of weak scatterers.
}
\label{tab:particle-properties}
\end{table}

\begin{optionalnote}[label={alg:wavenumbers}]{Algorithm - effective wavenumbers}
A minimal algorithm to calculate the effective wavenumbers $k_p$ for any material region.
\begin{enumerate}
    \item Choose the particle statistics by choosing:
    \begin{enumerate}[label*=\arabic*.]
        \item a wavenumber $k$, where $\omega = c k$ and $c$ is the background wave speed.
        \item the coefficients of the T-matrix $T_n$. One example is given by~\eqref{eqn:T-matrix-acoustics}.
        \item the function $p(\lambda_1)$ detailed in Section~\ref{sec:statistics}. For just one species $p(\lambda_1) = \delta(\lambda_1 - \lambda^*)$.
        % , where $\lambda^*$ potentially includes the particle radius, density, and sound speed.
        \item the exclusion distance $a_{12}$, with $a_{12} = 1.001 (a_1 + a_2)$ being a common choice.
    \end{enumerate}
    \item Choose a truncation $\ell \leq L$, for the $\ell$ in $n = (\ell,m)$ and in $F_{p,nn_1}$ or $F_{p,n}$ based on how the $T_n$ decay.
    % when increasing $\ell$.
    \item Calculate the effective wavenumbers $k_p$ by solving~\eqref{eqn:azi-planewave_det} or \eqref{eqn:planewave_det}.
\end{enumerate}

\end{optionalnote}

In Figures~\ref{fig:wavenumbers-compare}-\ref{fig:weak-wavenumbers} we show the result of using Algorithm~\ref{alg:wavenumbers} above to calculate the different effective wavenumbers when using the properties in Table~\ref{tab:particle-properties}. In both cases, the wavelength $\lambda$ is sixteen times larger than the particle radius. The important messages to take-away from these figure are:
\begin{enumerate}
    \item The more general regular dispersion equation has spurious roots, which are the ones that do not satisfy the planar dispersion, as discussed in Section~\ref{sec:plane-wave-disp-all}.
    \item There can be two effective wavenumbers with lower imaginary part (and are not spurious roots), as shown in Figure~\ref{fig:wavenumbers-compare}. These two will dominate calculate the ensemble average transmission and scattering, as the other wavenumbers will be very difficult to excite. For weaker scatterers there tends to be only one wavenumber with lower imaginary part as shown in Fgiure~\ref{fig:weak-wavenumbers}.
    \item The simpler combined planar-azimuthal dispersion~\eqref{eqn:azi-planewave_det} equation contains the two most important wavenumbers. This seems to hold in general.
\end{enumerate}
We remark that in most cases we find that there is only one wavenumber with a low imaginary part, and knowing this wavenumber is often enough to accurately calculate the ensemble average transmission and scattering. This is exactly what we do in the next sections.

% \begin{figure}[ht!]
%     \centering
%     \includegraphics[width=0.65\linewidth]{}
%     \caption{The forward and backward propagating effective wavenumbers for plane-waves that satisfy~\eqref{eqn:planewave_eigensystem} with the material properties~\eqref{eqns:strong-scatterers} and incident wavenumber $k = 2\pi/\lambda$ with wavelength $\lambda = 16$.  }
%     \label{fig:plane-wavenumbers}
% \end{figure}

\begin{figure}[ht!]
    \centering
    \includegraphics[width=0.7\linewidth]{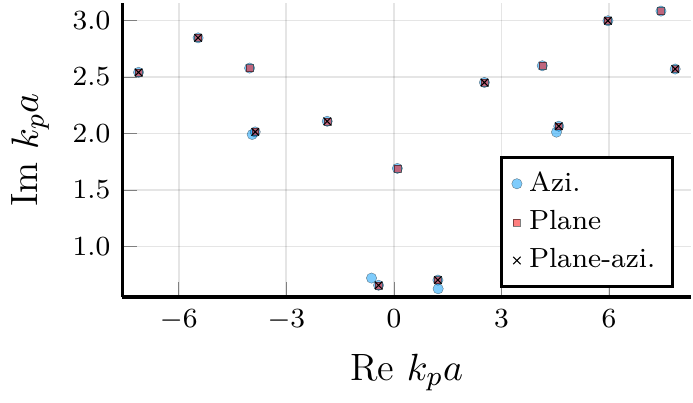}
    \caption{The effective wavenumbers that satisfy azimuthal symmetry~\eqref{eqn:dispersion-azimuthal}, the planar dispersion~\eqref{eqn:planewave_det}, or the combined planar with azimuthal dispersion~\eqref{eqn:azi-planewave_det}. We used the material properties in Table~\ref{tab:particle-properties} for void particles and incident wavenumber times particle radius $k a_o  = \pi / 8$, which means the wavelength is 16 times longer than the particle radius. }
    % \todoGK{I still think the labels on the axes in this figure and the next are too big. Compare with the size of labels in Figure~\ref{fig:cross-section}.}
    \label{fig:wavenumbers-compare}
\end{figure}

\begin{figure}[ht!]
    \centering
    \includegraphics[width=0.7\linewidth]{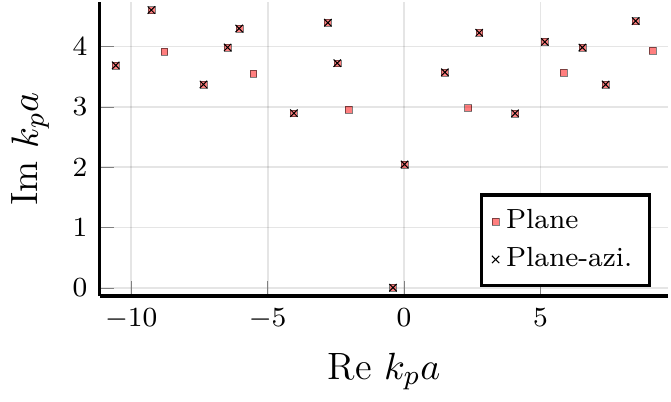}
    \caption{The effective wavenumbers that satisfy the planar dispersion~\eqref{eqn:planewave_det} or the combined planar with azimuthal dispersion~\eqref{eqn:azi-planewave_det}. Note that the wavenumber closest to the $x$-axis will be the most important, as it has a smaller imaginary part. We used the material properties in Table~\ref{tab:particle-properties} for solid particles and incident wavenumber times particle radius $k a_o  = \pi / 8$. }
    \label{fig:weak-wavenumbers}
\end{figure}

\section{Scattering from a sphere filled with particles}

If one effective wavenumber $k_1$ has a significantly smaller imaginary part than the other wavenumbers, e.g.\ Im $k_1 \ll$ Im $k_p$ for $p=2,3, \ldots$, then $\ensem{f_n}(\rv_1,\lambda_1) \approx f_{1,n}(\rv_1,\lambda_1)$, where $f_{1,n}(\rv_1,\lambda_1)$ is the wavemode associated with $k_1$ as shown in the representation~\eqref{eqn:effective_wave_representation}. This occurs for a number of scenarios including: weak scattering, low frequency, or low volume fraction.
In this case, we can explicitly calculate the average scattered and transmitted waves for many different material geometries. To achieve this, for each material geometry, we specialise the average scattered wave~\eqref{eqn:average-field} to the material geometry, then use the average boundary conditions~\eqref{eqn:general_boundary} to restrict the wavemode $f_{1,n}(\rv_1,\lambda_1)$. With the wavemode we can calculate both the average scattered and transmitted wave, although transmission requires some extra steps~\cite{martin_multiple_2011}.

In this section, we calculate the average scattered wave from a sphere filled with particles. To our knowledge, the sphere case has never been analytically calculated in all its details, though there have been approximate methods~\cite{muinonen_coherent_2012} and numerical methods that simulate a large number of configurations~\cite{Mackowski+Mishchenko2011a,Mackowski+Mishchenko2011,Mackowski2001}.

\subsection{The average boundary conditions}
We assume that all particles are confined in a sphere of radius $R$ which implies that the particle origins  $\reg_1 = \{\rv_1 \in \R^3: |\rv_1|\leq R - a_1\}$, and let the centre of the sphere  be the origin of the coordinate system for $\rv_1$.
Choosing a simple geometry  allows us to explicitly calculate the average boundary conditions \eqref{eqn:Ireg}. Assume that the $k_p$ have been determined from~\eqref{eqn:planewave_det} and that the $F_{p,n n_1}$, up to a multiplying constant $\alpha_p$, have been determined from~\eqref{eqn:regular-dispersion}. The results below can be used to completely  determine, or just restrict, the $\alpha_p$.

Let $\bar R_2 = R - a_2$, then

\begin{multline}
     \mathcal I_p (\rv_1) = \sum_{n_1}\alpha_p F_{p,n'n_1}(\lambda_2)
     \int_{r = \bar R_2}  \mathcal{U}_{n'n}(k\rv_1 - k\rv_2)
     \frac{\partial   \mathrm v_{n_1} (k_p\rv_2)}{\partial \boldsymbol\nu_2}
     - \frac{\partial \mathcal{U}_{n'n}(k\rv_1 - k\rv_2) }{\partial \boldsymbol\nu_2}  \mathrm v_{n_1} (k_p\rv_2) \mathrm  dA_2\\
     = - \bar R_2 \sum_{n_1 n_2} \alpha_p F_{p,n'n_1}(\lambda_2) c_{n'n n_2}\mathcal{V}_{n_2,(l_1,-m_1)}(k\rv_1)
    {\mathrm N}_{l_1}(k\bar R_2,k_p\bar R_2)(-1)^{l_1+m_1}
    \\
      = \sum_{n_1n_2}\bar R_2 {\mathrm N}_{l_1}(k\bar R_2, k_p\bar R_2) \alpha_p F_{p,n'n_1}(\lambda_2)B_{n n_2, n' n_1} \mathrm v_{n_2} (k\rv_1) \quad \text{for}\;\; |\rv_1| \leq \bar R_2 - a_{12},
\end{multline}
where we used the translation matrices~\eqref{eq:Translation matrix-scalar} to write $\mathcal{U}_{n'n}(k \rv_1-k \rv_2) = \sum_{n_1} c_{n' n n_1} \mathrm u_{n_1}(k \rv_1-k \rv_2) = \sum_{n_1 n_2} c_{n' n n_1} \mathcal{V}_{n_1 n_2}(k \rv_1) \mathrm u_{n_2}(-k \rv_2)$, which relied on $|\rv_1| < \bar R_2 - a_{12} < |\rv_2| = \bar R_2$, we then used the orthogonality of the spherical harmonics to resolve the integral, defined
\begin{equation}
     B_{n n_2, n' n_1}= - \sum_{n_3}c_{n'nn_3}c_{n_2n_1n_3},
\end{equation}
and used $c_{n_3(\ell_1,-m_1)n_2} = c_{n_2n_1n_3} (-1)^{\ell_1 + m_1}$.
Substituting the above into~\eqref{eqn:general_boundary} leads to
\begin{equation} \label{eqn:ExtinctionSphere}
\tcboxmath{
         \mathbf B  \sum_p \alpha_p \vec F_p =  \mathbf L  \vec g,
        } \quad \text{(boundary conditions for a sphere)}
\end{equation}
where the components of $\vec g$ are $g_n$ from the expansion of the incident wave~\eqref{eqn:incident-spherical}, the $\vec F_p$ are vectors with components
\begin{equation}
    (\vec F_p)_{n'n_1} =  \int_{\mathcal S}\frac{R - a_2}{k^2 - k^2_p} {\mathrm N}_{l_1}(k R - k a_2,k_pR-k_pa_2) F_{p,n'n_1}(\lambda_2)  \numdensity(\lambda_2) \mathrm d \lambda_2,
\end{equation}
and again we used the translation matrices \eqref{eq:Translation matrix-scalar} and the orthogonality of the spherical harmonics. The vector
$\vec a$ has the components $(\vec a)_{n'} = a_{n'}$, and
$\mathbf L$ and $\mathbf B$ are matrices with components
\begin{equation}
    (\mathbf L)_{n n_2, n'} = - c_{n'nn_2} \quad \text{and} \quad
    (\mathbf B)_{n n_2, n' n_1}  =  - \sum_{n_3}c_{n'nn_3}c_{n_2n_1n_3}.
\end{equation}

\begin{optionalnote}{Sparse boundary conditions}
We find that numerically~\eqref{eqn:ExtinctionSphere} has a unique solution for the $\alpha_p$ when using only one effective wavenumber $k_1$. This occurs even though~\eqref{eqn:ExtinctionSphere} often has far more equations than unknowns $\alpha_p$. That is, the number of equations is larger than the number of effective eigenvectors $P$.
% is of the order of $L_1^2$, whereas the number of equations is $(L+1)^2(L_2+1)^2$.
Nonetheless, both the left and right-hand side are sparse, and mostly filled with zeros. This indicates that it is possible to transform~\eqref{eqn:ExtinctionSphere} into a smaller equivalent system to determine the $\alpha_p$.
% Though in this work, we will use this equation as is and leave future refinement for future papers.
\end{optionalnote}

\subsection{Average scattered field}
Assume the particles are confined in a spherical region of radius $R$. For the particles to fit in this region the particle origins need to be contained within $\reg_1 = \{\rv_1 \in \R^3: |\rv_1|\leq R - a_1\}$, where we let the centre of the sphere be the origin of the coordinate system for $\rv_1$.
In this case, by taking $|\rv| > R$ we can use the wave representation~\eqref{eqn:effective_wave_representation}, Green's second identity, and then~\eqref{eqns:fields_spherical_basis}, to reduce the average scattered wave~\eqref{eqn:average-field}:
\begin{align*}
    & \sum_{n p} \int_{\reg_1} f_{p,n}(\rv_1,\lambda_1) \mathrm u_n (k  \rv - k \rv_1) \mathrm d \rv_1
    \\
    & =  \sum_{n p} \frac{1}{k^2 -k^2_p} \int_{\partial\reg_1}
    \left [ \frac{\partial f_{p,n}(\rv_1,\lambda_1)}{\partial r_1} \mathrm u_n (k  \rv - k \rv_1) -  f_{p,n}(\rv_1,\lambda_1) \frac{\partial \mathrm u_n (k  \rv - k \rv_1) }{\partial r_1} \right] (R-a_1)^2  \mathrm d \Omega_1
    \\
    % & =  \sum_{n p} \frac{1}{k^2 -k^2_p} \int_{\partial\reg_1}
    % \Big [ \sum_{n_1 n' n_2} F_{p,n n_1}(\lambda_1) c_{n'nn_2} \mathrm u_{n'}(k\rv) \mathrm v_{n_2}^*(k^* \rv_1) \frac{\partial \mathrm v _{n_1}(k_p \rv_1)}{\partial r_1}
    % \\
    % & -  \sum_{n_1 n' n_2} F_{p,n n_1}(\lambda_1) c_{n'nn_2} \mathrm u_{n'}(k\rv) \mathrm v _{n_1}(k_p \rv_1) \frac{\partial \mathrm v_{n_2}^*(k^* \rv_1) }{\partial r_1} \Big] (R-a_1)^2  \mathrm d \Omega_1
    % \\
    % & =  \sum_{n p} \frac{1}{k^2 -k^2_p} \sum_{n_1 n' n_2} F_{p,n n_1}(\lambda_1) c_{n'nn_2} \mathrm u_{n'}(k\rv) \\
    % &\times\int_{\partial\reg_1}
    % \mathrm Y_{n_1}(\unitvec x_1) \mathrm Y_{n_2}^*(\unitvec x_1)
    % \Big [
    % k_p \mathrm j_{n_2}(k r_1) \mathrm j_{n_1}'(k_p r_1)
    %  - k \mathrm j_{n_1}(k_p r_1)  \mathrm j_{n_2}'(k r_1)   \Big] (R-a_1)^2  \mathrm d \Omega_1
      & =  \sum_{n'}  \mathrm  u_{n'}(k\rv)  \mathfrak F_{n'}(\lambda_1)
    \end{align*}
where we used~(\ref{eq:translation_spherical_waves}-\ref{eq:Translation matrix-scalar}) to write $\mathrm u_n(k \rv - k \rv_1) = \sum_{n' n_2} c_{n'nn_2}\mathrm v_{n_2}^*(k^* \rv_1) \mathrm u_{n'}(k\rv)$ followed by the orthogonality of the spherical harmonics, and we defined
\begin{equation} \label{eqn:region-scattering-coefficients}
    \mathfrak F_{n'}(\lambda_1) =  \sum_{n p n_1} \frac{R-a_1}{k_p^2 -k^2}  c_{n'nn_1} F_{p,n n_1}(\lambda_1)
    \mathrm M_{\ell_1} (k (R-a_1), k_p (R-a_1)),
\end{equation}
 and $\mathrm M_{\ell} (x, y) = x \mathrm j_{\ell}'(x) \mathrm j_{\ell}(y) - y \mathrm j_{\ell}(x) \mathrm j_{\ell}'(y)$.
The $\mathfrak F_{n'}(\lambda_1)$ are the scattering coefficients of the whole spherical region~\eqref{eqn:total-scattered-field}. In conclusion, substituting the above into~\eqref{eqn:average-field} leads to
\begin{equation} \label{eqn:average-field-sphere}
\tcboxmath{
    \ensem{\us(\rv)} =  \sum_{n'} \mathrm u_{n'}(k \rv) \int_{\mathcal S} \mathfrak F_{n'}(\lambda_1) \numdensity(\lambda_1) d \lambda_1 , \quad r> R,
    } \quad \text{(average scattered field)}
\end{equation}
where we averaged over particle rotations.

To help piece together the equations, we provide Algorithm below to calculate the average scattered wave above when only one effective wavenumber $k_1$ has a smaller imaginary part than the others as shown in Figure~\ref{fig:weak-wavenumbers}.

\begin{optionalnote}[label={alg:sphere-scattering}]{Algorithm - scattering from a sphere filled with particles.}
This algorithm assumes one effective wavenumber has a imaginary part significantly smaller than the others.
\begin{enumerate}
    \item Use Algorithm~\ref{alg:wavenumbers} to calculate the wavenumber $k_1$ with smallest imaginary part.
    % \footnote{In our tests, the $k_1$ with smallest imaginary part is a solution to both determinant equations.}.
    \item \label{it:L1} Choose a truncation $\ell_1 \leq L_1$ for the $n_1$ in $F_{p,nn_1}$ based on how the $M_{\ell_1}$ in~\eqref{eqn:total-scattered-field} decay. The range of all other indices can now be determined from these truncations and the properties~\eqref{eqns:gaunt_indices}.
    \item Calculate the multiple eigenvectors $F_{p,nn_1}(\lambda_1)$ of $k_1$ by solving~\eqref{eqn:regular-dispersion}. From~\eqref{eqn:effective_wave_representation} and \eqref{eqns:fields_spherical_basis}
    we now have that $\ensem{f_n}(\rv_1,\lambda_1) = \sum_{n_1} \mathrm v_{n_1} (k_1\rv_1) \sum_{p=1}^P \alpha_p F_{p,nn_1}(\lambda_1)$, where the $\alpha_p$ need to be determined. \item Solve \eqref{eqn:ExtinctionSphere} by:
    \begin{enumerate}[label*=\arabic*.]
    \item setting\footnote{For azimuthal symmetry this becomes $L' = P - 1$} $L' = \sqrt{P} - 1$ where $\ell' \leq L'$ in  $g_{n'}$. This causes the number of unknowns $\alpha_p$ to be equal to the number of coefficients $g_n'$, which leads to a unique solution $\alpha_p$.
    \item setting $\mathbf A = \mathbf B \mathbf F$,  where $\mathbf F \vec \alpha = \sum_{p=1}^P  \mathbf F_p \alpha_p$, then $\vec \alpha = \mathbf A^+ \mathbf L  \vec g$ where $(\vec \alpha)_p = \alpha_p$
    % $\mathbf F = (\vec F_1 | \vec F_2 | ... | \vec F_P)$,
    and $\mathbf A^+ = ((\mathbf A^T)^* \mathbf A)^{-1} (\mathbf A^T)^*$ is the pseudo-inverse.
    % , with $\mathbf A^H = (\mathbf A^T)^*$,
    % This method assumes that $\mathbf A$ has more rows than columns and is full rank.
    \end{enumerate}
    \item Finally, calculate the scattering coefficients of the whole spherical region~\eqref{eqn:region-scattering-coefficients}.
\end{enumerate}

% We note that for there to exist a unique $\vec \alpha$ solution, then $P$, the number of coefficients $\alpha_p$, has to be smaller or equal to the rank of $\vec B$. Note that $\vec B$ is not full rank: assume $\ell' \leq L$ and $\ell_1 \leq 2L$ in $\vec B$, then for $L = 1$ rank$(\mathbf B) = 4$ and for $L = 2$ rank$(\mathbf B) = 25$.

{\bf Restriction to azimuthal symmetry.} Substitute: $F_{p,n'n_1} = \delta_{m_1,-m'} \delta_{\ell_1 \geq |m'|} F_{p,n'\ell_1}$, to satisfy~\eqref{eqn:azimuthal-symmetry}, and $a_{n'} = \delta_{m',0} a_{\ell'}$, to make the incident wave satisfy azimuthal symmetry. For an incident plane wave impinging in the $\unitvec{k}$ direction $a_{n'} = 4 \pi \iu^{\ell'} Y_{n'}^*(\unitvec{k})$. When solving \eqref{eqn:ExtinctionSphere} take $m_2 = -m$ and only evaluate $\ell_2 \geq |m|$.
% and then by using~\eqref{eqn:spherical-linearisation} we obtain
% \[
% \sum_{n'} c_{n'nn_2}a_{n'} = 4 \pi \sum_{n'} c_{n'nn_2} \iu^{\ell'} Y_{n'}^*(\unitvec{k}) = 4 \pi \delta_{m,0} \delta_{m_2,0} \iu^{\ell_2 + \ell}  \mathrm Y_{n}(\unitvec{k}) \mathrm Y_{n_2}(\unitvec{k})
% \]
\end{optionalnote}

\section{A plate filled with particles}

In this section, we calculate the average reflected wave from a plate region $\reg=\{\rv\in\R^3: Z_1\leq z \leq Z_2\}$ filled with particles, and the average transmitted wave that passes through to the other side of the plate. In this case, the particle origins are confined to the region $\reg_1 = \{\rv_1 \in\R^3: Z_1 + a_1 \leq z_1 \leq Z_2 - a_1\}$, where $Z_2$ needs to be large enough so that $ Z_1 + a_1 + a_{12} < Z_2 - a_1 - a_{12}$ for every particle radius $a_1$ and minimum inter-particle distance $a_{12}$.
We now assume that the incident wave is a plane wave with wave-vector $\vec k = (k_x,k_y,k_z)$, where $k_z>0$ (the incident wave impinges the plate from below). Further, by planar symmetry~\eqref{eqn:plane-wave-symmetry} we can also assume that ${k_p}_x = k_x$ and ${k_p}_y = k_y$.

\subsection{Average transmission}
We start with the transmitted field using the wave representation~\eqref{eqn:total-field}, \eqref{eqn:average-field}, \eqref{eqn:effective_wave_representation}, and~\eqref{eqn:effective-plane-wave}. By assuming that $z > Z_2$, which is the side of the plate where the transmitted wave will appear, we can use Green's second identity, as we did in Section~\ref{sec:wave-decomposition}, to reduce the average (total) transmitted wave:
\begin{equation} \label{eqn:average-transmitted field}
    \ensem{u(\rv)}=\eu^{\iu \vec{k}\cdot\rv}+\ensem{\us(\rv)} =
    \eu^{\iu \vec{k}\cdot\rv} + \sum_{n,p} \int_{\mathcal S} \numdensity (\lambda_1) F_{p,n}(\lambda_1)
    \int_{\reg_1} \eu^{\iu\vec{k}_p\cdot\rv_1} \mathrm u_n (k \rv - k \rv_1) \,\diff\rv_1 \diff\lambda_1.
\end{equation}
By using Green's second identity we can reduce the integral in $\rv_1$ to surface integrals:
\begin{equation} \label{eqn:greens-identity-plate}
    \int_{\reg_1}
     \eu^{\iu \vec k_p \cdot \rv_1} \mathrm u_n (k \rv - k \rv_1)  \mathrm d \rv_1 =
     \frac{\eu^{\iu \vec k_p \cdot \vec x}}{k^2 - k_p^2} (-1)^\ell \left [ L_n(Z_2 - a_1 -z) - L_n(Z_1 + a_1 - z) \right ],
\end{equation}
where we used a change of integration variable from $\rv_1 \mapsto (\rv_1 - \rv)$,  $\mathrm u_n ( - k  \rv_1) = \mathrm u_n (k  \rv_1) (-1)^\ell $, and the definition of $L_n$ which is given by~\eqref{def:L}.
% \begin{equation}
%     \ensem{u(\rv)} = \eu^{\iu \vec{k}\cdot\rv} + \sum_{p n} \int_{\mathcal S} \numdensity (\lambda_1) \frac{(-1)^\ell F_{p,n}(\lambda_1)}{k^2 - k_p^2} \left[
%       L_n(Z_2 - z - a_1) - L_n(Z_1 -z + a_1)
%      \right]
%      \diff\lambda_1, \quad z>Z_2,
% \end{equation}
As both $Z_2 - a_1 -z$ and $Z_1 + a_1 -z$ are negative real numbers, we can use the result~\eqref{eqn:L-integrated-symmetry} to evaluate the above and reach \begin{equation} \label{eqn:average-transmitted field-2}
\tcboxmath{
    \ensem{u(\rv)}=
    T_\text{plate} \eu^{\iu \vec{k}\cdot\rv},
} \quad \text{(average transmitted field)}
\end{equation}
where
\[
T_\text{plate} = 1 + \sum_{n,p} \int_{\mathcal S} \numdensity (\lambda_1) F_{p,n}(\lambda_1)
    \frac{(-1)^\ell}{k_z - k_{p_z}}  \frac{2 \pi  \iu^{\ell + 1}}{ k k_z}   \mathrm Y_n(\unitvec{k})
    \left [ \eu^{\iu (k_{p_z} - k_z)(Z_2 - a_1) }  - \eu^{\iu (k_{p_z} - k_z)(Z_1 + a_1) }  \right ] \diff\lambda_1,
\]
and $\vec k_p = (k_x,k_y,k_{p_z})$ due to planar symmetry~\eqref{eqn:plane-wave-symmetry}, for some complex number $k_{p_z}$, and we used $k^2 - k_p^2 = k_z^2 - k_{p_z}^2$.

\subsection{Average reflection}
% The calculations of the reflected field follows the same analysis as the derivation of the transmitted field, but differs in the specific directions of the plane waves.
The average scattered wave is again given by~\eqref{eqn:average-field}, which  by substituting the effective wave representation
\eqref{eqn:effective_wave_representation} and \eqref{eqn:effective-plane-wave}  leads to
\begin{equation} \label{eqn:average-reflected}
    \ensem{\us(\rv)} =
     \sum_{n,p}  \int_{\mathcal S}  \numdensity(\lambda_1) F_{p,n}(\lambda_1) \int_{\reg_1}
     \eu^{\iu \vec k_p \cdot \rv_1} \mathrm u_n (k \rv - k \rv_1)  \mathrm d \rv_1 \mathrm d  \lambda_1, \quad \text{for} \quad z < Z_1.
\end{equation}
By using Green's second identity we can reduce the integral over $\reg_1$ as done in~\eqref{eqn:greens-identity-plate}. As the plate is thicker than any one particle, we then have that $Z_2 - a_1 - z >0$ and $Z_1 +a_1 -z >0$ for $z < Z_1$, which allows us to pick the positive argument in~\eqref{eqn:L-integrated}, evaluate~\eqref{eqn:greens-identity-plate}, and reduce~\eqref{eqn:average-reflected} to
\begin{equation}
\tcboxmath{
\ensem{\us(\rv)} =
     R_\text{plate}\eu^{\iu \vec k_{\mathrm{ref}} \cdot \vec x},
     } \quad \text{(average reflected field)}
\end{equation}
where $R_\text{plate}$ is the reflection coefficient:
\begin{equation}
R_\text{plate} =
     \sum_{n,p} \frac{2 \pi \iu^{\ell-1}}{k_z + k_{p_z}}  \frac{\mathrm Y_n(\unitvec k_{\mathrm{ref}})}{k k_z}  \int_{\mathcal S}  \numdensity(\lambda_1) F_{p,n}(\lambda_1)  \left [
      \eu^{\iu (k_{p_z} + k_z) (Z_2 - a_1)}
     - \eu^{\iu (k_{p_z} + k_z) (Z_1 + a_1)} \right ] \mathrm d  \lambda_1,
\end{equation}
% \quad \text{for} \quad z < Z_1.
we define $\vec k_{\mathrm{ref}} = (k_x,k_y,-k_z)$, used $\vec k_p = (k_x,k_y,k_{p_z})$ due to planar symmetry~\eqref{eqn:plane-wave-symmetry}, and that $k^2 - k_p^2 = k_z^2 - k_{p_z}^2$.

\subsection{The average boundary conditions}
When using only one effective wavenumber $k_1$ the equation~\eqref{eqn:general_boundary} can be used to fully determine the field~\eqref{eqn:effective-plane-wave}, like a boundary condition. Note we can use planar symmetry~\eqref{eqn:planar-symmetry} and the form~\eqref{eqn:effective-plane-wave} because both the incident wave and material region share a planar symmetry.

The first step is to simplify \eqref{eqn:Ireg}:
\begin{multline}
    \mathcal I_p (\rv_1) = F_{p,n'}(\lambda_2)\int_{z_2=Z_2-a_2} \mathcal{U}_{n'n}(k\rv_1 - k\rv_2) \frac{\partial \eu^{\iu\vec{k}_p\cdot\rv_2}}{\partial z_2}
     - \frac{\partial \mathcal{U}_{n'n}(k\rv_1 - k\rv_2) }{\partial z_2} \eu^{\iu\vec{k}_p\cdot\rv_2} \,\diff x_2\diff y_2
     \\
      -F_{p,n'}(\lambda_2)\int_{z_2=Z_1+a_2}  \mathcal{U}_{n'n}(k\rv_1 - k\rv_2) \frac{\partial \eu^{\iu\vec{k}_p\cdot\rv_2}}{\partial z_2}
     - \frac{\partial \mathcal{U}_{n'n}(k\rv_1 - k\rv_2) }{\partial z_2} \eu^{\iu\vec{k}_p\cdot\rv_2} \,\diff x_2\diff y_2,\quad\rv_1\in\reg_1(a_{12})
     \end{multline}
To explicitly calculate the above integrals, we use the translation matrices in Appendix~\ref{sec:Translation}, followed by changing the integration variable to $\rv=\rv_2-\rv_1$ and then using the definition~\eqref{def:L} to obtain
\begin{equation} \label{eqn:Ip-rewrite}
    \mathcal I_p (\rv_1) = F_{p,n'}(\lambda_2)\eu^{\iu\vec{k}_p\cdot\rv_1}\sum_{n_1}(-1)^{\ell_1}c_{n'nn_1} \left[L_{n_1}(Z_2 - a_2 - z_1)
     -  L_{n_1}(Z_1 + a_2 - z_1) \right],
\end{equation}
where
% $Z_0=Z_2 - z_1 > 0$ and  $z_0=Z_1 - z_1<0$.
% The
factor $(-1)^{\ell_1}$ appeared when substituting $\mathrm u_{n_1}(k\rv_1 -k\rv_2) = (-1)^{\ell_1} \mathrm u_{n_1}(k\rv_2 -k\rv_1)$ in the  integrals.

We can use the formula~\eqref{eqn:L-integrated-symmetry} to easily calculate $L_{n_1}$ by noting that $Z_2 - a_2 - z_1 > 0$ and $Z_1 + a_2 - z_1 < 0$. These inequalities are a result of $\rv_1\in\reg_1(a_{12})$ which implies that $Z_1 + a_1 + a_{12} \leq z_1 \leq Z_2 - a_1 - a_{12}$. Substituting~\eqref{eqn:L-integrated-symmetry} into \eqref{eqn:Ip-rewrite} then leads to
\begin{multline} \label{eqn:Ip-plate}
    \mathcal I_p (\rv_1) = F_{p,n'}(\lambda_2)\eu^{\iu\vec{k}_p\cdot\rv_1}\sum_{n_1}\iu^{\ell_1} c_{n_1 (\ell,-m) n'} \mathrm{Y}_{n_1}(\unitvec{k})  \times
    \\
    \frac{2 \pi \iu}{k k_z} (-1)^{\ell +m}  \left[ (-1)^{m' - m}({k_p}_z-k_z)
    \eu^{\iu ({k_p}_z + k_z) (Z_2 - a_2 -z_1)}
     -  (-1)^{\ell' + \ell} ({k_p}_z + k_z)
    \eu^{\iu ({k_p}_z - k_z) (Z_1 + a_2 - z_1)} \right].
\end{multline}
where we replaced $(-1)^{m_1} =(-1)^{m' - m}$, $(-1)^{\ell_1} = (-1)^{\ell' + \ell}$, and $c_{n'nn_1} = (-1)^{\ell +m} c_{n_1 (\ell,-m) n'}$ by using the properties of $c_{n'nn_1}$ shown in Appendix~\ref{sec:Translation}.
These replacements allow us to simplify~\eqref{eqn:Ip-plate} by applying the contraction rule~\eqref{eqn:spherical-linearisation} and some rearrangement to reach:
% \begin{multline} \label{eqn:Ip-plate-final}
%     \mathcal I_p (\rv_1) = F_{p,n'}(\lambda_2)
%      \mathrm Y_{n'}(\unitvec{k}) \mathrm Y_{(\ell,-m)}(\unitvec{k}) \iu^{\ell'+\ell}  \frac{8 \pi^2 \iu}{k k_z}
%     \times
%     \\
%       \left[  (-1)^{ m'+ \ell}({k_p}_z-k_z)
%     \eu^{\iu(k_x x_1 + k_y y_1 - k_{z} z_1)} \eu^{\iu ({k_p}_z + k_z) (Z_2 - a_2)}
%      -  (-1)^{\ell' +m} ({k_p}_z + k_z)
%     \eu^{\iu \vec k \cdot \vec r_1} \eu^{\iu ({k_p}_z - k_z) (Z_1 + a_2 )} \right],
% \end{multline}
\begin{equation} \label{eqn:Ip-plate-2}
    \mathcal I_p (\rv_1) = 4 \pi \iu^\ell \mathrm Y_{(\ell,-m)}(\unitvec{k}) \left [ (-1)^{\ell} I_2  \eu^{\iu (k_x,k_y,-k_z) \cdot \vec r_1} + (-1)^m I_1  \eu^{\iu \vec k \cdot \vec r_1} \right],
\end{equation}
where
\begin{equation} \label{eqn:I_T-I_R}
   \left\{\begin{aligned}
     & I_1 =  - F_{p,n'}(\lambda_2)
     \mathrm Y_{n'}(\unitvec{k})  \iu^{\ell'}  \frac{2 \pi \iu}{k k_z}  (-1)^{\ell'} ({k_p}_z + k_z)
     \eu^{\iu ({k_p}_z - k_z) (Z_1 + a_2 )},
     \\
   & I_2 =  F_{p,n'}(\lambda_2)
     \mathrm Y_{n'}(\unitvec{k})  \iu^{\ell'}  \frac{2 \pi \iu}{k k_z} (-1)^{ m'}({k_p}_z-k_z)
     \eu^{\iu ({k_p}_z + k_z) (Z_2 - a_2)},
\end{aligned}\right.
\end{equation}
and we also used ${k_p}_x = k_x$ and ${k_p}_y = k_y$.
% \todoGK{I would prefer that the subscripts $T$ and $R$ are swopped. The first has to do with transmission ($T$) and the latter with reflection ($R$). }
% \todoAG{I think I see what you mean. But as you'll see below, the $I_R$ term is related to the first boundary of the plate (the one that "generates" the reflected wave), whereas $I_T$ is related to the boundary where the transmitted wave is "emitted".}
Substituting the above into~\eqref{eqn:general_boundary} leads to
\begin{equation} \label{eqn:plate-boundary-conditions}
    (-1)^{m} \eu^{\iu \vec k \cdot \vec r_1} \left[ 1  +
        \sum_{n'p}\int_{\mathcal S} \frac{I_1  \bar  \numdensity(\lambda_2)}{k^2 - k^2_p}
     \mathrm d \lambda_2
     \right]
        +
        \eu^{\iu (k_x,k_y,-k_z) \cdot \vec r_1} (-1)^{\ell} \sum_{n'p}\int_{\mathcal S} \frac{I_2 \bar  \numdensity(\lambda_2)}{k^2 - k^2_p} \mathrm d \lambda_2
         = 0,
\end{equation}
where we used that
\[
\sum_{n'}\mathcal{V}_{n'n}(k\rv_1)g_{n'} = 4 \pi \iu^{\ell} (-1)^{m} \eu^{\iu \vec k \cdot \vec r_1}  \mathrm Y_{\ell -m}(\unitvec k),
\]
which holds for incident plane waves with the coefficients~\eqref{eq:a_n def} and can be shown by using {\eqref{eqn:plane-wave-expansion}} and the contraction rule~\eqref{eqn:spherical-linearisation}.

For {\eqref{eqn:plate-boundary-conditions}} to hold for every $\vec r_1 \in \reg_1(a_{12})$ leads to two equations: one for the term multiplying $\eu^{\iu (k_x,k_y,-k_z) \cdot \vec r_1}$ and another for the terms multiplying $\eu^{\iu \vec k \cdot \vec r_1}$. These two equations can be written in the form:
\begin{align}
&  \label{eqn:reflection-boundary}
\tcboxmath{
     1  + \sum_{n'p}\int_{\mathcal S} \frac{I_1  \bar  \numdensity(\lambda_2)}{k^2 - k^2_p} \mathrm d \lambda_2
         = 0,
}
\quad \text{(surface $Z_1$)}
\\
& \label{eqn:transmission-boundary}
\tcboxmath{
    \sum_{n'p}\int_{\mathcal S} \frac{I_2 \bar  \numdensity(\lambda_2)}{k^2 - k^2_p}  \mathrm d \lambda_2 = 0,
}
\quad \text{(surface $Z_2$)}
\end{align}
where $I_1$ and $I_2$ are given in~\eqref{eqn:I_T-I_R}.

For a finite plate, both \eqref{eqn:reflection-boundary} and \eqref{eqn:transmission-boundary} need to be enforced to restrict the $F_{p,n}(\lambda_1)$. If the sum over $p$ has only two terms, using one a forward propagating and the other a backward propagating mode, then these equations can be used to obtain a unique solution for the $F_{p,n}$. This typically occurs when using only one effective wavenumber $k_1$. For reflection from a halfspace, only \eqref{eqn:reflection-boundary} should be enforced, which is the multi-species three dimensional version of \cite[Equation (20)]{martin_multiple_2011}.

\section{Numerical results: plane-wave incident on a particulate sphere}
\label{sec:numerical-results}

This paper is the first, to our knowledge, to provide analytic solutions for the average wave scattered from particles within a spherical region $\reg$ as given by \eqref{eqn:ExtinctionSphere} and \eqref{eqn:average-field-sphere}. The methods used previously\cite{Mackowski2001,muinonen_coherent_2012} have  approximated the scattered field by assuming that the ensemble averaged sphere behaves like a homogeneous sphere occupying the region $\reg$ with some effective properties.
% given by~\eqref{eqn:planewave_eigensystem} or a similar dispersion equation.
For these reasons, in this section we numerically compare these approaches.

For all the results below, we avoid combining a high particle volume fraction with a high frequency, as this regime triggers multiple wavenumbers with low imaginary parts, as shown in Figure~\ref{fig:wavenumbers-compare}.
Whereas the calculations below rely on equation~\eqref{eqn:ExtinctionSphere} giving a unique solution, which only occurs when using just one effective wavenumber.
Using only one wavenumber is {an} excellent approximation when its imaginary part is much smaller than all the other wavenumbers, as shown in Figure~\ref{fig:weak-wavenumbers}.
See~\cite{gower2019proof,gower2019multiple} for details on how to calculate reflection and transmission when multiple effective wavenumbers have a small imaginary part.

\begin{figure}[ht]
    \centering
    \begin{tikzpicture}
        \draw (0, 0) node[inner sep=0] {\includegraphics[scale=0.45]{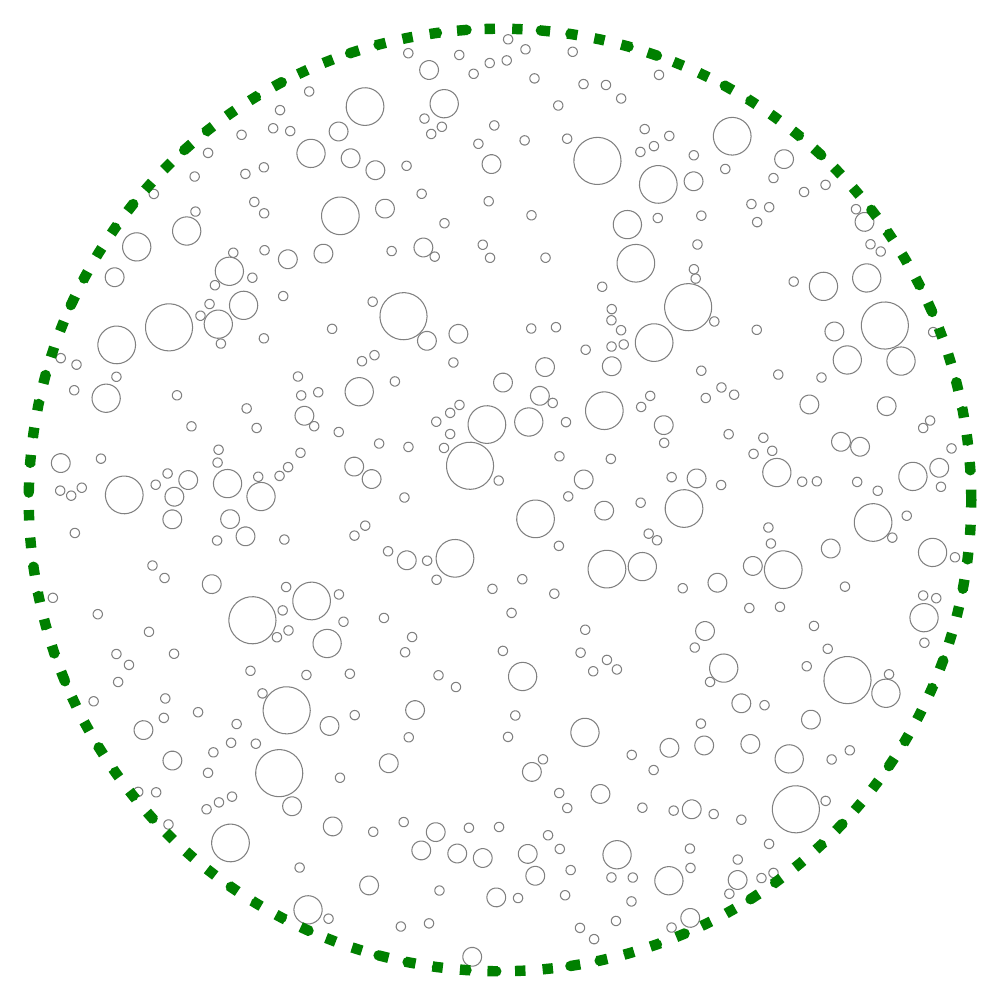}};
        \draw (0, 0) node {\huge $\reg$};
    \end{tikzpicture}
    \includegraphics[scale = 0.45]{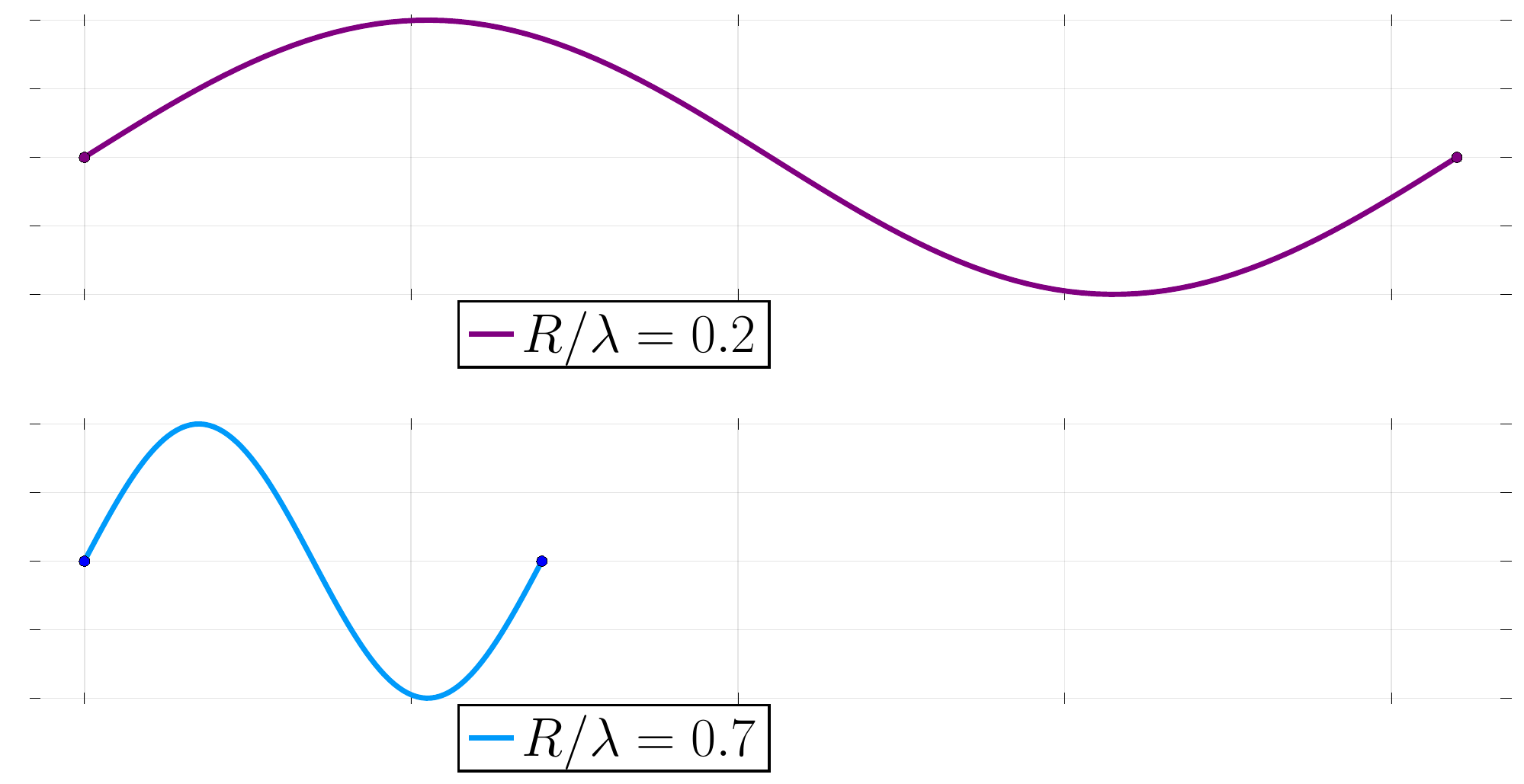}
   \caption{ An illustration of the size of the spherical region $\reg$, the size of particles inside, and the typical length of the incident wavelengths $\lambda$ used for the results in Figures~\ref{fig:cross-section} - % \label{fig:scattered-field-abs}
    \ref{fig:cross-section-volfrac}. Note that $R$ is the radius of $\reg$, and numerical parameters used in this section are in Table~\ref{tab:numerical-properties}. }
    \label{fig:compare-wavelengths}
\end{figure}

\begin{table}[h]
\centering
\begin{tabular}{l| c  c  c c}
   & mass density & wave speed  & radius & volume \% \\ [0.5ex]
% \hline
% Background & $\rho = 1$ & $c = 1$ & $R = 20$ & 80 \% \\
\hline
Solid particles & $\rho_s/\rho = 10$ & $c_s / c = 10$ & $a_s / R =  1 / 20$ & $\varphi_s = 15\%$\\
\hline
Void particles & $\rho_v /\rho = 0.1$ & $c_v / c = 0.1$ & $a_v / R = 1 / 20$ & $\varphi_v = 5\%$
\end{tabular}
\caption{Material properties used for the numerical experiments. Note that $R$ is the radius of the spherical region $\reg$.}
% The volume fraction of sodium nitrate has been exaggerated to better show the difference between the effective wavenumbers.}
\label{tab:numerical-properties}
\end{table}
% \todoAG{Above I've shown the dimensional, instead of non-dimensional, version just because it easier to present.}

{\bf The analytic scattered field.} For each {(angular)} frequency $\omega$, we calculate the effective wavenumber $k_1$ with the smallest imaginary part{. We} then follow Algorithm~\ref{alg:wavenumbers} and  Algorithm~\ref{alg:sphere-scattering} to calculate the scattered field. As a reminder, this method does not assume the spherical region behaves like some homogeneous sphere; instead these results are from careful homogenisation of all the scattered waves.

{\bf Two {different} homogeneous spheres.}  We can approximately calculate the scattered wave from the sphere $\reg$ by assuming that $\reg$ is filled with some homogeneous material. Below, we choose two different ways to approximate the density and sound speed of this homogeneous material we use to fill $\reg$. Note that there are many possible chooses for the density and sound speed and no clear ``best choice''.
\begin{enumerate}
    \item {\bf Hom. Low Freq.}: we assume the sphere has the effective density $\rho_*$ and {effective bulk module $\beta_*$ given by~\eqref{eqns:effective-properties}, which results in the sound speed $c_*=\sqrt{\beta_*/\rho_*}$}.
    \item {\bf Hom. Complex $k_1$}: we assume the sphere has the same complex wavenumber $k_1$ used for the analytic solution, which then implies it has sound speed $c_1 = \omega / k_1$. For the effective density, we again choose $\rho_*$ given by~\eqref{eqns:effective-properties}.
\end{enumerate}
After choosing one of these approximations, we can calculate the scattering coefficients $F_n$ by using the T-matrix~\eqref{eqn:T-matrix-acoustics}. Taking the origin to be the centre of the spherical region $\reg$, we can then express the scattered field in the form
\[
    \text{(scattered field)} =  \sum_{n}F_n \mathrm u_{n}(k \rv), \quad r> R,
\]
for a sphere of radius $R$. For the analytic solution we have  $F_n=\int_{\mathcal S} \mathfrak F_{n}(\lambda_1) \numdensity(\lambda_1) d \lambda_1$ from~\eqref{eqn:average-field-sphere}, and for the numerical results we approximate the integral as a sum.

\begin{figure}[ht]
    \centering
    \includegraphics[width=0.98\linewidth]{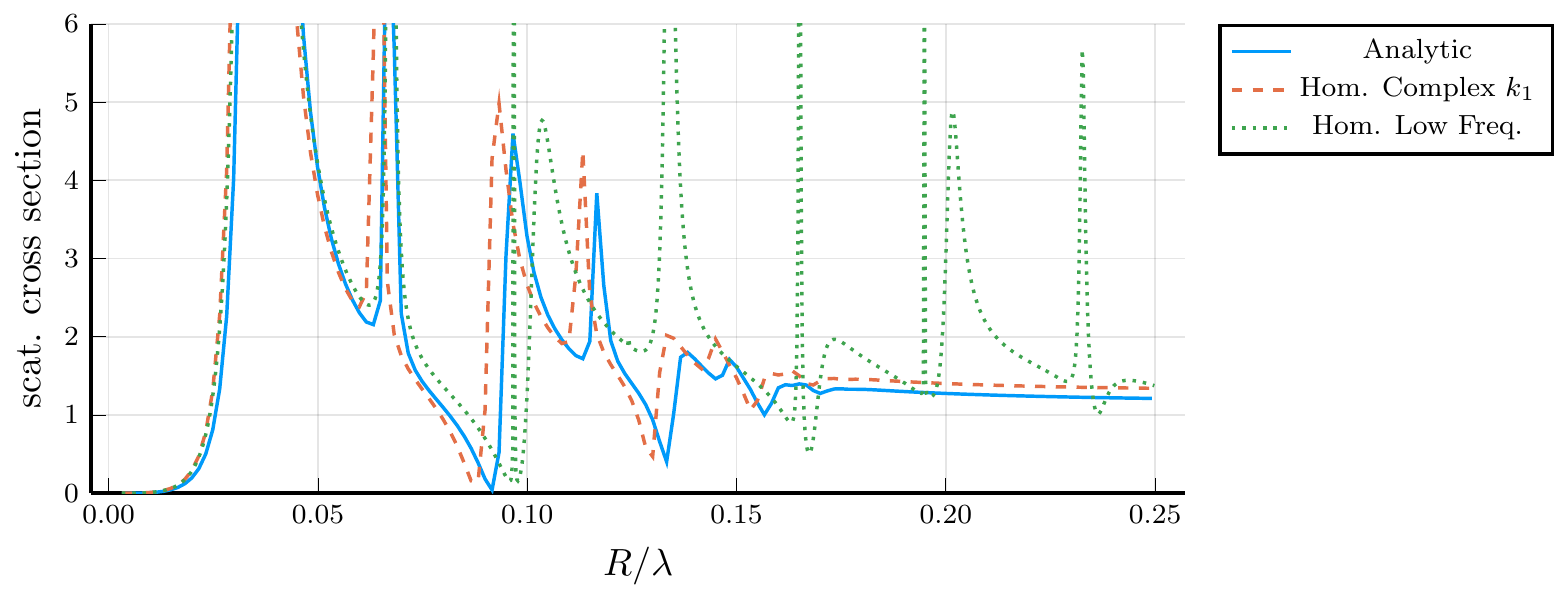}
    \includegraphics[width=0.98\linewidth]{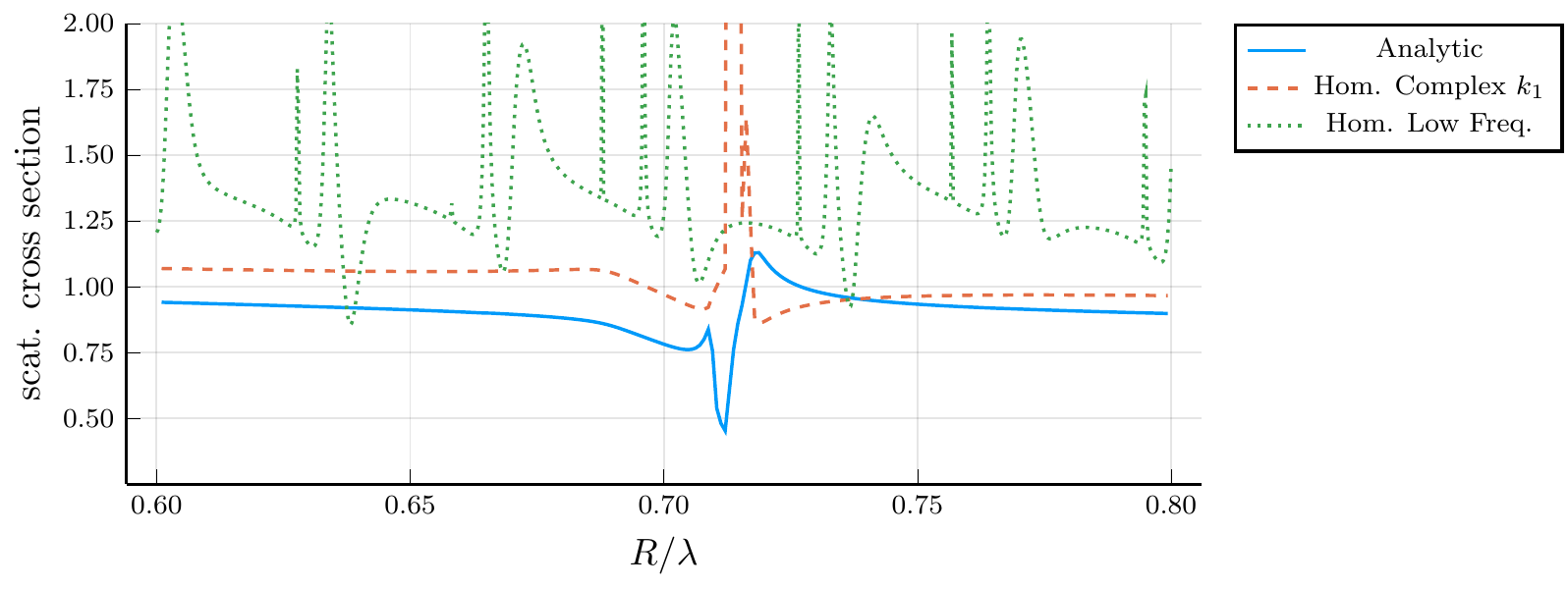}
    \caption{Shows the scattering cross-section of the average scattered field from a sphere filled with particles. The three different methods used are described in the beginning of this section, and the properties used are given in Table~\ref{tab:numerical-properties}. The Hom. Complex $k_1$ exhibits a strong resonance, with the peak climbing above 13, which is not shown to avoid zooming out too far. At a similar frequency, the Analytic solution exhibits the opposite, where scattering is very weak. Note we did not show the range $0.25< R/\lambda < 0.6$ as it is less interesting. }
    \label{fig:cross-section}
\end{figure}

{\bf Frequency sweep.} We begin with a frequency sweep and use the particle properties given in Table~\ref{tab:numerical-properties}. For each frequency, we calculate the scattering cross section for the three methods described above: the analytic and the two homogeneous spheres. The results are shown in Figure~\ref{fig:cross-section}.

% The scattering cross section is defined in terms of the scattering coefficients $F_n$.

We define the non-dimensional scattering cross section by~\cite{Varadan+Varadan1986}
% \begin{equation*}
%     \sigma_{\mathrm s}=\frac{1}{k^2}\sum_n|F_n|^2.
% \end{equation*}
% Normalise with $2\pi R^2$ and we get ($kR=2\pi R/\lambda$)
\begin{equation*}
    \text{(scat. cross section)} =
    % \frac{\sigma_{\mathrm s}}{2\pi R^2}=
    \frac{1}{2\pi(kR)^2}\sum_n|F_n|^2,
\end{equation*}
where the $|\cdot|$ represents the absolute value. The above is dimensionless and the natural way to compare with the geometrical cross section of the sphere~\cite{Kristensson2016}. In the standard notation, $\sigma_s$ often denotes the scattering cross section, in which case our non-dimensional scattering cross section is equal to $\sigma_s/ (2 \pi R^2)$.
% Two times the cross section due to the extinction paradox, see my book p.~296.

%With this scaling, we avoid the increasing values with $kR$. The curves might (I hope) converge to 1 as $ka\to\infty$.

As expected the three methods converge for low frequencies, as shown in Figure~\ref{fig:cross-section}. For $R/\lambda > 0.05$ the Homogeneous Low Frequency sphere quickly diverges from the other two solutions, and then has far more resonant frequencies. The two methods that use the same effective wavenumber, $k_1$, stay closer together, but are significantly different even before reaching $R/\lambda = 0.6$. Around $R/\lambda = 0.72$ we see that both the analytic and the Homogeneous Complex $k_1$ methods hit a resonant frequency, but display very different responses. To further investigate this, we plot the full scattered field for the three methods in Figure~\ref{fig:scattered-field-abs}. These fields show contour maps for the slice $y =0$. The main difference between the methods is that the analytic solution has a weaker scattered field and also has a smaller shadow region.
\begin{figure}[ht]
    \centering
    \includegraphics[height = 6cm]{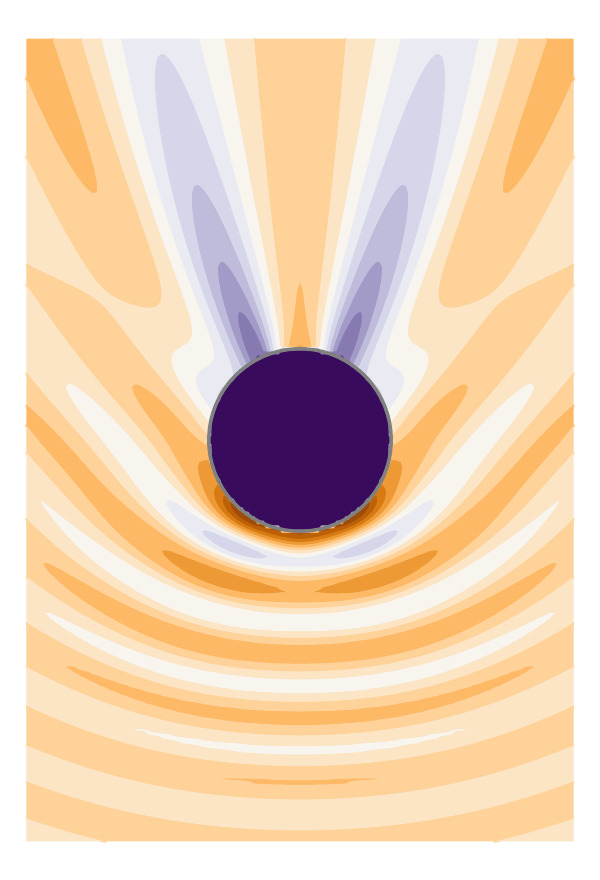}
    \includegraphics[height = 6cm]{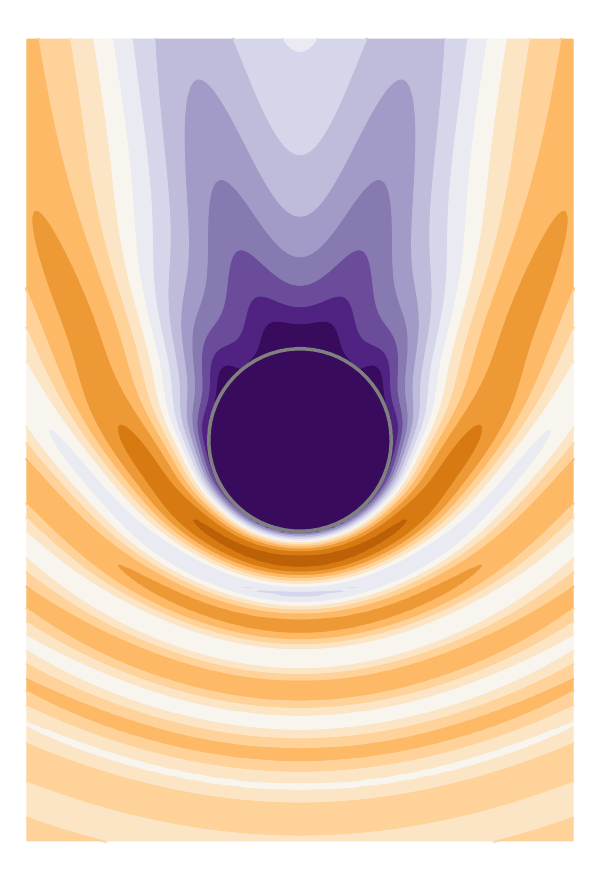}
    \includegraphics[height = 5.98cm]{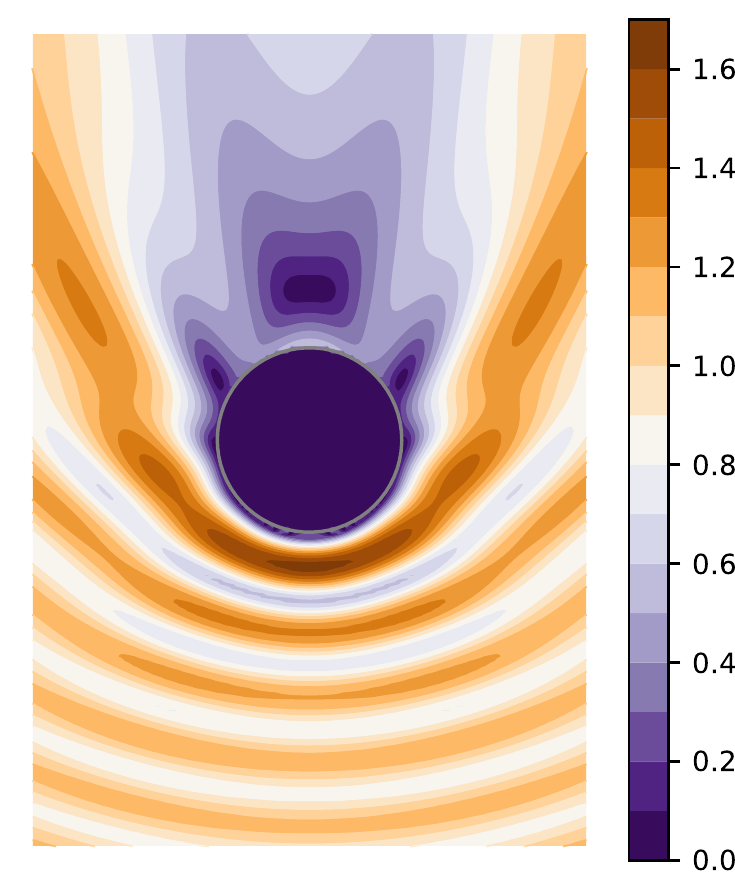}
     \caption{ Comparing the absolute value of the scattered field of the three methods (from left to right): analytic, hom. complex $k_1$, and hom. low freq. for $R/\lambda = 0.71$. Figure~\ref{fig:cross-section} shows the scattering cross-section for these three methods over a large frequency range.}
    \label{fig:scattered-field-abs}
\end{figure}

\begin{figure}[ht]
    \centering
    \includegraphics[width=0.98\linewidth]{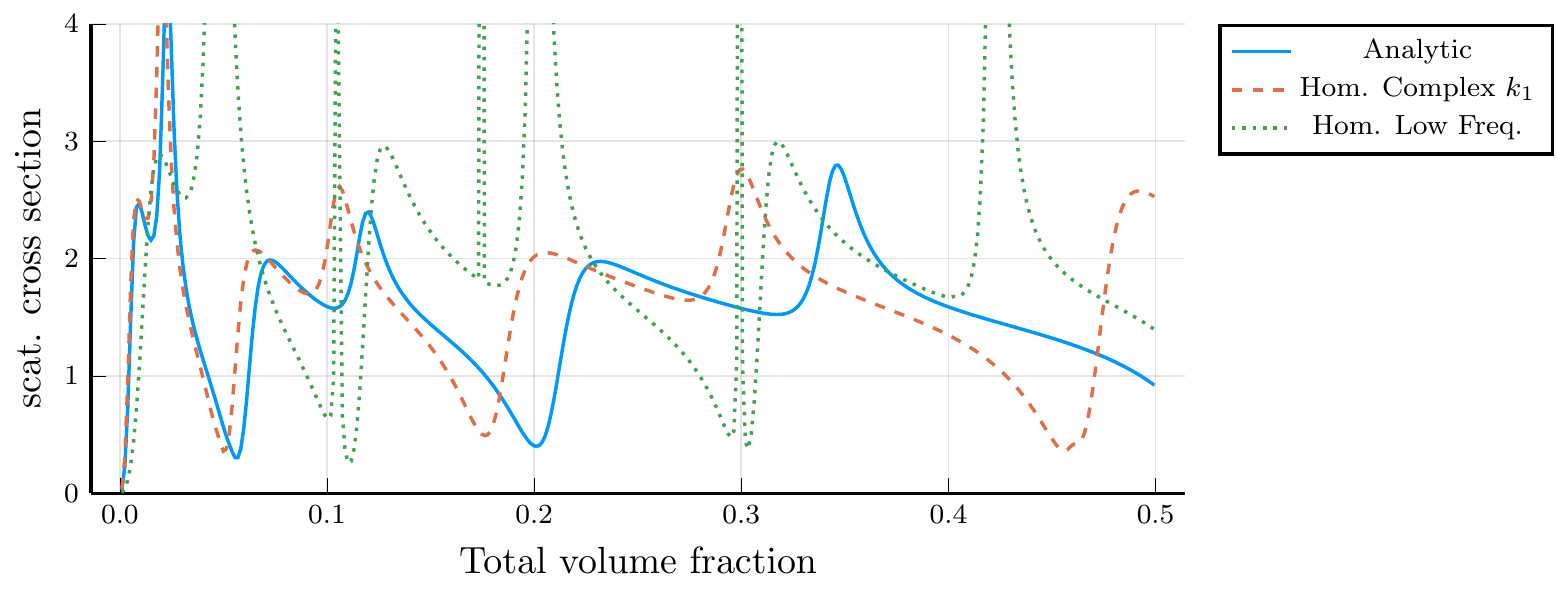}
    \caption{For a fixed frequency which corresponds to $R / \lambda = 0.133$, chosen as it is a lowish frequency, see into, and it is a local min. in Figure~\ref{fig:cross-section}.   }
    \label{fig:cross-section-volfrac}
\end{figure}

{\bf Varying the particle volume fraction.} The effects of multiple scattering between particles {vary} significantly with the volume fraction of the particles as shown in Figure~\ref{fig:cross-section-volfrac}. To produce these results we used a fixed frequency which corresponds to $R / \lambda = 0.133$. This frequency was chosen as it is relatively low and is the local minimum of the analytic scattering cross-section in Figure~\ref{fig:cross-section}. As this is a relatively low frequency, it avoids the need to use multiple effective wavenumbers even for large volume fraction, as described in the beginning of this section.

For a moderate volume fraction, Hom. Complex $k_1$ is qualitatively a good approximation (except close to resonant frequencies) as shown in Figure~\ref{fig:cross-section}. However, when increasing the volume fraction we see a clear drift between Hom. Complex $k_1$ and the Analytic method in Figure~\ref{fig:cross-section-volfrac}. Again we notice that Hom. Low Freq. has more resonant frequencies, and they are more extreme.

\section{Discussion}
% Below is a rough stream of conscious of things we should mention in the Discussion.
% List of things to mention:
Much has already been understood about a plate, or half-space, filled with a random mix of particles, including how to calculate, and make sense of, the effective wavenumbers, reflection, and transmission~\cite{willis2020transmission-meta,willis2020transmission,caleap2012effective,gower2019multiple,gower2019proof,linton_multiple_2006,Tsang+Kong2001,mishchenko_first-principles_2016}. These results are now used to probe emulsions, colloids, and slurry~\cite{challis_ultrasound_2005,quote-bs_iso_20998-3} with sound, and planetary systems with light~\cite{mishchenko_first-principles_2016}, among other applications.

A question that remained was: how to make sense of other regions $\reg$ not shaped like a plate? For example, like a droplet filled with a particulate.

{\bf Effective wavenumbers.} One milestone of this paper was to show that any region $\reg$ filled with the same particulate material will have the same effective wavenumbers, and these effective wavenumbers are given by solving~\eqref{eqn:planewave_det} or \eqref{eqn:azi-planewave_det}, with the low frequency properties given by~\eqref{eqns:effective-properties}. In fact, \eqref{eqn:azi-planewave_det} is simpler to solve then the dispersion equations previously presented in the literature.

A key step we used was to  represent the average wave as a sum of wave potentials each with a different effective wavenumber, as shown by~\eqref{eqn:effective_wave_representation}. This representation is useful because the sum converges. This happens because most of the wave potentials decay rapidly due to their wavenumbers having a large imaginary part as shown in Figures~\ref{fig:wavenumbers-compare} and \ref{fig:weak-wavenumbers}. For more details see~\cite{gower2019proof}.

{\bf Multiple effective wavenumbers.} In this work, we concentrated on scenarios where all the effective wavenumbers, except one, lead to wave modes which decay rapidly. That is, we make use of only one effective wavenumber. This scenario, which occurs for most frequencies and particle properties simplifies the equations. See~\cite{gower2019multiple} for an example where many effective wavenumbers are used. It remains an open challenge to find a simply way to incorporate all effective wavenumbers for scenarios such as a sphere filled with particles.

{\bf The ensemble wave equation.} Our results have enabled us to take effective wavenumbers from a halfspace, or a plate, and use them to calculate the average scattered wave from a sphere filled with particles. To our knowledge, we are the first to provide a clear first-principals approach to achieve this.  Beyond the examples we present in this paper, like a sphere filled with particles, our ensemble wave equation~\eqref{eqn:ensemble_wave} and ensemble boundary conditions~\eqref{eqn:general_boundary} can be used to calculate the average field for regions of any shape. Though, depending on the shape, this may require considerable work.
% Though, specially for non-convex regions, there could be unforeseen challenges.

{\bf Numerical results}. To both demonstrate that our method can completely describe the average scattered field, and to compare  with previous approaches, we present some numerical results for a sphere filled with particles in Section~\ref{sec:numerical-results}. We compared our method with approaches which assume the region $\reg$ is made of some homogeneous material with effective properties. As expected, the different methods converge for low-frequency, as shown in Figures~\ref{fig:cross-section} and \ref{fig:cross-section-volfrac}, though there are significant differences for finite frequencies. For one specific frequency, the difference between the methods is illustrated by a field plot in Figure~\ref{fig:scattered-field-abs}.

% {\bf Assumptions.} Our assumptions, both statistical and effective wave representation. Limitations and next steps

{\bf Validation.} The next natural step is to validate our models. Numerical validation would be ideal, as there are robust numerical methods for multiple scattering~\cite{ganesh2015efficient,mishchenko_first-principles_2016,Mackowski2001}. Numerical methods can also clarify the assumptions used in the modelling, such as the choice of pair-correlation and the quasi-crystalline approximation. However, a major issue, that has prevented substantial validation, is that these numerical methods have struggled to simulate an infinite halfspace or infinite  plate required by most of the available theoretical predictions~\cite{gower_backscattering_2018,chekroun_multiple_2012,chekroun_comparison_2009}.

Now, with our framework, numerical validation for finite sized sphere filled with particles should be straight-forward. This will allow a clear way to verify the statistical assumptions used, and the range of their validity. % When incorporating multiple wavenumbers the only assumptions are the pair correlations and the Quasi-crystalline approximation. We can now check their validity.
% Naturally, experimental validation is also vital. However, these will vary greatly depending on the frequency range, experimental setup,

{\bf Electromagnetism and Elastodynamics.} Our framework deals with the scalar wave equation. There exist in the literature clear routes on how to extend effective wave theory from the scalar version to elastodynamics~\cite{conoir_effective_2010}, thermo-visco-elasticity~\cite{luppe2012effective,pinfield2014thermo} and electromagnetism~\cite{Tishkovets+etal2011,doicu2019electromagnetic-I,Doicu+Mishchenko2019a}, though each requires extra algebraic manipulation. In light scattering, it is far easier to measure the average of the scattered intensity~\cite{mishchenko_first-principles_2016}, though it requires the average of the scattered field, which is what we calculate in this work. Extending our framework to calculate the average intensity should enable accurate models for scattering from spheres and other compact objects.

\section{Acknowledgements}
Gerhard wishes to gratefully thank the UK Acoustic Network funded by EPSRC (EP/R005001/1) for a generous travel support which made it possible for Gerhard to visit Sheffield in the fall 2019. The authors would also like to acknowledge the late Michael Mishchenko for putting the authors in touch, which ultimately led to this paper, and for his amazing contribution to the field of scattering. The authors are also thankful to Thomas Wriedt for organising the Bremen Workshop on Light Scattering.

\appendix

\section{Spherical harmonics}\label{sec:Functions}
The associated Legendre functions, defined for non-negative integers $\ell\geq m\geq0$, are denoted $\mathrm{P}_\ell^m(x)$, and defined by
\begin{equation*}
\mathrm{P}_\ell^m(x)=(1-x^2)^{m/2}\frac{\diff^m}{\diff x^m}\mathrm{P}_\ell(x),\quad x\in[-1,1],
\end{equation*}
where $\mathrm{P}_\ell(x)$ is the Legendre polynomials.
For a negative integer value of $m$, we use ($m=-1,-2,\ldots$)
\begin{equation*}
\mathrm{P}_\ell^{m}(x)=(-1)^{|m|}\frac{(\ell-|m|)!}{(\ell+|m|)!}\mathrm{P}_\ell^{|m|}(x),\quad x\in[-1,1].
\end{equation*}

The spherical harmonics are denoted $\mathrm{Y}_{\ell m}(\theta,\phi)$ and they are defined by~\cite[(2.5.29), p.~24]{Edmonds1974}
\begin{equation} \label{eqn:spherical-harmonics}
	\mathrm{Y}_{n}(\rvh)=
    \mathrm{Y}_{\ell m}(\theta,\phi)=(-1)^m
	\sqrt{\frac{2\ell+1}{4\pi}\frac{(\ell-m)!}{(\ell+m)!}}\mathrm{P}_\ell^{m}(\cos\theta)\eu^{\iu m\phi},
	%=C_{\ell m}\mathrm{P}_\ell^{m}(\cos\theta)\eu^{\iu m\phi}
\end{equation}
where we committed a small abuse in notation as $\rvh = (\cos \phi \sin \theta,\sin \phi \sin \theta,\cos \theta)$, where the angles $\theta$ and $\phi$ can be complex. The indices $\ell$ and $m$ take the following values:
\begin{equation*}
    m=-\ell,-\ell+1,\ldots,-1,0,1,\ldots,\ell,\quad \ell=0,1,2,\ldots.
\end{equation*}
For the special case $\kvhi=\zvh$ we have that
\begin{equation*}
4\pi\mathrm{Y}_{n}^*(\zvh)=\delta_{m,0}\sqrt{4\pi(2\ell+1)}.
\end{equation*}

The spherical harmonics satisfy the parity relation and complex conjugate
\begin{equation*}
\mathrm{Y}_{n}(-\rvh)=(-1)^\ell\mathrm{Y}_{n}(\rvh), \qquad
\mathrm{Y}_{\ell m}^*(\rvh^{*})=(-1)^m\mathrm{Y}_{\ell-m}(\rvh),
\end{equation*}
and $\mathrm{Y}_{n}(\rvh)$ are orthonormal over the real unit sphere $\Omega$, that is
\begin{equation*}
    \int_\Omega \mathrm{Y}_{\ell m}^*(\theta,\phi)\mathrm{Y}_{\ell' m'}(\theta,\phi)\sin\theta\,\diff\theta\diff\phi=(-1)^m\int_\Omega \mathrm{Y}_{\ell -m}(\theta,\phi)\mathrm{Y}_{\ell' m'}(\theta,\phi)\sin\theta\,\diff\theta\diff\phi=\delta_{\ell,\ell'}\delta_{m,m'}.
\end{equation*}

Plane waves can be expanded in terms of spherical harmonics by using:
\begin{equation} \label{eqn:plane-wave-expansion}
    \eu^{\iu \vec x \cdot \vec y} = 4 \pi \sum_{n_1} \iu^{\ell_1}(-1)^{m_1} j_{\ell_1}(x y) \mathrm Y_{n_1}(\unitvec{x}) \mathrm Y_{\ell_1 -m_1}(\unitvec{y}) = 4 \pi \sum_{n_1} \iu^{\ell_1}(-1)^{m_1} \mathrm v_{n_1} (y \vec x) \mathrm Y_{\ell_1 -m_1}(\unitvec{y})
\end{equation}
where both $\vec x$ and $\vec y$ can be complex vectors, and we use the dot product to mean $(x_1,x_2,x_3) \cdot (y_1,y_2,y_3) =  x_1 y_1 + x_2 y_2 + x_3 y_3$ with no conjugation.

\section{Translation matrices}\label{sec:Translation}

The translation properties of the spherical waves are instrumental for the formulation and the solution of the scattering problem of many individual particles.
These translation properties are well know, and we refer to, \eg~\cite{Bostrom+Kristensson+Strom1991,Friedman+Russek1954} for details.
Some of their properties are reviewed in this appendix and a simple proof of these matrices are given in the supplementary material.

Let $\rv'=\rv+\dv$, then
the translation matrices for a translation $\dv$ are~\cite{Bostrom+Kristensson+Strom1991}
  \begin{equation}\label{eq:translation_spherical_waves}
 \left\{\begin{aligned}
   &\mathrm{v}_n(k\rv')=\sum_{n'}\mathcal{V}_{nn'}(k\dv)\mathrm{v}_{n'}(k\rv),\quad \text{ for all }\dv
   \\
 &\mathrm{u}_n(k\rv')=\sum_{n'}\mathcal{V}_{nn'}(k\dv)\mathrm{u}_{n'}(k\rv),\quad |\rv|>|\dv|
 \\
   &\mathrm{u}_n(k\rv')=\sum_{n'}\mathcal{U}_{nn'}(k\dv)\mathrm{v}_{n'}(k\rv),\quad |\rv|<|\dv|
   \\
 \end{aligned}\right..
 \end{equation}

Translation in the opposite direction is identical to the Hermitian conjugate of the translation matrices~\cite{Peterson+Strom1973}, \ie
\begin{equation}
     \mathcal{V}_{nn'}(-k\dv)=\mathcal{V}_{n'n}^*(k^*\dv)=(-1)^{\ell-\ell'}\mathcal{V}_{nn'}(k\dv),\quad \mathcal{U}_{nn'}(-k\dv)=(-1)^{\ell-\ell'}\mathcal{U}_{nn'}(k\dv).
    \label{eqn:translation_transpose}
\end{equation}
The translation matrix $\mathcal{V}_{nn'}(k\dv)$ is identical to $\mathcal{U}_{nn'}(k\dv)$ but with $\mathrm{h}_{\lambda}^{(1)}(k|\dv|)$ replaced with $\mathrm{j}_{\lambda}(k|\dv|)$.

Notice that the translation matrices $\mathcal{V}_{nn'}(k\dv)$ and $\mathcal{U}_{nn'}(k\dv)$ have the form
 \begin{equation}\label{eq:Translation matrix-scalar}
 \mathcal{V}_{nn'}(k\dv)=\sum_{n_1}c_{n n'n_1}{\mathrm v}_{n_1}(k\dv),\quad
 \mathcal{U}_{nn'}(k\dv)=\sum_{n_1}c_{n n'n_1}{\mathrm u}_{n_1}(k\dv).
 \end{equation}
 where the summation over the multi-index $n_1=\{\ell_1, m_1\}$ effectively is over $|\ell-\ell'|\leq \ell_1\leq \ell+\ell'$, and $m_1=m-m'$.
 The explicit values of the coefficients $c_{n n'n_1}$ are, see the supplementary material
\begin{equation} \label{eqn:c-definition}
    c_{n n'n_1}=4\pi\iu^{\ell'-\ell+\ell_1}\int\limits_{\Omega}\mathrm{Y}_{n}(\theta,\phi)\mathrm{Y}_{n'}^*(\theta,\phi) \mathrm{Y}_{n_1}^*(\theta,\phi)\,\sin\theta\diff\theta\diff\phi.
\end{equation}
which can be expressed with the Wigner 3-$j$ symbol~\cite[(4.6.3), p.~63]{Edmonds1974} in the form
\begin{equation} \label{eqn:c-3j}
c_{n n'n''}=\iu^{\ell'-\ell+\ell''}(-1)^{m}
   \sqrt{4\pi(2\ell+1)(2\ell'+1)(2\ell''+1)}
  \begin{pmatrix}
      \ell&\ell'&\ell''
      \\
      0&0&0
  \end{pmatrix}
  \begin{pmatrix}
      \ell&\ell'&\ell''
      \\
      m&-m'&-m''
  \end{pmatrix}.
\end{equation}
Note that the coefficients $c_{nn'n''}$ are all real due to orthogonality in the azimuthal index. Further the $c_{nn'n''}$ are only non-zero when
\begin{equation}\label{eqns:gaunt_indices}
m-m'=m'', \quad |\ell - \ell'| \leq \ell'' \leq \ell + \ell', \quad \ell + \ell' + \ell '' = \text{even integer},
\end{equation}
and should only be evaluated for $\ell, \ell',\ell'' \geq 0$ and
\begin{align}
-\ell \leq m \leq \ell, \quad -\ell' \leq m' \leq \ell', \quad -\ell'' \leq m'' \leq \ell''.
\end{align}

Other often used notation is the Gaunt coefficient~\cite{martin_multiple_2006}:
\begin{equation} \label{eqn:Gaunt-coefficient}
4\pi {\cal G}(\ell,m;\ell'm';\ell_1)=\iu^{-\ell'+\ell-\ell_1} (-1)^{m'}c_{(\ell,m) (\ell',-m')(\ell_1,m+m')},
\end{equation}
and the Clebsch-Gordan coefficients
\begin{equation}
    c_{n n'n''} = \iu^{\ell'-\ell+\ell''}(-1)^{m'} \sqrt{4\pi \frac{(2\ell+1)(2 \ell'+1)}{(2 \ell''+1)}}
    \langle \ell \,0 \,\ell' \,0 | \ell'' \,0 \rangle
    \langle \ell \,m \,\ell'\, -m' | \ell''\, m'' \rangle.
\end{equation}

The special case $c_{n n (0,0)} = \sqrt{4 \pi}$ and following properties are useful:
\begin{align}  \label{eqn:c-transpose}
    & c_{n n'n''} = c_{n n'' n'} = c_{(\ell,-m)(\ell',-m')(\ell'',-m'')},
    \\
    & c_{n n'n''} = (-1)^{m''+\ell''} c_{n'n(\ell'',-m'')}  = (-1)^{\ell' +m'} c_{n''(\ell',-m')n},
    \\
    \label{eqn:spherical-linearisation}
    & \sum_{n_1} \iu^{\ell_1}\mathrm Y_{n_1}(\theta,\phi) c_{n_1 n' n} = 4 \pi \iu^{\ell+\ell'} \mathrm Y_{n}(\theta,\phi) \mathrm Y_{n'}(\theta,\phi),
\end{align}
where the last is the contraction rule, or the linearisation formula~\cite{martin_multiple_2006}. For real $\theta$ and $\phi$ the linearisation formula can be deduced by multiplying both sides of~\eqref{eqn:spherical-linearisation} by $\mathrm Y_{n_2}^*(\rvh)$, then integrating over $\rvh$, and applying the definition~\eqref{eqn:c-definition}.

\section{Separating the effective waves in equation~\eqref{eqn:combined_green_waves}}

In this appendix, we address the solution of an equation of the form
\begin{equation*}
    \sum_{p=0}^P\psi_p(\rv)=0,
\end{equation*}
where the functions $\psi_p(\rv)$ satisfy
\begin{equation*}
    \nabla^2\psi_p(\rv)=-k_p^2\psi_p(\rv),
\end{equation*}
where $k_p\neq k_q$, $p\neq q$.
The following theorem proves that the solution of this equation is $\psi_p(\rv)=0$, $p=0,1,2,\ldots,P$:
\begin{theorem}\label{th:Vandermonde}
    Let the functions $\psi_p(\rv)$ for $p=0,1,2,\ldots,P$, satisfy
$\nabla^2\psi_p(\rv)=\alpha_p\psi_p(\rv)$ for $\rv \in \reg$. Assuming $\alpha_p\neq\alpha_q$ for every $p\neq q$, then the only solution to
\begin{equation} \label{eqn:sum-waves}
    \sum_{p=0}^P\psi_p(\rv)=0, \quad \text{for all} \;\; \rv \in \reg,
\end{equation}
is $\psi_p(\rv)=0$ for $p=0,1,2,\ldots,P$.
\end{theorem}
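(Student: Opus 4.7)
The name of the theorem already hints at the strategy: I will produce a Vandermonde system by repeatedly applying the Laplacian to the hypothesis, and then invert it. Starting from the assumption~\eqref{eqn:sum-waves}, apply $\nabla^2$ to both sides. Since each $\psi_p$ is an eigenfunction of the Laplacian with eigenvalue $\alpha_p$, the result is $\sum_{p=0}^{P}\alpha_p\psi_p(\vec r)=0$. Iterating, for every nonnegative integer $k$ I obtain
\begin{equation*}
\sum_{p=0}^{P}\alpha_p^{k}\psi_p(\vec r)=0,\qquad \vec r\in\mathcal{R}.
\end{equation*}
Taking $k=0,1,\ldots,P$ gives $P+1$ linear relations among the $P+1$ unknowns $\psi_0(\vec r),\ldots,\psi_P(\vec r)$ whose coefficient matrix is the Vandermonde matrix $V_{kp}=\alpha_p^{k}$.

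The Vandermonde determinant equals $\prod_{p<q}(\alpha_q-\alpha_p)$, which is nonzero by the hypothesis that the $\alpha_p$ are pairwise distinct. Hence $V$ is invertible, and the linear system forces $\psi_p(\vec r)=0$ for every $p$ and every $\vec r\in\mathcal{R}$, which is the desired conclusion.

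The only subtlety is justifying that $\nabla^2$ may legally be applied arbitrarily many times to the identity $\sum_p\psi_p=0$. This is where I expect the reader might pause, but it is in fact automatic: any function satisfying $\nabla^2\psi_p=\alpha_p\psi_p$ with constant $\alpha_p$ is a solution of an elliptic equation with constant coefficients, so elliptic regularity (or, more directly, the explicit representation of such solutions in spherical/plane-wave bases used elsewhere in the paper) shows that each $\psi_p$ is smooth on $\mathcal{R}$. Consequently every iterated Laplacian is well-defined and the operations are legitimate in the classical pointwise sense. No boundary data or growth assumption is needed, since the argument is entirely local in $\mathcal{R}$.

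If one wished to avoid even mentioning smoothness, a completely algebraic variant works: form the polynomial $Q_q(x)=\prod_{p\neq q}(x-\alpha_p)$ and apply the operator $Q_q(\nabla^2)=\prod_{p\neq q}(\nabla^2-\alpha_p I)$ to \eqref{eqn:sum-waves}. All terms with $p\neq q$ are annihilated because $(\nabla^2-\alpha_p I)\psi_p=0$, while the term $p=q$ is multiplied by $Q_q(\alpha_q)=\prod_{p\neq q}(\alpha_q-\alpha_p)\neq 0$. Dividing by this nonzero constant yields $\psi_q(\vec r)=0$ directly, which is essentially the Vandermonde inversion carried out operator-by-operator and is perhaps the cleanest way to present the proof.
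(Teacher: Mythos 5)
Your proof is correct and follows essentially the same route as the paper: iterate the Laplacian to build a Vandermonde system in the $\psi_p(\rv)$, and invert it using the nonvanishing of $\prod_{p<q}(\alpha_q-\alpha_p)$; your smoothness remark plays the role of the paper's appeal to analyticity of Helmholtz solutions. The operator-polynomial variant $\prod_{p\neq q}(\nabla^2-\alpha_p I)$ you add at the end is just the same inversion carried out factor by factor, though it is arguably a cleaner way to present it.
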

\begin{proof}
From the assumption in the theorem, we have
\begin{equation*}
   \nabla^2\sum_{p=0}^P\psi_p(\rv)=\sum_{p=0}^P \alpha_p\psi_p(\rv)=0,
\end{equation*}
for any open ball within $\reg$.
Or, more generally, by repeated use of the Laplace operator
\begin{equation*}
   \sum_{p=0}^P \alpha_p^n\psi_p(\rv)=0, \quad \text{for}\;\; n = 0,1,2,\ldots,P,
\end{equation*}
which we summarise in a matrix notation
\begin{equation*}
    \begin{pmatrix}
    1&1&1&\cdots&1\\
    \alpha_0&\alpha_1&\alpha_2&\cdots&\alpha_P\\
    \alpha_0^2&\alpha_1^2&\alpha_2^2&\cdots&\alpha_P^2\\
    \alpha_0^3&\alpha_1^3&\alpha_2^3&\cdots&\alpha_P^3\\
    \vdots&\vdots&\vdots&\ddots&\vdots\\
    \alpha_0^P&\alpha_1^P&\alpha_2^P&\cdots&\alpha_P^P
    \end{pmatrix}
    \begin{pmatrix}\psi_0(\rv)\\\psi_1(\rv)\\\psi_2(\rv)\\\psi_3(\rv)\\\vdots\\
    \psi_P(\rv)
    \end{pmatrix}=\begin{pmatrix}0\\0\\0\\0\\\vdots\\0\end{pmatrix}.
\end{equation*}
The matrix on the left-hand side is the transpose of the  Vandermonde matrix with determinant $\prod_{0\leq p<q\leq P}(\alpha_q-\alpha_p)$, see \eg~\cite{Davis1975}.
Under the assumption $\alpha_p\neq\alpha_q$, $p\neq q=0,1,2,\ldots,P$, this matrix determinant is non-zero, and we obtain the result of the theorem, $\psi_p(\rv)=0$ for $p=0,1,2,\ldots,P$ and $\rv$ within some open ball in $\reg$. {By analyticity of the solutions to the Helmholtz equation, the functions $\psi_p(\rv)=0$ for $p=0,1,2,\ldots,P$ and $\rv\in\reg$.}
\end{proof}

\section{The matrix $G_{n,n_2}$}\label{sec:Wronskian}
Here, we calculate the dimensionless matrix $G_{n,n_2}(\lambda_1)$ by substituting the spherical basis expansion~\eqref{eqns:fields_spherical_basis} into~\eqref{eqn:Ib}, which leads to
\begin{equation*}
    \mathcal J_p(\rv_1) = \sum_{n_1 n_2} F_{p,n'n_1}(\lambda_2) \mathrm v_{n_2}(k_p \rv_1) \int_{\partial B(\vec 0;a_{12})}  \Biggl\{\mathcal{U}_{n'n}(-k\rv) \frac{\partial \mathcal{V}_{n_1n_2}(k_p\rv)}{\partial \vec\nu}
     - \frac{\partial\mathcal{U}_{n'n}(-k\rv)}{\partial \vec\nu} \mathcal{V}_{n_1n_2}(k_p\rv)\Biggr\}\diff A,
\end{equation*}
% and we define
% \begin{equation*}
%     G_{n,n_2}(\lambda_1)=\frac{T_{n}(\lambda_1) }{k^2 - k^2_p}\int_{\mathcal S} \sum_{n' n_1} \int_{\partial B(\vec 0;a_{12})}  \Biggl\{\mathcal{U}_{n'n}(-k\rv) \frac{\partial \mathcal{V}_{n_1n_2}(k_p\rv)}{\partial y}
%      - \frac{\partial\mathcal{U}_{n'n}(-k\rv)}{\partial y} \mathcal{V}_{n_1n_2}(k_p\rv)\Biggr\}\diff A p(\lambda_2)\numdensity(\lambda_2) \diff \lambda_2,
% \end{equation*}
% where $\diff A = b_{12}^2 d \Omega$, with $\Omega$ being a solid angle, and noting that $r = |\rv|$ is the coordinate orthogonal to $\partial B(\vec 0;a_{12})$.
From Appendix~\ref{sec:Translation}, we have that
\begin{align*}
    & \mathcal{U}_{n'n}(-k\rv) = (-1)^{\ell'-\ell}\sum_{n''}c_{n' nn''}\mathrm{u}_{n''}(k\rv),
    \\
    &\mathcal{V}_{n_1n_2}(k_p\rv) = \sum_{n_3}c_{n_1 n_2n_3}\mathrm{v}_{n_3}(k_p\rv).
\end{align*}

Integrating over the spherical surface $\partial B(\vec 0;a_{12})$, and using the orthogonality of the spherical harmonics, we obtain
\begin{equation}
    \mathcal J_p(\rv_1) = - \sum_{n_1 n_2} F_{p,n'n_1}(\lambda_2) \mathrm v_{n_2}(k_p \rv_1) \sum_{n_3}
c_{n n'n_3} c_{n_1 n_2n_3} a_{12} \mathrm {N}_{\ell_3}(ka_{12},k_p a_{12}),
\end{equation}
where $\mathrm N_{\ell}(x,z)$ is defined by \eqref{eqn:N_ell}.
When substituting the above into \eqref{eqn:ensemble_wave} leads to the matrix $G_{n,n_2}(\lambda_1)$ defined by \eqref{eqn:G-simple}.
% \begin{equation}
%      G_{n'n,n_1n_2}(\lambda_1) =
%      T_{n}(\lambda_1) \sum_{n_3} \frac {c_{n n';n_3} c_{n_1 n_2;n_3}}{k^2 - k^2_p} \int_{\mathcal S}  \numdensity  b_{12} N_{\ell_3}(kb_{12},k_p/k)
%      p(\lambda_2) \diff \lambda_2
%     % T_{n}(\lambda_1)\sum_{n_3}(-1)^{\ell'-\ell}\frac{c_{n' n;n_3}c_{n_1 n_2;n_3}}{k^2 -  k^2_p}\int_{\mathcal S}\numdensity b_{12}\mathrm N_{\ell_3}(kb_{12},k_p/k)p(\lambda_2) \mathrm d \lambda_2,
%     \label{eqn:AtoG}
% \end{equation}
% \todoAG{Before: $G_{n'n,n_1n_2}(\lambda_1) = T_{n}(\lambda_1)\sum_{n_3}(-1)^{\ell'-\ell+m_3}\frac{c_{n n';(\ell_3,m_3)}c_{n_1 n_2;(\ell_3,-m_3)}}{k^2 -  k^2_p}\int_{\mathcal S}\numdensity b_{12}\mathrm N_{\ell_3}(kb_{12},k_p/k)p(\lambda_2) \mathrm d \lambda_2$, please double check my result above.}

\section{Effective plane-waves}\label{sec:Effective plane waves}
Here, we show that we recover the plane-wave dispersion equation deduced in much of the literature from our general ensemble wave equation~\eqref{eqn:ensemble_wave}.
We use the plane-wave representation~\eqref{eqn:effective-plane-wave} together with~\eqref{eqn:plane-wave-expansion} to write
\begin{align}
  & f_{p,n'}(\rv_1,\lambda_1) = F_{p,n'}(\lambda_1)\eu^{\iu \vec k_p \cdot \rv_1} = 4 \pi F_{p,n'}(\lambda_1) \sum_{n_1} \iu^{\ell_1} \mathrm j_{\ell_1}(k_p r_1)  Y_{n_1}^*(\unitvec{r}_1) Y_{n_1}(\unitvec{k}_p),
  \\
  & f_{p,n'}(\rv_1 + \rv,\lambda_2) = 4 \pi f_{p,n'}(\rv_1,\lambda_2)  \sum_{n_1} \iu^{\ell_1} \mathrm j_{\ell_1}(k_p r)  Y_{n_1}^*(\unitvec{r}) Y_{n_1}(\unitvec{k}_p),
\end{align}
where we used~\eqref{eqn:Complex angles}.

Using the above, we can simplify~\eqref{eqn:ensemble_wave} by calculating:
\begin{gather*}
    \mathcal J_p(\rv_1) =
       % \int_{\partial \mathcal B_b(\rv_1)}  \mathcal{U}_{n'n''}(k\rv_2 - k\rv_1) \frac{\partial f_{p,n}(\rv_2)}{\partial \nu} - \frac{\partial \mathcal{U}_{n'n''}(k\rv_2 - k\rv_1) }{\partial \nu} f_{p,n}(\rv_2) \mathrm dA_2, =
     f_{p,n'}(\rv_1,\lambda_2)  4 \pi \sum_{n_1}   \iu^{\ell_1} Y_{n_1}(\unitvec{k}_p)  \int_{\partial B(\vec 0;a_{12})}   Y_{n_1}^*(\unitvec{r})\left( \mathcal{U}_{n'n}(-k\rv) \frac{\partial  \mathrm j_{\ell_1}(k_p r) }{\partial r}
     - \frac{\partial \mathcal{U}_{n'n}(-k \rv) }{\partial r}  \mathrm j_{\ell_1}(k_p r)  \right)\mathrm dA
     \\
    %   =
    %  f_{p,n'}(\rv_1,\bar\lambda_2)  4 \pi \sum_{n_1}   \iu^{\ell_1} Y_{n_1}(\unitvec{k}_p) (-1)^{\ell'-\ell} \sum_{n''} c_{n' nn''}  \int_{\partial B(\vec 0;a_{12})}   Y_{n_1}^*(\unitvec{r})\mathrm Y_{n''}(\hat{\rv}) \left(
    %       k_p \mathrm h_{\ell''}^{(1)}(k r)
    %   \mathrm j_{\ell_1}'(k_p r)
    %  - k \mathrm h_{\ell''}^{(1)\prime}(k r)  \mathrm j_{\ell_1}(k_p r)  \right)\mathrm dA % \\  =
     = - f_{p,n'}(\rv_1,\lambda_2)  4 \pi  a_{12}  (-1)^{\ell'-\ell}  \sum_{n_1}  \iu^{\ell_1} Y_{n_1}(\unitvec{k}_p) c_{n' nn_1} \mathrm{N}_{\ell_1} (k a_{12}, k_p a_{12}),
\end{gather*}
where we used $\mathcal{U}_{nn'}(-k\dv)=(-1)^{l-l'}\mathcal{U}_{nn'}(k\dv)$
% \eqref{eqn:translation_transpose}${}_3$
followed by $\mathcal{U}_{nn'}(k\dv)=\sum_{n_1}c_{n n'n_1}{\mathrm u}_{n_1}(k\dv)$, and~\eqref{eqn:N_ell}.
% \[
% \mathrm{N}_{\ell_1} (k b_{12}, k_p /k)  = k b_{12} \mathrm h^{(1)\prime}_{\ell_1} (k b_{12})  \mathrm j_{\ell_1}(k_p b_{12}) - k_p b_{12} \mathrm h^{(1)}_{\ell_1} (k b_{12}) \mathrm j_{\ell_1}'(k_p b_{12}).
% \]
Substituting the above into~\eqref{eqn:ensemble_wave} then leads to the plane-wave eigensystem~\eqref{eqn:planewave_eigensystem}.
% \begin{equation}  \label{eqn:planewave_dispersion}
%     F_{p,n}(\bar\lambda_1) + \frac{4 \pi}{k^2_p - k^2} \sum_{n'}   \sum_{n_1}  c_{n' nn_1} \iu^{-\ell_1}  \mathrm Y_{n_1}(\unitvec{k}_p)T_{n}(\bar\lambda_1) \int_{\mathcal S} \bar \numdensity b_{12} \mathrm{N}_{\ell_1} (k b_{12}, k_p /k)F_{p,n'}(\bar\lambda_2)  p(\bar\lambda_2) \mathrm d \bar\lambda_2 = 0,
% \end{equation}
% where we used $(-1)^{\ell' - \ell}\iu^{\ell_1} = \iu^{-\ell_1}$.
The above dispersion equation is the same\footnote{After making the substitutions $T_n \to -Z_\ell$, $a_{12} \to b_{12}$, $c_{n'nn_1} \to \delta_{m_1,m'-m} 4 \pi (-1)^{m} \iu^{\ell_1 + \ell - \ell'} \mathcal G(\ell,m;\ell',-m';\ell_1)$, $F_{p,n}  \to 4\pi \iu^\ell  Z_\ell F_\ell^m$, and $ N_q (k a_{12}, k_p  a_{12}) \to -N_q (k_p b_{12}) (\mathcal \iu k b_{12})^{-1}$, followed by using  $(-1)^{\ell ' + \ell + \ell_1} = 1$. Note that our conventions of spherical harmonics is the same theirs, and that $F_\ell^m$ does not depend on $p$ as multiple effective waves was not considered in \cite{linton_multiple_2006}.
}
as \cite[equation (4.20)]{linton_multiple_2006} (where you need to set $\mathcal A_n^m = 0$) when considering a single species (no integer over $\mathcal S$), and only one effective wave.

\section{Integrals of spherical and plane waves}

When dealing with effective plane-waves, we need to evaluate the following integral:
\begin{equation} \label{def:L}
L_{n}(z) =  \int_{\R^2}  \mathrm{u}_{n}(k\rv) \frac{\partial \eu^{\iu\vec{k}_p\cdot\rv}}{\partial z} - \frac{\partial \mathrm{u}_{n}(k\rv) }{\partial z} \eu^{\iu\vec{k}_p\cdot\rv} \,\diff x\diff y.
\end{equation}
These integrals converge when Im $k \geq |\mathrm{Im}\, {k_p}_x| + |\mathrm{Im}\, {k_p}_y|$, where $\vec k_p = ({k_p}_x,{k_p}_y,{k_p}_z)$. This inequality holds when using planar symmetry~\eqref{eqn:plane-wave-symmetry}, which together with $\vec k = (k_x,k_y,k_z)$ implies that ${k_p}_x = k_x$ and ${k_p}_y = k_y$.

To calculate $L_{n}(z)$ we employ a transformation~\cite{Danos+Maximon1965,Bostrom+Kristensson+Strom1991,Kristensson2016,Kristensson1979a} between radiating spherical waves and plane waves:
\begin{equation}
    \mathrm{u}_{n}(k\rv)=\frac{1}{2\pi\iu^{\ell}}\int_{\R^2}\mathrm{Y}_{n}(\unitvec{q})
    \eu^{\iu \vec q \cdot \rv }
    % \eu^{\vec{k}^\bot\cdot\rv^\bot} % <- previously
    \frac{\diff q_x\diff q_y}{k q_z},\quad \text{for} \;\; z>0,\;\; \Re k > 0, \;\; \Im k \geq 0
\end{equation}
where $\vec q = (q_x,q_y,q_z)$,
$q_z=(k^2-q_x^2-q_y^2)^{1/2}$ and evaluated such that
$\Im q_z \geq 0$. If $z < 0$ we use $\mathrm{u}_{n}(k\rv) = (-1)^{\ell}\mathrm{u}_{n}(-k\rv)$ and then apply the above.

Substituting the above representation into $L_{n}(z)$ leads to
\begin{equation}
L_{n}(z) = \begin{cases}
    \displaystyle
   \frac{1}{2 \pi \iu^{\ell}} \int_{\R^2} \left( \int_{\R^2}\iu({k_p}_z-q_z)
    \mathrm{Y}_{n}(\unitvec{q})
    \eu^{\iu (\vec k_p + \vec q) \cdot \rv}
    \frac{\diff q_x\diff q_y}{kq_z}\right) \diff x\diff y, & z>0,
    \\
    \displaystyle
    \frac{(-1)^{\ell}}{2 \pi \iu^{\ell}} \int_{\R^2} \left( \int_{\R^2}\iu({k_p}_z + q_z)\mathrm{Y}_{n}(\unitvec{q})
    \eu^{\iu (\vec k_p - \vec q) \cdot \rv}
    \frac{\diff q_x\diff q_y}{k q_z}\right) \diff x\diff y, & z < 0.
\end{cases}
\end{equation}
Changing the order of integration, then integrating in $x$ and $y$, leads to
\begin{equation}
L_{n}(z) = \begin{cases}
    \displaystyle
   \frac{2 \pi}{ \iu^{\ell}}  \int_{\R^2}\iu({k_p}_z-q_z)\mathrm{Y}_{n}(\unitvec{q}) \delta({k_p}_x + q_x) \delta({k_p}_y + q_y)
    \eu^{\iu ({k_p}_z + q_z) z}
    \frac{\diff q_x\diff q_y}{kq_z}, & z>0,
    \\
    \displaystyle
    \frac{2 \pi}{ (-\iu)^{\ell}} \int_{\R^2}\iu({k_p}_z + q_z)\mathrm{Y}_{n}(\unitvec{q})
    \delta({k_p}_x - q_x) \delta({k_p}_y - q_y) \eu^{\iu ({k_p}_z - q_z) z}
    \frac{\diff q_x\diff q_y}{kq_z}, & z < 0,
\end{cases}
\end{equation}
then integrating in $q_x$ and $q_y$ we get
\begin{equation} \label{eqn:L-integrated}
L_{n}(z) =
\mathrm{Y}_{n}(\hat {\vec k}_p^+) \frac{2 \pi\iu}{ \iu^{\ell}kq_z}
\begin{cases}
    \displaystyle
   (-1)^{m}  ({k_p}_z-q_z)
    % \mathrm{Y}_{n}(\hat {\vec k}_p^-)
    \eu^{\iu ({k_p}_z + q_z) z} , & z>0,
    \\
    \displaystyle
    (-1)^{\ell} ({k_p}_z + q_z)
    \eu^{\iu ({k_p}_z - q_z) z} , & z < 0,
\end{cases}
\end{equation}
where $\vec k_p^+ = ({k_p}_x,{k_p}_y,q_z)$ and $q_z = (k^2 - {k_p}_x^2 - {k_p}_y^2)^{1/2}$, and we used that $\mathrm{Y}_{n}(\hat {\vec k}_p^-)  = (-1)^{m}\mathrm{Y}_{n}(\hat {\vec k}_p^+)$ where $\vec k_p^- = (-{k_p}_x,-{k_p}_y,q_z)$.

In most cases where we use plane-waves, we will assume the material occupies the region $\reg_1 = \{z>a_1: \rv \in \mathbb R^3\}$. In this case we have that ${k_p}_x = k_x$ and ${k_p}_y = k_y$, due to planar symmetry~\eqref{eqn:plane-wave-symmetry}, which implies that $\hat {\vec k}_p^+ = \unitvec k$ and $q_z = (k^2 - k_x^2 - k_y^2)^{1/2} = k_z$. Substituting these results in \eqref{eqn:L-integrated} then leads to
\begin{equation} \label{eqn:L-integrated-symmetry}
L_{n}(z) =
\mathrm{Y}_{n}(\hat {\vec k}) \frac{2 \pi\iu}{ \iu^{\ell}k k_z}
\begin{cases}
    \displaystyle
   (-1)^{m}  ({k_p}_z- k_z)
    % \mathrm{Y}_{n}(\hat {\vec k}_p^-)
    \eu^{\iu ({k_p}_z + k_z) z}, & z>0,
    \\
    \displaystyle
    (-1)^{\ell} ({k_p}_z + k_z)
    \eu^{\iu ({k_p}_z - k_z) z}, & z < 0,
\end{cases}
\end{equation}
where $k_{p_z}$ is the $z$ component of $\vec k_p$. The case $z <0$ gives the same result obtained in~\cite[Equation B.5]{linton_multiple_2006}.

% Yes I suppose so! I am happy to submit. You okay with the new paragraph in the discussion?
% Yes!!
% There are a few errors/warnings listed in the log. Shall we bother?
% Naaaaa ahaha, maybe we just uncheck this draft thing and see what happens?
% Good idea!
% Am checking it out now.
% Okay all good, am going to submit to arXiv!
% Super!
\referencelist

% \printbibliography

% \newpage
% \section*{Changes and comments made in the text}
%  \listoftodos
%  \listofchanges

\end{document}